\pgfplotsset{compat=1.17}  
\newcommand{\E}{\mathbb{E}}
\newtheorem*{theorem*}{Theorem}
\newtheorem{theorem}{Theorem}[section]
\newtheorem{lemma}[theorem]{Lemma}
\newtheorem{assumption}[theorem]{Assumption}
\newtheorem{corollary}[theorem]{Corollary}
\newtheorem{example}[theorem]{Example}
\newtheorem{proposition}[theorem]{Proposition}
\newtheorem{definition}[theorem]{Definition}
\DeclareMathOperator*{\argmax}{arg\,max}
\newcommand\privacyone{$\epsilon$-differential privacy\xspace}
\newcommand\privacytwo{$(\epsilon,\delta)$-differential privacy\xspace}
\newcommand\privacythree{$(\alpha,\epsilon)$-Rényi differential privacy\xspace}
\newcommand\renyi{Rényi\xspace}
\newcommand\benefit{``the sender is willing to share information''\xspace}
\newcommand\varv{V}
\newcommand\hullvarv{\overline{V}}
\newcommand\varvtwo{F}
\newcommand\hullvarvtwo{\overline{F}}
\newcommand\maxutility{\texttt{OPT}_{v,u,\mu}}
\newcommand\adjacent{X}
\newcommand\pset{M}
\newcommand\uf{\Gamma}
\title{Differentially Private Bayesian Persuasion}
\author{
Yuqi Pan\thanks{Peking University. Email: \texttt{pyq0419@stu.pku.edu.cn}}
\and
Zhiwei Steven Wu\thanks{Carnegie Mellon University. Email: \texttt{zstevenwu@cmu.edu}}
\and
Haifeng Xu\thanks{University of Chicago. Email: \texttt{haifengxu@uchicago.edu}}
\and
Shuran Zheng\thanks{Tsinghua University. Email: \texttt{shuranzheng@mail.tsinghua.edu.cn}}
}
\begin{document}

\maketitle
\begin{abstract}
The tension between persuasion and privacy preservation is common in real-world settings.
Online platforms should protect the privacy of web users whose data they collect, even as they seek to disclose information about these data to selling advertising spaces. Similarly, hospitals may share patient data to attract research investments with the obligation to preserve patients' privacy. To deal with these issues, we develop a framework to study Bayesian persuasion under differential privacy constraints, where the sender must design an optimal signaling scheme for persuasion while guaranteeing the privacy of each agent's private information in the database. 
To understand how privacy constraints affect information disclosure, we explore two perspectives within Bayesian persuasion: one views the mechanism as releasing a posterior about the private data, while the other views it as sending an action recommendation. 

The posterior-based formulation helps consider privacy-utility tradeoffs, quantifying how the tightness of privacy constraints impacts the sender's optimal utility. For any instance in a common utility function family and a wide range of privacy levels, a significant constant gap in the sender's optimal utility
can be found between any two of the three conditions: \privacyone constraint, relaxation \privacytwo constraint, and no privacy constraint. 
We further geometrically characterize optimal signaling schemes under different types of constraints (\privacyone, \privacytwo and \renyi differential privacy), all of which can be seen as finding concave hulls in constrained posterior regions.
Meanwhile, by taking the action-based view of persuasion, we provide polynomial-time algorithms for computing optimal differentially private signaling schemes, as long as a mild homogeneous condition is met.

\end{abstract}
\section{Introduction}
Online advertising platforms frequently hold auctions for advertising space and provide advertisers with user data to inform bidding decisions. However, as platforms prioritize their own profits over advertisers’, they must carefully design data publication mechanisms that incentivize advertisers to purchase ad space regardless of advertisers' profits.

This is a typical real-world situation of Bayesian persuasion, a model formalized by \citet{kamenica2011bayesian} where one party called \emph{receiver} relies on information owned by another party called \emph{sender} for decision-making.
The informed sender can strategically design and commit an information-releasing mechanism, which is also called a \emph{signaling scheme}, to persuade the receiver to act in the sender's interest. 

Yet in releasing information to persuade advertisers, platforms may disclose users’ private data, including age, gender, location, and interests. Thus, platforms must balance two competing goals: releasing helpful data to persuade purchase and preserving user privacy.
Similar persuasion-privacy tensions arise in other examples. Hospitals may share patient data to attract research investments while preserving patients' privacy. Communities may publicize aggregated health statistics to encourage healthy habits without releasing details of specific residents. 


In this paper, we provide a framework to study Bayesian persuasion under privacy constraints. To discuss data publication mechanism design, we model information owned by the sender, a.k.a. \emph{state of nature}, as a database of binary trait responses from several agents. For privacy constraints, we adopt differential privacy, which is a standard privacy notion for information release. In general, it bounds changes in output distribution induced by small input alterations, such as changing a single user's data. Therefore, the signaling scheme committed by the sender should provide differential privacy guarantees for each agent's data in the state. 


Under our differentially private Bayesian persuasion framework, we will illustrate what the optimal signaling scheme looks like under different types of differential privacy constraints, discuss how the tightness of privacy constraints influences the sender's optimal utility, and show how to efficiently compute the optimal scheme. The problem can be modeled as two equivalent forms of programming: one views it as the sender sending an action recommendation to the receiver and another as the sender releasing a posterior belief to transform the receiver's prior belief about an unknown state. By toggling between these two programs, we geometrically characterize and efficiently compute the optimal differentially private signaling scheme for persuasion.

\subsection{Outline of Main Results}
In general, we consider the setting of one sender and multiple receivers. With a mild assumption on the separability of the sender's utility function, the problem can be reduced to a single receiver case. Therefore, We primarily focus on the classic Bayesian persuasion setting proposed by \citet{kamenica2011bayesian} featuring one sender and one receiver, while we also provide some results for a more general multi-receiver setting.

For privacy constraints, we mainly consider the constraints of pure differential privacy (or \privacyone) and approximate differential privacy (or \privacytwo) \cite{dwork2006calibrating, dwork2006our}, but also extend results to \renyi differential privacy \cite{mironov2017renyi}. 
We model the state in the Bayesian persuasion model as the dataset of several agents, each with a sensitive binary type. Neighboring datasets differ by only one agent's type, with all other agents' types remaining the same. The signaling scheme designed by the sender should guarantee the privacy of each agent by preserving differential privacy between all neighboring datasets.

We begin by considering single-agent cases where both the state and action spaces are binary. Using posterior-based programming, the optimal signaling scheme is straightforward to compute by restricting the region where the posterior satisfies the privacy constraints. 
However, feasible regions diverge significantly between pure and approximate differential privacy.

Our first main result analyzes the inherent tradeoff between privacy and utility, quantifying how increasingly stringent privacy guarantees impact the optimal utility achievable by the sender. For any instance in a class of utility functions with diverse application scenarios such as the advertising example introduced above, we show a significant constant utility gap for the sender between pure and approximate differential privacy across a wide range of privacy parameters (\Cref{thm:gap}). Furthermore, a similar gap persists between differential privacy and no privacy constraint.
Our results suggest that even slight relaxations from pure to approximate differential privacy significantly widen the feasible region and achievable payoffs. This sensitivity suggests the sender could pursue looser privacy constraints to enable substantially more utility. 

We then consider more general cases with multiple agents where the state is an $n$-dimensional binary vector. Under \privacyone, the sender faces a standard Bayesian persuasion problem. The posterior-based programming suggests that the sender should choose a distribution over posterior beliefs that the expectation matches the prior belief, to maximize the expectation of a special objective function parameterized by posteriors. Also, all posteriors in support of distribution should satisfy a set of linear privacy constraints. The characterization from \citet{10.1145/3490486.3538302} extends naturally, which is to find the concave hull of the objective function over the posterior space, restricted to a certain region, in order to determine the optimal signaling scheme.

However, when $\delta>0$, the situation becomes much more complex. The main issue is that feasible regions now depend jointly across posteriors rather than admitting fixed constraints. Our second main result (\Cref{thm:gene-char}) relates privacy constraints to ex ante constraints introduced in \cite{babichenko2021bayesian}.
By including the privacy constraint dimensions and expanding the posterior space, we can characterize the optimal signaling scheme with the concave hull in certain regions. Also, \renyi differential privacy permits analogous characterization.

Our third main result turns to algorithmic aspects of Bayesian persuasion, exploring the more general multi-receiver setting. One potential concern with the geometric characterization is the exponential growth of the state space with the number of agents, leading to exponentially more variables in the programming and making it hard to find an efficient algorithm. However, real-world persuasion usually depends only on population statistics, not specific data points.  \citet{10.1145/3490486.3538302} introduce a simplified oblivious signaling family where payoffs rely solely on statistic values rather than detailed data sources. Under natural assumptions, oblivious schemes optimally persuade compared with all possible schemes, collapsing the state space polynomially. We show this reduction result also holds in our Bayesian persuasion model. 

However, to establish our third main result (\Cref{computational-many}), another concern is the exponential action space (namely signal space) in the number of receivers, leading to exponentially many variables in the action-based programming. With mild assumption, we find an efficient separation oracle for its dual program, and then oblivious signaling provides a computationally tractable approach to multi-receiver differential privacy persuasion.

\subsection{Related Work}
First introduced by \citet{kamenica2011bayesian}, the study of Bayesian persuasion has been driven by many real-world applications including criminal justice \cite{kamenica2011bayesian}, security \cite{rabinovich2015information, xu2016signaling}, rooting \cite{bhaskar2016hardness}, recommendation systems \cite{mansour2022bayesian}, auctions \cite{bro2012send, emek2014signaling}, voting \cite{alonso2016persuading, cheng2015mixture} and queuing \cite{lingenbrink2019optimal}. Our work is motivated by online advertising auctions, where the platform wants to persuade advertisers to buy advertising spaces. In these applications and others, more factors need to be considered, leading to many variants of the original model, such as limited communication capacity \cite{gradwohl2022algorithms, le2019persuasion}, ordered state and action spaces \cite{mensch2021monotone}, non-linear preference \cite{kolotilin2023persuasion}, the need of purchasing and selling information for sender \cite{bergemann2019markets, bergemann2018design, dworczak2019simple}, lack of knowledge about receivers \cite{babichenko2022regret, castiglioni2020online, dworczak2022preparing} and so on (see surveys \cite{bergemann2019information, dughmi2017algorithmic, kamenica2019bayesian}).

The sender's problem of optimally releasing information about the state to maximize her payoff can be formulated in three main ways. The first models this as a choice of posterior distributions that are convex combinations of the prior \cite{babichenko2021bayesian, doval2023constrained, dworczak2019simple, kamenica2011bayesian}. The second assumes the sender recommends an action as a function of the underlying state, which the receiver should find in their interest to follow \cite{ dughmi2016algorithmic, dughmi2019persuasion, galperti2018dual,  kolotilin2018optimal, salamanca2021value, toikka2022bayesian}.  The third does not explicitly identify the optimal signal structure but rather characterizes the sender's optimal expected payoff via a concave envelope \cite{lipnowski2018disclosure, https://doi.org/10.3982/ECTA15674}. Our work employs both the first formulation, to help characterize optimal signaling schemes, as well as the second, to enable efficient computation of optimal schemes. Following \citet{dughmi2016algorithmic}, there are also many works considering algorithmic aspects of Bayesian persuasion, studying in the algorithmic game theory and theory of computation literature \cite{arieli2019private, babichenko2017algorithmic, castiglioni2020online, castiglioni2021multi, cummings2020algorithmic, dughmi2017algorithmic2,  gradwohl2022algorithms, hoefer2020algorithmic,xu2020tractability}.

The most closely related work is \citet{10.1145/3490486.3538302}, which considers differential privacy for information design problems. The key distinction from our setting is their focus on maximizing the receiver's utility. In other words, the sender's and receiver's utilities align, so the sender chooses a privacy-preserving mechanism to maximize value for end users. In contrast, our setting assumes misaligned sender and receiver utilities, which combined with the privacy constraints, leads to a more complex optimal mechanism. We also discuss different types of differential privacy and their influence on optimal signaling, while \citet{10.1145/3490486.3538302} only considers the standard \privacyone. 

Another relevant thread studies differential privacy, initiated by \citet{dwork2006calibrating}. \citet{dwork2006our} proposed the widely-used relaxation \privacytwo, which provides comparable privacy protection as pure \privacyone \cite{kasiviswanathan2014semantics} but enables substantially more useful analyses. \citet{dwork2016concentrated} put forth concentrated differential privacy, while \citet{bun2016concentrated} introduce zero-concentrated differential privacy using \renyi divergence to capture privacy requirements \cite{renyi1961measures}. Building on this, \cite{mironov2017renyi} proposes \renyi differential privacy, closely related to zero-concentrated differential privacy but focusing on single moments of the privacy variable.

\section{Model and Preliminaries}
\subsection{Basic Setting} 
We consider a standard information disclosure setup as widely studied in the differential privacy literature. A database contains $n$ agents, each with a sensitive binary type $\theta_i \in \{ 0, 1 \}$ (e.g., the agent is a targeted user of the advertiser or not).  
Let $\theta=\left(\theta_1, \ldots, \theta_n\right)$ denote the profile containing all agents' types. We also refer to $\theta$ as the \emph{state} of the database. Let $\Theta  \subseteq \{ 0, 1\}^n$ contain all possible states and   $\mu \in \Delta(\Theta)$ denote a common prior belief over the states. Notably, $\mu$ may have a support that is exponential in $n$. 

A \emph{sender} looks to release information about the state $\theta$ by designing a  \emph{signaling scheme}. One or multiple \emph{receivers} will use the released information for their own decision-making. Use $t$ to denote the number of receivers. Formally, 
before observing the state, the sender designs a signaling scheme $(S, \pi)$, sending separate signals $s_1,\cdots,s_t$ to each receiver,
where $S$ is some countable set of \emph{signals} and $\pi: \Theta \rightarrow \Delta(S^t)$ is a signaling scheme which maps each state to a distribution over signals in $S$, with $\pi(s \mid \theta)$ representing the probability of sending signal $s=(s_1,\cdots,s_t)$ when the state is $\theta$. 
Similar to $\mu$, the information releasing mechanism $\pi$ is also assumed to be publicly known. 

After observing the signal $s^\star$, receiver $i$ updates its belief to the posterior $q_{s^\star}(\theta^\star)=\sum_{s:s_i=s^\star}\mu(\theta^\star)\pi(s\mid\theta^\star)/\sum_{\theta\in\Theta}\sum_{s:s_i=s^\star}\mu(\theta)\pi(s\mid\theta)$, which represents the probability that the true state is $\theta^\star$ conditioned on observing signal $s^\star$. 
The payoff of receiver $i$ can be represented as a function $u_i: A\times \Theta\rightarrow \mathbb{R}$, depending on his action and the underlying state.
Receiver $i$ then selects an action $a_i$ that maximizes its expected utility from the finite set $A$ based on this posterior belief. Mathematically, $a_i=\argmax_{a\in A}\mathbb{E}_{\theta\sim q_{s^\star}}[u_i(a,\theta)]$.
The sender has a different payoff function  $v: A^t\times \Theta\rightarrow \mathbb{R}$, depending on each receiver's action and the underlying state. This gives rise to the sender's strategic information releasing (also known as persuasion \cite{kamenica2019bayesian}) due to misaligned incentives. We use $u_1(a_1,\theta),\cdots,u_t(a_t,\theta), v(a_1,a_2,\cdots,a_t,\theta)$ to denote the receivers' and sender's payoffs, respectively.

For the main part, we also assume the sender's payoff can be expressed as the sum of the payoff obtained from each receiver individually. Mathematically, $v(a_1,a_2,\cdots,a_t,\theta)=\sum_i v_i(a_i,\theta)$. Therefore, the problem can be easily reduced to the single receiver setting by designing the signaling scheme independently for each receiver. For the simplicity of presentation, we consider the single receiver in the following and come back to general cases in \Cref{sec-computational}.

\subsection{Privacy Constraints to Information Disclosure} 
When releasing information to the receiver, the sender must guarantee the privacy of each agent hence cannot release too much information about each agent's type $\theta_i$. To quantify the privacy loss, we adopt standard privacy definitions from the literature of differential privacy.

The definitions of differential privacy are based on the notion of adjacent state pairs.
Specifically, we say $\theta,\theta^\prime$ adjacent if and only if they have only one different bit, that is $(\theta,\theta^\prime)\in \adjacent\Leftrightarrow\exists i, \theta_i\neq \theta^\prime_i,\theta_{-i}=\theta^{\prime}_{-i}$.
$\adjacent$ is the set of all adjacent pairs here.

We also investigate scenarios where privacy protection is not required for every bit, a setting referred to as \emph{partial privacy} in this paper\footnote{The range of privacy guarantees varies in real-world examples. In advertising auctions, for instance, some users may consent to data sharing, obviating platform privacy obligations. Databases may also contain non-sensitive attributes like the size, location, and stability of ad spaces, and only user traits require privacy protection.}.
We define a set  $\pset \subseteq [n]$
to represent all bits in the state that require privacy protection and adjust the definition of adjacent pairs to $(\theta,\theta^\prime)\in \adjacent_\pset\Leftrightarrow\exists i\in\pset, \theta_i\neq \theta^\prime_i,\theta_{-i}=\theta^{\prime}_{-i}$. $\adjacent_\pset$ is the set of all adjacent pairs here.
For simplicity of presentation, we still assume that the sender needs to protect privacy for all bits in the rest of the paper, i.e. $\pset = [n]$, and all of our results can be easily extended to the partial privacy setting.


Our main focus will be \privacytwo as well as an important special case of it simply by setting  $\delta=0$, i.e., the \privacyone.

\begin{definition}[\privacytwo\cite{dwork2006our}]
    A signaling scheme $(S, \pi)$ preserves \privacytwo if for all subsets of signals $W \subset S$ and all $(\theta, \theta^{\prime}) \in \adjacent$, we have\footnote{For partial privacy, we only need to substitute $\adjacent_\pset$ for $\adjacent$. This also holds for other similar equations in the following. }
\[
\sum_{s\in W}\pi(s| \theta) \leq e^{\varepsilon} \sum_{s\in W}\pi(s| \theta^\prime)+\delta.
\]
Specifically, when $\delta=0$, it leads to a special case \privacyone that 
\[
\pi(s|\theta)\leq e^\epsilon \pi(s|\theta^\prime),\text{ for any }s\in S\text{ and }(\theta,\theta^\prime)\in\adjacent.
\]
\end{definition}
For clarity, we always assume $\delta>0$ when using \privacytwo and the special case of $\delta=0$ is denoted as \privacyone.

Besides the most common definition of \privacytwo, we also extend our results to \renyi differential privacy, which proves useful for theoretical analysis across a broad spectrum of problems. The definitions and results are delayed to \Cref{appendix-renyi-characterization}.

\subsection{Sender's Objective}
When choosing the signaling scheme under \privacytwo, the sender maximizes the expected value of his payoff. Formally, the sender solves the following: 
\begin{align}\label{program-distribution}
&\max_{(S,\pi)} \quad \sum_{\theta}\sum_{s} \pi(s|\theta)\mu(\theta)v(a_s,\theta) \notag\\
& \begin{array}{r@{\quad}l@{}l@{\quad}l}
s.t.& \sum_{\theta} u(a_s,\theta)\pi(s|\theta)\mu(\theta)\geq \sum_{\theta} u(a^\prime,\theta)\pi(s|\theta)\mu(\theta), & \text{ for all }s\in S,a^\prime\in A\\
&\sum_{s\in W}\pi(s|\theta)\leq e^\epsilon \sum_{s\in W}\pi(s|\theta^\prime)+\delta, & \text{ for all }W\subset S\text{ and }(\theta,\theta^\prime)\in \adjacent\\
&\sum_s \pi(s|\theta)=1, & \text{ for all }\theta\in \Theta\\
&\pi(s|\theta)\geq 0, & \text{ for all }\theta\in \Theta, s\in S
\end{array} .
\end{align}

Here the objective function is the expected utility of the sender under the signaling scheme $(S,\pi)$. The first constraint line shows that the expected utility of the receiver taking action $a_s$, upon observing signal $s$, must be greater than all other alternative actions. The third and fourth constraint guarantees the feasibility of the signaling scheme.

Note that Program \eqref{program-distribution} is almost the standard LP formulation for optimal Bayesian persuasion (see, e.g., \cite{dughmi2016algorithmic}). The only new component is the second constraint, which guarantees \privacytwo. However, this difference is non-trivial. Indeed, the number of inequalities expressed in the second constraint is exponential in the number of signals since the DP constraint has to hold for every subset of signals $W$.

For the special case \privacyone, privacy constraints can be written as 
\[
\pi(s|\theta)\leq e^\epsilon \pi(s|\theta^\prime),  \text{ for all }s\in S\text{ and }(\theta,\theta^\prime)\in \adjacent.
\]


\paragraph{Alternative formulation in the posterior space} It turns out that Program \eqref{program-distribution} 
can be reformulated in the form of posteriors.  Let the receiver's posterior after seeing signal s be $q_s$. The receiver will choose action $a^\star(q_s)=\argmax_{a\in A}\mathbb{E}_{\theta\sim q_s}[u(a,\theta)]$. The sender's utility can be written as $\varv(q_s)=\mathbb{E}_{\theta\sim q_s}[v(a^\star(q_s),\theta)]$. 
The problem can then be seen as the sender choosing a distribution $\tau$ over posteriors that the expectation matches the prior, and maximizing the expectation of $\varv$ \cite{kamenica2011bayesian}, while also adhering to specified privacy constraints.
Mathematically, the reorganized program is
\begin{align}\label{program-posterior}
&\max_{\tau} \quad \mathbb{E}_{q_s\sim \tau}[\varv(q_s)] \notag\\
& \begin{array}{r@{\quad}l@{}l@{\quad}l}
s.t.& \mathbb{E}_{q_s\sim \tau}[q_s(\theta)]=\mu(\theta), &\text{ for all }\theta\in \Theta\\
&\sum_{q_s\in Q}\frac{q_s(\theta)\tau(q_s)}{\mu(\theta)}\leq e^\epsilon\sum_{q_s\in Q}\frac{q_s(\theta^\prime)\tau(q_s)}{\mu(\theta)}+\delta, &\text{ for all }Q\subset supp(\tau) \text{ and }(\theta,\theta^\prime)\in \adjacent 
\end{array} .
\end{align}

Also, for the special case \privacyone, privacy constraints can be written as
\[
\frac{q_s(\theta)\mu(\theta^\prime)}{q_s(\theta^\prime)\mu(\theta)}\leq e^\epsilon,  \text{ for all }q_s\in supp(\tau)\text{ and }(\theta,\theta^\prime)\in\adjacent.
\]


In the following, we will selectively use Program \eqref{program-distribution} and Program \eqref{program-posterior} as needed based on the results.

\section{Privacy-Utility Tradeoffs}\label{sec:single-agent}

Under the imposed privacy constraints, the sender's expected utility will be restricted as the information can no longer be disclosed freely. 
In this section, we investigate how different privacy constraints, including \privacyone, \privacytwo, or no privacy, impact the sender's optimal signaling scheme and expected utility. 

We start with a simple case where there is only one agent with a binary type $\theta\in\{0,1\}$ and the action space is binary. The prior $\mu$ represents the probability that the agent's type is $1$. 
For a large family of utility functions satisfying modest natural assumptions stated later, we show significant utility gaps exist between any two of three types of constraints (\Cref{thm:gap}, \Cref{gap:ratio}, \Cref{gap:noone}, \Cref{gap:notwo}). 
Such utility gaps are caused by essential difference in the geometry of the feasible regions under \privacyone versus \privacytwo(\Cref{biprivacyonetwo}).


All missing proofs in this section can be found in \Cref{proof-sec-single}. 



\subsection{Warm-up: Characterization for the Single Agent Case}
We first show that using two signals is enough to give an optimal signaling scheme in the single-agent case.
For the special case \privacyone, this means the two posteriors cannot be too far away from the prior. However, under \privacytwo, the feasible region of posteriors becomes more complex. Specifically, the feasible region for one posterior probability depends on the realization of the other posterior probability. To demonstrate this distinction, we depict the feasible regions under $0.2$-differential privacy and $(0.2,0.01)$-differential privacy in \Cref{graph-area}.





\begin{proposition}\label{biprivacyonetwo}
Consider any signaling scheme for the single-agent situation. Then we have
\begin{itemize}
    \item There always exists an optimal signaling scheme that uses at most two signals.
    \item  The scheme preserves \privacyone if and only if the posterior $q_{s}$ of any induced signal $s$ is in the  interval $[\mu/(e^\epsilon-e^\epsilon\mu+\mu),\mu/(e^{-\epsilon}-e^{-\epsilon}\mu+\mu)]$.
    \item The scheme preserves \privacytwo if and only if the two posteriors $q_{s_1}\leq q_{s_2}$ satisfy the following inequalities:
    \begin{gather*}
        0\leq q_{s_1}\leq \mu\leq q_{s_2}\leq 1,\\
q_{s_1}\left(C_1 q_{s_2}-C_2\right) \geq C_3 q_{s_2}-\mu^2,\\
q_{s_1}(C_4 q_{s_2}-C_5) \geq C_6 q_{s_2}-e^\epsilon\mu^2,
    \end{gather*}
    where $C_1=e^\epsilon-\mu e^\epsilon+\mu$, $C_2=\mu(e^\epsilon-\mu e^\epsilon+\mu)+\delta\mu(1-\mu)$, $C_3=\mu-\delta\mu(1-\mu)$, $C_4=1-\mu+e^\epsilon\mu$, $C_5=e^\epsilon\mu+\delta\mu(1-\mu)$, $C_6=\mu(1-\mu+e^\epsilon\mu)-\delta\mu(1-\mu)$.
\end{itemize}
\end{proposition}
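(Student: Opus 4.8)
The plan is to prove the three items in turn, working in the posterior-space program~\eqref{program-posterior} and identifying a posterior with the probability $q\in[0,1]$ it assigns to $\theta=1$; throughout we may assume $0<\mu<1$, as $\mu\in\{0,1\}$ is trivial.

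\emph{Two signals suffice.} Since $|A|=2$, the receiver's best response splits $[0,1]$ into two sub-intervals $I_0,I_1$ — the posteriors at which $a_0$, resp.\ $a_1$, is optimal — each of which is convex. Given an arbitrary scheme, merge all signals inducing $a_0$ into one signal and all signals inducing $a_1$ into one signal. The posterior of a merged group is a convex combination of that group's posteriors, so it remains in the same $I_j$ and induces the same action; on $I_j$ the sender's value $\varv(q)=q\,v(a_j,1)+(1-q)\,v(a_j,0)$ is affine in $q$, so the merge leaves the sender's expected payoff unchanged; and the merged two-signal scheme's only non-trivial privacy constraints — those for $W=\{s_0\}$ and $W=\{s_1\}$ in both adjacency directions — are exactly the original scheme's constraints for $W$ equal to the two merged groups, so \privacyone, \privacytwo, and the no-privacy case are all preserved. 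Hence restricting to two-signal schemes is without loss, and an optimal one exists by a routine compactness argument. (Ties in the receiver's best response are broken in favor of the sender, as usual.)

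\emph{The \privacyone region.} For a single agent the only adjacent pair is $(0,1)$, so pure DP for an induced signal $s$ is the pair of inequalities $\pi(s\mid1)\le e^\epsilon\pi(s\mid0)$ and $\pi(s\mid0)\le e^\epsilon\pi(s\mid1)$; equivalently the likelihood ratio $r_s:=\pi(s\mid1)/\pi(s\mid0)$ lies in $[e^{-\epsilon},e^\epsilon]$, which already forces both conditional probabilities to be positive for an induced signal. By Bayes' rule $q_s=\mu r_s/(\mu r_s+1-\mu)$ is strictly increasing in $r_s$, and substituting $r_s=e^{-\epsilon}$ and $r_s=e^{\epsilon}$ yields exactly the two claimed endpoints; conversely a posterior in that interval comes from some $r_s\in[e^{-\epsilon},e^\epsilon]$ (and non-induced signals satisfy pure DP trivially). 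This is the stated equivalence.

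\emph{The \privacytwo region.} By the first item, consider a two-signal scheme, labelled so that $q_{s_1}\le q_{s_2}$, and write $\tau_j=\Pr[s_j]$. Bayes-plausibility $q_{s_1}\tau_1+q_{s_2}\tau_2=\mu$ together with $\tau_1+\tau_2=1$ forces $0\le q_{s_1}\le\mu\le q_{s_2}\le1$ and, when $q_{s_1}<q_{s_2}$, determines $\tau_1=\frac{q_{s_2}-\mu}{q_{s_2}-q_{s_1}}$ and $\tau_2=\frac{\mu-q_{s_1}}{q_{s_2}-q_{s_1}}$; the degenerate case $q_{s_1}=q_{s_2}=\mu$ is the uninformative scheme, which preserves every privacy notion and (as one checks directly) makes both displayed inequalities hold with equality, i.e.\ lies on the boundary of the claimed region. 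Using $\pi(s_j\mid1)=q_{s_j}\tau_j/\mu$ and $\pi(s_j\mid0)=(1-q_{s_j})\tau_j/(1-\mu)$, the \privacytwo constraints over $W\subseteq\{s_1,s_2\}$ collapse to four non-trivial inequalities, one per signal and adjacency direction. Two hold automatically: $q_{s_1}\le\mu$ gives $\pi(s_1\mid1)\le\pi(s_1\mid0)$, and $q_{s_2}\ge\mu$ gives $\pi(s_2\mid0)\le\pi(s_2\mid1)$. For the two that remain — $\pi(s_1\mid0)\le e^\epsilon\pi(s_1\mid1)+\delta$ and $\pi(s_2\mid1)\le e^\epsilon\pi(s_2\mid0)+\delta$ — I substitute the expressions above, clear the positive factors $q_{s_2}-q_{s_1}$ and $\mu(1-\mu)$, and expand; regrouping the terms produces precisely the two displayed quadratics with the stated $C_1,\dots,C_6$. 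The converse is the same computation in reverse, using that the two-signal scheme reconstructed from $(q_{s_1},q_{s_2})$ is automatically feasible (non-negativity is clear, and $\sum_j\pi(s_j\mid\theta)=1$ for each $\theta$ follows from Bayes-plausibility). The main obstacle is this final expansion: it is purely mechanical but sensitive to signs and to the placement of the $\delta$-terms, so the real work is checking that the coefficients collapse exactly into $C_1,\dots,C_6$; the only other delicate point is the sender-favorable tie-breaking invoked in the first item.
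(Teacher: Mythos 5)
Your proof is correct and follows essentially the same path as the paper: reduce to two signals by merging signals that induce the same receiver action, then translate the DP constraints into posterior space via Bayes' rule and expand (I checked that your expansion of the two non-trivial $(\epsilon,\delta)$ inequalities does indeed collapse to the stated $C_1,\dots,C_6$). The only stylistic divergence is in justifying that merging preserves $(\epsilon,\delta)$-DP — the paper's general \Cref{num-signal-action} argues via superadditivity of $\max\{0,\pi(s\mid\theta)-e^\epsilon\pi(s\mid\theta')\}$ over signals, while you observe directly that each singleton constraint of the merged two-signal scheme coincides with the original constraint taken over the merged group; both are sound, and yours is a touch cleaner in the binary-action case.
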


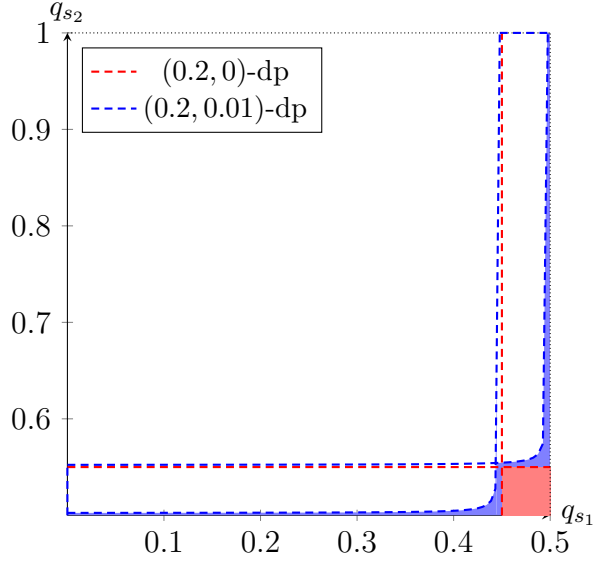
\begin{figure}
    \centering
   \begin{tikzpicture}
\begin{axis}[
    width=8cm,
    height=8cm,
    xlabel={$q_{s_1}$},   
    ylabel={$q_{s_2}$},   
    axis lines=middle,  
    xmin=0, xmax=0.5,   
    ymin=0.5, ymax=1,   
    legend pos=north west,  
    legend style={font=\small},
    xlabel style={
        at={(rel axis cs:1,0)},  
        anchor=west,  
        font=\small,  
    },
    ylabel style={
        at={(rel axis cs:0,1)},  
        anchor=south,  
        font=\small,  
    },
    samples=100,  
    clip=false,  
]
 \addplot[red, densely dashed, name path=line1,domain=0:0.5,samples=100,thick]{0.55};
\addlegendentry{$(0.2,0)$-dp}
\addplot[blue,densely dashed, name path=curve1, domain=0:0.4478, samples=100, thick] {min((0.55785*x-0.25)/(1.1107*x-0.4975), 1)};
\addlegendentry{$(0.2,0.01)$-dp}
    \draw[densely dotted] (axis cs:0.5, 0.5) -- (axis cs:0.5, 1);
    \draw[densely dotted] (axis cs:0, 1) -- (axis cs:0.5, 1);
    \draw[red,densely dashed,thick] (axis cs:0.45, 0.5) -- (axis cs:0.45, 1);
     \draw[blue,densely dashed,thick] (axis cs:0, 0.5) -- (axis cs:0, 0.55232);

     \addplot[blue,densely dashed, name path=curve3, domain=0.4478:0.5, samples=100, thick] {0*x+1};
    \addplot[blue, densely dashed,name path=curve2, domain=0:0.4975, samples=100, thick] {min((0.6132*x-0.30535)/(1.1107*x-0.55285), 1)};
    \addplot[draw=none,name path=xaxis,domain=0:0.5]{0.5};
    \addplot[draw=none,name path=line2,domain=0:0.5]{1};
    \addplot[blue!50] fill between[of=curve1 and xaxis, soft clip={domain=0:0.4445}];
    \addplot[blue!50] fill between[of=curve2 and xaxis, soft clip={domain=0.4445:0.4975}];
     \addplot[blue!50] fill between[of=line2 and xaxis, soft clip={domain=0.497:0.5}];
 \addplot[red!50] fill between[of=line1 and xaxis, soft clip={domain=0.45:0.5}];
\end{axis}
\end{tikzpicture}
    \caption{Feasible regions of posteriors when $\mu=0.5$. The pink region represents possible posteriors $(q_{s_1},q_{s_2})$ under $0.2$-differential privacy, which forms a square shape. The total area of pink and blue regions represents possible posteriors under $(0.2,0.01)$-differential privacy. }
    \label{graph-area}
\end{figure}


\Cref{biprivacyonetwo} shows the feasible posterior region $C$ under different privacy constraints. We then present a proposition to show how to compute the maximum utility and the optimal signaling scheme for the sender given the feasible region.
For Program \eqref{program-posterior}, all $(q_{s_1},q_{s_2})\in C$ satisfy privacy constraints. We then only need to maximize $\mathbb{E}_{q_s\sim \tau}[\varv(q_s)]$ with the constraint $\mathbb{E}_{q_s\sim \tau}[q_s]=\mu$.
\begin{proposition}\label{biutility}
    Given the feasible posterior region $C$ for $(q_{s_1},q_{s_2})$, the maximum utility for sender is 
    \[
    \max_{(q_{s_1},q_{s_2})\in C}\frac{(q_{s_2}-\mu)\varv(q_{s_1})+(\mu-q_{s_1})\varv(q_{s_2})}{q_{s_2}-q_{s_1}}
    \]
    and the optimal signaling scheme is inducing $q_{s_1},q_{s_2}$ which makes the above equation achieve maximum value.
\end{proposition}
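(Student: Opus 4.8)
The plan is to combine the two-signal sufficiency from \Cref{biprivacyonetwo} with the standard Bayesian-plausibility constraint and then reduce the optimization over distributions $\tau$ to an optimization over the single pair of posteriors $(q_{s_1}, q_{s_2})$. First I would invoke \Cref{biprivacyonetwo} to restrict attention, without loss of optimality, to signaling schemes using at most two signals $s_1, s_2$ with induced posteriors $q_{s_1} \le q_{s_2}$ and with probabilities $\tau(q_{s_1}) = p$, $\tau(q_{s_2}) = 1-p$ for some $p \in [0,1]$. (The degenerate case where only one signal is used, i.e. $q_{s_1} = q_{s_2} = \mu$, is subsumed as a boundary case and the formula should be checked to reduce to $\varv(\mu)$ there.)

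Next I would write down the Bayesian-plausibility (martingale) constraint $\mathbb{E}_{q_s \sim \tau}[q_s] = \mu$, which in the two-signal case with scalar posteriors becomes $p\, q_{s_1} + (1-p)\, q_{s_2} = \mu$. Given any feasible pair $(q_{s_1}, q_{s_2}) \in C$ with $q_{s_1} < \mu < q_{s_2}$ (or with equality at an endpoint), this linear equation has a unique solution $p = (q_{s_2} - \mu)/(q_{s_2} - q_{s_1}) \in [0,1]$, and conversely any such $p$ is legitimate. Substituting this value of $p$ into the objective $\mathbb{E}_{q_s \sim \tau}[\varv(q_s)] = p\, \varv(q_{s_1}) + (1-p)\, \varv(q_{s_2})$ yields exactly
\[
\frac{(q_{s_2} - \mu)\varv(q_{s_1}) + (\mu - q_{s_1})\varv(q_{s_2})}{q_{s_2} - q_{s_1}},
\]
so the sender's problem becomes maximizing this expression over $(q_{s_1}, q_{s_2}) \in C$. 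This establishes that the displayed maximum is an upper bound on the sender's utility and is attained by the scheme that induces the maximizing pair (with mixing weight $p$ as above), which is precisely the claimed optimal signaling scheme.

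The main thing to be careful about — the only real obstacle — is handling the boundary/degenerate configurations cleanly: when $q_{s_1} = q_{s_2}$ the denominator vanishes, so one must either define the objective by continuity there (its value is $\varv(\mu)$, corresponding to revealing no information) or argue separately that such pairs correspond to the no-signaling scheme and are already covered; and when $q_{s_1} = \mu$ or $q_{s_2} = \mu$ the weight $p$ sits at an endpoint of $[0,1]$, which is still a valid distribution. I would also note that feasibility of $(q_{s_1}, q_{s_2}) \in C$ already forces $q_{s_1} \le \mu \le q_{s_2}$ (this is built into the region described in \Cref{biprivacyonetwo}), so the recovered $p$ automatically lies in $[0,1]$ and no extra constraint is needed. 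The rest is the routine substitution above, so I would not belabor it.
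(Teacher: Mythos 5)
Your argument is correct and is precisely the reasoning the paper leaves implicit: the paper states that missing proofs appear in Appendix C, but in fact gives no separate proof of this proposition, treating it as an immediate consequence of the two-signal reduction in \Cref{biprivacyonetwo} together with the Bayes-plausibility constraint $p\,q_{s_1}+(1-p)\,q_{s_2}=\mu$, which pins down $p=(q_{s_2}-\mu)/(q_{s_2}-q_{s_1})$ and yields the displayed objective upon substitution. Your attention to the degenerate case $q_{s_1}=q_{s_2}=\mu$ (where the expression should be read as $\varv(\mu)$ by continuity) is the only point requiring care, and you handle it appropriately.
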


\subsection{Utility Gaps}
From \Cref{biprivacyonetwo}, we observe that though \privacyone can be seen as a special version of \privacytwo when $\delta=0$, their feasible posterior regions can be fundamentally different. When $\delta>0$, it is possible to have one posterior at an extreme value like $0$ or $1$, although this would make the other posterior very close to the prior, thus limiting the weight of the extreme posterior.
Therefore, a key question is whether this difference in the shape of feasible areas can lead to a large difference in the sender's maximum utility.
We consider a large family of utility functions, exemplifying the nature of the utility functions in the advertising auction example, while also well applied to many other real-world scenarios.
A significant gap in the sender's maximum utility under \privacyone and \privacytwo could be found in this family of utility functions.



We first define the utility function family. Here we use $v$ to denote the sender's utility function and $u$ to denote the receiver's function.

\begin{definition}
    Utility functions $(v,u)\in\Gamma$ if 
    \begin{itemize}
        \item $v$ is state-independent, which means it is only decided by the receiver's action. 
        \item $v$ equals $1$ facing action $1$ and $0$ facing action $0$. 
        \item The receiver chooses action $0$ when the posterior is smaller than some value, otherwise, she chooses action $1$. Let the threshold be $t$.
    \end{itemize}
\end{definition}

Note that the first requirement actually means transparent motives \cite{https://doi.org/10.3982/ECTA15674}, which is natural in our advertising example and many other real-world scenarios.
The second one is trivial since we can easily reduce other utilities to this easy case by scaling. The third one is set to avoid that the receiver will always take one action regardless of the posterior. 

For simplicity of presentation, we define $\maxutility(\epsilon,\delta)$ as the sender's optimal utility under $(\epsilon,\delta)$-differential privacy when the sender and receiver's utility functions are $(v,u)$ and the prior is $\mu$. Then $\maxutility(\epsilon,0)$ denotes optimal utility under \privacyone. Also, $\maxutility(\infty,\infty)$ indicates the case without privacy constraints.

\paragraph{Utility gap between \privacyone and \privacytwo}
An additive utility gap can be found for any utility functions in $\uf$ between \privacyone and \privacytwo. Also, it holds between $\epsilon_1$-differential privacy and $(\epsilon_2,\delta)$-differential privacy when $\epsilon_1$ is slightly larger than $\epsilon_2$.
\begin{theorem}\label{thm:gap}
     For any $(v,u)\in\uf$, if $\epsilon_1-\epsilon_2\leq C\delta$ for some $C<1$, then there exists a $\mu$ such that $\maxutility(\epsilon_2,\delta)-\maxutility(\epsilon_1,0)\geq (e^{\epsilon_2}-1)/((1+\delta)e^{\epsilon_1+\epsilon_2}-1)$.
\end{theorem}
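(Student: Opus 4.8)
The plan is to use the two-signal characterizations from \Cref{biprivacyonetwo} and the utility formula from \Cref{biutility} to pin down both optimal utilities explicitly, then choose $\mu$ to make the gap as large as (a clean lower bound on) the maximum. Since $(v,u)\in\uf$ means $v$ is state-independent with $v(1)=1$, $v(0)=0$, and the receiver takes action $1$ iff the posterior is at least the threshold $t$, the function $\varv(q)$ is simply the indicator $\mathbbm{1}[q\ge t]$. Plugging this into \Cref{biutility}, the sender's utility for a feasible pair $(q_{s_1},q_{s_2})$ with $q_{s_1}\le\mu\le q_{s_2}$ equals $(\mu-q_{s_1})/(q_{s_2}-q_{s_1})$ whenever $q_{s_1}<t\le q_{s_2}$ (and $1$ in the degenerate case $q_{s_1}\ge t$, which forces $\mu\ge t$), so the sender wants $q_{s_1}$ as small as possible and $q_{s_2}$ as small as possible subject to $q_{s_2}\ge t$ and feasibility. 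The natural regime to target is $\mu<t$, so that without any privacy relaxation the sender is genuinely constrained.

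First I would compute $\maxutility(\epsilon_1,0)$: by the interval characterization in \Cref{biprivacyonetwo}, under \privacyone the posteriors lie in $[\mu/(e^{\epsilon_1}-e^{\epsilon_1}\mu+\mu),\,\mu/(e^{-\epsilon_1}-e^{-\epsilon_1}\mu+\mu)]$. The best the sender can do is take $q_{s_1}$ at the left endpoint and $q_{s_2}$ at the right endpoint (if the right endpoint is at least $t$; I would choose $\mu$ so that it exactly reaches $t$ or I would carry the right-endpoint value through), giving a closed-form value for $\maxutility(\epsilon_1,0)$ in terms of $\mu$ and $\epsilon_1$. Next I would compute (or lower-bound) $\maxutility(\epsilon_2,\delta)$: here the relaxation lets $q_{s_1}$ move all the way down to $0$, at the price that the second inequality $q_{s_1}(C_4q_{s_2}-C_5)\ge C_6q_{s_2}-e^{\epsilon_2}\mu^2$ with $q_{s_1}=0$ pins down $q_{s_2}$ to the specific value $q_{s_2}=e^{\epsilon_2}\mu^2/C_6 = e^{\epsilon_2}\mu^2/(\mu(1-\mu+e^{\epsilon_2}\mu)-\delta\mu(1-\mu))$; one then checks the first inequality also holds there (or this gives a feasible point, which suffices for a lower bound). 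With $q_{s_1}=0$, the sender's utility is just $\mu/q_{s_2}$, i.e. $(1-\mu+e^{\epsilon_2}\mu-\delta(1-\mu))/(e^{\epsilon_2}\mu)$, provided $q_{s_2}\ge t$, which again I would arrange by the choice of $\mu$.

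The remaining step is to choose $\mu$ to make $\maxutility(\epsilon_2,\delta)-\maxutility(\epsilon_1,0)$ at least the claimed bound $(e^{\epsilon_2}-1)/((1+\delta)e^{\epsilon_1+\epsilon_2}-1)$, using the hypothesis $\epsilon_1-\epsilon_2\le C\delta$ with $C<1$ to control error terms; a convenient choice is to set $\mu$ so that the $(\epsilon_2,\delta)$-optimal $q_{s_2}$ equals the threshold $t$ (this makes the relaxed scheme ``just barely'' able to recommend action $1$ with the posterior-$0$ signal carrying full weight $\mu/t$), and then evaluate the \privacyone value at that same $\mu$. After substituting and simplifying, the difference should telescope into the stated expression; the inequality $\epsilon_1-\epsilon_2\le C\delta$ is exactly what is needed to ensure the $\privacyone$ feasible interval is strictly inside $[0,t]$ on the relevant side so that the gap does not collapse. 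I expect the main obstacle to be the bookkeeping in this last step: verifying that the chosen $\mu$ simultaneously lies in $[0,1]$, keeps $\mu<t$, makes the relevant endpoints land where claimed, and that after algebraic simplification the lower bound is exactly $(e^{\epsilon_2}-1)/((1+\delta)e^{\epsilon_1+\epsilon_2}-1)$ rather than something merely of the same order — in particular keeping careful track of the $\delta\mu(1-\mu)$ correction terms in $C_1,\dots,C_6$ and of where the condition $C<1$ is invoked.
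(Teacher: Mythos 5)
Your plan has a genuine gap: the specific construction you propose for the $(\epsilon_2,\delta)$-DP side (taking $q_{s_1}=0$) cannot produce the required bound. With $q_{s_1}=0$, it is the \emph{first} inequality of \Cref{biprivacyonetwo} (the one from $(\theta=1,\theta'=0)$, not the second as you wrote) that binds, and it forces
\[
q_{s_2}\leq \frac{\mu^2}{C_3}=\frac{\mu}{1-\delta(1-\mu)},
\]
i.e.\ $q_{s_2}-\mu\leq \delta(1-\mu)q_{s_2}=O(\delta)$. In other words, allowing $q_{s_1}=0$ buys you only an $O(\delta)$ push of $q_{s_2}$ beyond $\mu$. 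To then have $q_{s_2}\geq t$ (needed for the high-utility region) you must take $\mu$ within $O(\delta)$ of $t$. But at such a $\mu$ the upper endpoint of the \privacyone feasible interval, $\mu/(e^{-\epsilon_1}(1-\mu)+\mu)$, already exceeds $t$ (it exceeds $t$ whenever $\mu> t/((1-t)e^{\epsilon_1}+t)$, and this threshold is $\Theta(\epsilon_1)$ below $t$), so $\maxutility(\epsilon_1,0)$ is also close to $1$ and the two sides nearly cancel. Numerically, at $t=0.5$, $\epsilon_1=0.1$, $\delta=0.01$, your $\mu\approx0.497$ gives a gap of roughly $0.1$, far short of the claimed $\approx0.43$. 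You cannot simultaneously make $\maxutility(\epsilon_1,0)$ small \emph{and} keep $q_{s_1}=0$ feasible with $q_{s_2}=t$; the two force $\mu$ in opposite directions.

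What resolves the tension in the paper's proof is exactly the choice of $q_{s_1}$ that you skip: $\mu$ is chosen so that the \privacyone upper endpoint is strictly \emph{below} $t$ (making $\maxutility(\epsilon_1,0)=0$ outright, with $\mu$ a distance $\Theta(\epsilon_1)$ below $t$), and then under $(\epsilon_2,\delta)$-DP one sets $q_{s_2}=t$ and $q_{s_1}=\mu/(e^{\epsilon_2}(1-\mu)+\mu)$ — the \emph{lower endpoint of the pure $\epsilon_2$-DP interval}, not $0$. With this $q_{s_1}$ the $(\theta=1,\theta'=0)$ constraint holds with left-hand side exactly $0$ and so needs no $\delta$-slack at all; the $\delta$-budget is spent entirely on the $(\theta=0,\theta'=1)$ constraint to push $q_{s_2}$ from the pure-$\epsilon_2$ ceiling up to $t$, and the hypothesis $\epsilon_1-\epsilon_2\leq C\delta$ (via $1-e^{\epsilon_2-\epsilon_1}\leq C\delta$) is precisely what makes this push reach $t$. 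The resulting utility $(\mu-q_{s_1})/(t-q_{s_1})$ then simplifies to the stated lower bound. The correct high-level picture is ``non-extreme $q_{s_1}$, push $q_{s_2}$ to $t$ with the $\delta$-slack,'' not ``extreme $q_{s_1}=0$.''
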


To prove this theorem, we first strategically choose $\mu$ such that the sender is unable to obtain positive utility under $\epsilon_1$-differential privacy. However, under $(\epsilon_2,\delta)$-differential privacy, the sender can set one posterior equal to $t$ while ensuring the other posterior is sufficiently far from the prior, resulting in the maximum attainable weight of positive utility.

In practice, $\delta$ typically needs to be sufficiently small since the sender could otherwise directly release one bit of information with probability $\delta$ \cite{kasiviswanathan2014semantics}.
Therefore, when $\delta$ is small but $\epsilon_1$ and $\epsilon_2$ are moderately large, as is common, the above theorem demonstrates a constant utility gap.

\begin{corollary}
     For any $(v,u)\in\uf$, if $\delta\leq 0.01,0.01\leq \epsilon_2\leq \epsilon_1\leq 1$ and $\epsilon_1-\epsilon_2\leq C\delta$ for some $C<1$, then there exists a $\mu$ such that $\maxutility(\epsilon_2,\delta)-\maxutility(\epsilon_1,0)\geq 0.25$.
\end{corollary}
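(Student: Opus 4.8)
The corollary follows from \Cref{thm:gap} by plugging in the admissible parameter range and bounding the right-hand side from below by a constant. The plan is to apply \Cref{thm:gap} with the stated hypotheses: since $0.01 \le \epsilon_2 \le \epsilon_1 \le 1$ and $\delta \le 0.01$ with $\epsilon_1 - \epsilon_2 \le C\delta$ for some $C < 1$, the conclusion of \Cref{thm:gap} gives a prior $\mu$ with
\[
\maxutility(\epsilon_2,\delta) - \maxutility(\epsilon_1,0) \ge \frac{e^{\epsilon_2}-1}{(1+\delta)e^{\epsilon_1+\epsilon_2}-1}.
\]
So it suffices to show that the right-hand side is at least $0.25$ over the parameter box $\epsilon_2 \in [0.01,1]$, $\epsilon_1 \in [\epsilon_2,1]$, $\delta \in (0,0.01]$.

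First I would control the denominator from above. Since $\epsilon_1, \epsilon_2 \le 1$ we have $\epsilon_1 + \epsilon_2 \le 2$, hence $e^{\epsilon_1+\epsilon_2} \le e^2$; together with $\delta \le 0.01$ this gives $(1+\delta)e^{\epsilon_1+\epsilon_2} - 1 \le 1.01\, e^2 - 1$. Next I would control the numerator from below: the map $x \mapsto e^x - 1$ is increasing, so $e^{\epsilon_2} - 1 \ge e^{0.01} - 1$. Combining the two bounds,
\[
\frac{e^{\epsilon_2}-1}{(1+\delta)e^{\epsilon_1+\epsilon_2}-1} \ge \frac{e^{0.01}-1}{1.01\, e^2 - 1}.
\]
A quick numerical check gives $e^{0.01} - 1 \approx 0.01005$ and $1.01 e^2 - 1 \approx 6.463$, so this ratio is about $0.00155$, which is far below $0.25$ — meaning this crude bound is not enough and the naive approach fails.

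The main obstacle, therefore, is that the worst-case bound in \Cref{thm:gap} is not uniformly large over the whole box; the corollary must be exploiting that the choice of $\mu$ in the proof of \Cref{thm:gap} is itself a free parameter, and for the corollary one is allowed to pick a more favorable $\mu$ (not necessarily the one that makes $\maxutility(\epsilon_1,0)=0$). Concretely, I would revisit the construction behind \Cref{thm:gap}: choose $\mu$ so that under $\epsilon_1$-differential privacy the feasible posterior interval still excludes the receiver's threshold $t$ (forcing $\maxutility(\epsilon_1,0)=0$), while under $(\epsilon_2,\delta)$-differential privacy one posterior can be pushed to $t$ with the companion posterior driven toward an extreme of $[0,1]$, via \Cref{biutility}. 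The resulting weight on the action-$1$ posterior — and hence $\maxutility(\epsilon_2,\delta)$ itself — is then $(\mu - q_{s_1})/(t - q_{s_1})$, which I would lower-bound by taking $q_{s_1}$ as small as the second inequality of \Cref{biprivacyonetwo} permits when $q_{s_2}=t$. With $\epsilon_1,\epsilon_2 \ge 0.01$ bounded away from $0$ and $\delta$ small but positive, I would show this weight is at least $0.25$; the computation reduces to plugging the constants $C_1,\dots,C_6$ (with $\epsilon = \epsilon_2$) into the inequality $q_{s_1}(C_4 t - C_5) \ge C_6 t - e^{\epsilon_2}\mu^2$ and solving for the minimal feasible $q_{s_1}$, then checking monotonicity in $\epsilon_1,\epsilon_2,\delta$ to locate the worst case inside the box. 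The one routine-but-necessary step is verifying that the chosen $\mu$ simultaneously lies outside the $\epsilon_1$-feasible interval $[\mu/(e^{\epsilon_1}-e^{\epsilon_1}\mu+\mu),\,\mu/(e^{-\epsilon_1}-e^{-\epsilon_1}\mu+\mu)]$ relative to $t$ and inside the $(\epsilon_2,\delta)$-feasible region described by \Cref{biprivacyonetwo}; monotonicity of the interval endpoints in $\epsilon_1$ makes the $\epsilon_1 = 1$ endpoint the binding case, which I expect to close the argument.
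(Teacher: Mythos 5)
You correctly identify that your first attempt gives a bound far below $0.25$, but you misdiagnose the cause and propose the wrong fix. The bound $e^{\epsilon_1+\epsilon_2}\le e^2$ discards the coupling hypothesis $\epsilon_1-\epsilon_2\le C\delta<\delta\le 0.01$, which forces $\epsilon_1+\epsilon_2< 2\epsilon_2+0.01$. Using this together with $\delta\le 0.01$,
\[
\frac{e^{\epsilon_2}-1}{(1+\delta)e^{\epsilon_1+\epsilon_2}-1}\;>\;\frac{e^{\epsilon_2}-1}{1.01\,e^{0.01}\cdot e^{2\epsilon_2}-1},
\]
a single-variable function of $\epsilon_2\in[0.01,1]$ that is unimodal and hence minimized at an endpoint, giving roughly $0.247$ at $\epsilon_2=0.01$ and $0.263$ at $\epsilon_2=1$ --- three orders of magnitude larger than your $0.00155$. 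So the intended route is plug-and-check from \Cref{thm:gap} once the coupling between $\epsilon_1$ and $\epsilon_2$ is exploited. (The small sliver between $0.247$ and $0.25$ is because the displayed inequality of \Cref{thm:gap} is already a relaxation; the tighter intermediate estimate in its proof, $\frac{(e^{\epsilon_2}-1)(e^{-\epsilon_1}(1-\mu)+\mu)}{(e^{\epsilon_2}-e^{-\epsilon_1})+\delta(e^{-\epsilon_1}(1-\mu)+\mu)(e^{\epsilon_2}(1-\mu)+\mu)}$, evaluates to about $0.25$ at the extreme point $\epsilon_2=0.01$, $\epsilon_1\approx 0.02$, $\delta=0.01$.)

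Your proposed detour goes the wrong way in two respects. First, re-choosing $\mu$ so that $\maxutility(\epsilon_1,0)$ need not vanish cannot help: any positive $\maxutility(\epsilon_1,0)$ is subtracted from the gap, so forcing it to zero --- as the proof of \Cref{thm:gap} does --- is the favorable choice. Second, the feasibility computation you sketch using the constants $C_1,\ldots,C_6$ of \Cref{biprivacyonetwo} with $q_{s_2}=t$ is exactly what the proof of \Cref{thm:gap} already carries out; redoing it would simply rederive that theorem. The real gap in your argument is the crude denominator bound, not a missing idea beyond \Cref{thm:gap}.
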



Since in the proof of \Cref{thm:gap}, we actually select a $\mu$ that makes the sender unable to get positive utility under \privacyone.
This actually implies an arbitrarily large gap between any instances of the two types of privacy constraints.

\begin{proposition}\label{gap:ratio}
    For any $(u,v)\in\uf$ and any $\epsilon_1,\epsilon_2,\delta>0$, then there exists a $\mu$ satisfying that $\maxutility(\epsilon_2,\delta)/\maxutility(\epsilon_1,0)$ can be arbitrarily large.
\end{proposition}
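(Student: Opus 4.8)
The plan is to leverage the construction already sketched for \Cref{thm:gap}: choose $\mu$ so that the sender gets \emph{zero} utility under $\epsilon_1$-differential privacy, while still obtaining \emph{strictly positive} utility under $(\epsilon_2,\delta)$-differential privacy. Once $\maxutility(\epsilon_1,0)=0$ and $\maxutility(\epsilon_2,\delta)>0$, the ratio $\maxutility(\epsilon_2,\delta)/\maxutility(\epsilon_1,0)$ is literally $+\infty$; but since the statement asks for it to be ``arbitrarily large'' rather than infinite, it is cleaner (and avoids dividing by zero) to pick a sequence of priors $\mu_k$ for which the denominator is positive but tends to $0$ while the numerator stays bounded below by a positive constant. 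I would present it that way.

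The key steps, in order: (1) Recall from \Cref{biprivacyonetwo} that under $\epsilon_1$-DP every induced posterior must lie in the interval $[\mu/(e^{\epsilon_1}-e^{\epsilon_1}\mu+\mu),\,\mu/(e^{-\epsilon_1}-e^{-\epsilon_1}\mu+\mu)]$. Since $v\in\uf$ is state-independent with value $1$ on action $1$ and $0$ on action $0$, and the receiver takes action $1$ only when the posterior is at least the threshold $t$, the sender gets positive utility under $\epsilon_1$-DP iff the upper endpoint of that interval reaches $t$, i.e. iff $\mu/(e^{-\epsilon_1}-e^{-\epsilon_1}\mu+\mu)\ge t$. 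Solving, this fails exactly when $\mu$ is below some explicit threshold $\mu^{\ast}_1=\mu^{\ast}_1(\epsilon_1,t)$; so I take $\mu<\mu^{\ast}_1$, forcing $\maxutility(\epsilon_1,0)=0$. (2) For the same $\mu$, invoke the $(\epsilon_2,\delta)$ feasibility characterization in \Cref{biprivacyonetwo}: with $\delta>0$ one can set $q_{s_1}$ arbitrarily small (even $0$) at the cost of pulling $q_{s_2}$ close to $\mu$, but crucially $q_{s_2}$ can still be pushed above $t$ provided $\mu$ is not too tiny relative to $t,\epsilon_2,\delta$; plugging a feasible pair $(q_{s_1},q_{s_2})$ with $q_{s_2}\ge t$ into \Cref{biutility} gives a positive value $(\mu-q_{s_1})\varv(q_{s_2})/(q_{s_2}-q_{s_1})=(\mu-q_{s_1})/(q_{s_2}-q_{s_1})>0$. (3) Choose a sequence $\mu_k\uparrow\mu^{\ast}_1$ (or, to get a genuine ``arbitrarily large'' ratio with a positive denominator, $\mu_k$ slightly above some threshold where both quantities are positive and the $\epsilon_1$-value collapses faster). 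Show the $(\epsilon_2,\delta)$-optimal value stays bounded below by a positive constant along this sequence while the $\epsilon_1$-optimal value tends to $0$, so the ratio diverges.

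Concretely, the cleanest realization: fix any $\mu$ in the open interval where $\maxutility(\epsilon_1,0)=0$ but $\maxutility(\epsilon_2,\delta)>0$ — nonempty because, as noted in the discussion after \Cref{thm:gap}, the $\epsilon_1$-feasible interval is a fixed subinterval of $(0,1)$ not reaching the endpoints, whereas $\delta>0$ lets the $(\epsilon_2,\delta)$-feasible region touch $q_{s_1}=0$. Then the ratio is $+\infty$, which is in particular ``arbitrarily large.'' If one insists on a finite-but-unbounded reading, replace the single $\mu$ by $\mu_k$ approaching the upper boundary of that interval from inside, where the $\epsilon_1$-value is still $0$ and the $(\epsilon_2,\delta)$-value, being continuous and positive at the boundary, is bounded away from $0$. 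Either reading is a one-line consequence once the two feasibility pictures from \Cref{biprivacyonetwo} are in hand.

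I expect the main obstacle to be purely bookkeeping: verifying that the interval of priors $\mu$ for which $\maxutility(\epsilon_1,0)=0$ \emph{and} $\maxutility(\epsilon_2,\delta)>0$ is genuinely nonempty for \emph{all} $\epsilon_1,\epsilon_2,\delta>0$ and \emph{all} $(v,u)\in\uf$ (so all thresholds $t\in(0,1)$). This requires showing the $\epsilon_1$-feasible posterior interval's upper endpoint $\mu/(e^{-\epsilon_1}-e^{-\epsilon_1}\mu+\mu)$ is a strictly increasing function of $\mu$ on $(0,1)$ that vanishes as $\mu\to 0$ — so for small enough $\mu$ it sits below any fixed $t$ — while simultaneously exhibiting one explicitly feasible $(\epsilon_2,\delta)$-pair with $q_{s_2}\ge t$ at that same small $\mu$; the second constraint line of \Cref{biprivacyonetwo} (the $C_4,C_5,C_6$ inequality with the $e^{\epsilon_2}\mu^2$ term) is the one to check carefully, but with $q_{s_1}\to 0$ it reduces to a linear inequality in $q_{s_2}$ that is satisfied for $q_{s_2}$ up to something strictly exceeding $\mu$, hence exceeding $t$ once we also keep $\mu$ not ridiculously small — a mild two-sided window on $\mu$ that is easily seen to be nonempty.
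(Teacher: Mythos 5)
Your overall strategy — choose $\mu$ so that $\maxutility(\epsilon_1,0)=0$ while $\maxutility(\epsilon_2,\delta)>0$, making the ratio unbounded — is exactly the paper's, and your framing via a sequence $\mu_k$ to avoid a $0/0$ reading is a reasonable cosmetic refinement. However, there is a concrete error in the step where you establish $\maxutility(\epsilon_2,\delta)>0$, and it is precisely in the place you flagged as ``bookkeeping.''

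You propose driving $q_{s_1}\to 0$ and then pushing $q_{s_2}$ up to $t$. From \Cref{biprivacyonetwo}, setting $q_{s_1}=0$ forces the upper posterior to satisfy $q_{s_2}\le \mu/(1-\delta(1-\mu))$ and $q_{s_2}\le e^{\epsilon_2}\mu/\bigl((1-\mu)(1-\delta)+e^{\epsilon_2}\mu\bigr)$. For small $\delta$ the first bound is only slightly larger than $\mu$, so reaching $q_{s_2}\ge t$ forces $\mu$ to be within an $O(\delta)$ neighborhood of $t$. Meanwhile, forcing $\maxutility(\epsilon_1,0)=0$ requires $\mu < \mu_1^\ast = t/\bigl(e^{\epsilon_1}(1-t)+t\bigr)$, and the gap $t-\mu_1^\ast = t(e^{\epsilon_1}-1)(1-t)/\bigl(e^{\epsilon_1}(1-t)+t\bigr)$ is bounded away from zero independently of $\delta$. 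So the ``mild two-sided window'' you claim is ``easily seen to be nonempty'' is in fact empty whenever $\delta < (e^{\epsilon_1}-1)(1-t)/\bigl(e^{\epsilon_1}(1-t)+t\bigr)$, which includes essentially every regime the proposition is meant to cover.

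The fix is to push the \emph{other} posterior to the extreme, as the paper does: take $q_{s_2}=1$ (so $q_{s_2}\ge t$ is automatic) and solve the two $(\epsilon_2,\delta)$-constraints from \Cref{biprivacyonetwo} for the lower posterior. This gives $q_{s_1}\ge \mu(1-\delta)/(1-\delta\mu)$, which is strictly below $\mu$ for every $\delta>0$, so the weight $(\mu-q_{s_1})/(1-q_{s_1})$ on $q_{s_2}=1$ is strictly positive. This works for every $\mu\in(0,1)$ and every $\delta>0$, so $\maxutility(\epsilon_2,\delta)>0$ holds for any $\mu<\mu_1^\ast$ and the argument closes. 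Keeping your step (1) (which is essentially what the paper encodes with the $1+\omega$ factor) and substituting this $q_{s_2}=1$ construction for your step (2) recovers the paper's proof.
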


\paragraph{Utility gap between \privacyone and no privacy}
Also, when there is no privacy constraint, the sender can achieve better utility. There is also a large additive gap between \privacyone and no privacy.

\begin{proposition}\label{gap:noone}
    For any $(v,u)\in\uf$, there exists a $\mu$ such that $\maxutility(\infty,\infty)-\maxutility(\epsilon,0)\geq 1/((1-t)e^\epsilon+t)$.
\end{proposition}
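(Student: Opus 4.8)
The plan is to exploit the characterization already developed in \Cref{biprivacyonetwo} and \Cref{biutility} for the single-agent case, together with the structure of the utility family $\uf$. Recall that under any constraint the sender optimally uses two signals inducing posteriors $q_{s_1} \le \mu \le q_{s_2}$, and since $v$ is state-independent with $v \equiv 1$ at action $1$ and $v \equiv 0$ at action $0$, the reduced sender utility is $\varv(q) = \mathbbm{1}[q \ge t]$. Thus by \Cref{biutility} the sender's utility equals the total probability weight $\tau$ places on posteriors at least $t$. With two signals and the requirement $q_{s_1}\le \mu\le q_{s_2}$, the only posterior that can clear the threshold is $q_{s_2}$ (assuming we pick $\mu < t$), and its weight is $(\mu - q_{s_1})/(q_{s_2}-q_{s_1})$, which is maximized by pushing $q_{s_1}$ as low as possible and $q_{s_2}$ as high as possible.

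First I would fix $\mu < t$ (any such $\mu$ works, e.g.\ $\mu = t/2$ or even $\mu$ small). With no privacy constraint, the sender can set $q_{s_1} = 0$ and $q_{s_2} = 1$, giving weight exactly $\mu$ on action $1$, so $\maxutility(\infty,\infty) = \mu$ — actually, to get the cleanest bound, I would instead choose $q_{s_2} = t$ exactly (the minimal posterior that still induces action $1$) and $q_{s_1} = 0$, yielding weight $\mu/t$ on the favorable signal; comparing $\mu/t$ versus $\mu$, the larger is $\mu/t$ since $t<1$, so $\maxutility(\infty,\infty) \ge \mu/t$ (and in fact $= \mu/t$ after checking no scheme does better, which follows from the concave-hull/splitting argument since $\varv$ is an indicator). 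Next, under \privacyone, \Cref{biprivacyonetwo} forces every induced posterior into the interval $[\mu/(e^\epsilon - e^\epsilon\mu + \mu),\ \mu/(e^{-\epsilon} - e^{-\epsilon}\mu + \mu)]$; the relevant endpoint is the upper one, $\bar q := \mu/(e^{-\epsilon} - e^{-\epsilon}\mu + \mu)$. If $\bar q < t$ the sender gets zero; otherwise the optimal scheme sets $q_{s_2} = \bar q$ and $q_{s_1}$ at the lower endpoint, and the weight on action $1$ is $(\mu - \underline q)/(\bar q - \underline q)$. I would then pick $\mu$ so that $\bar q = t$ exactly, i.e.\ solve $\mu/(e^{-\epsilon}-e^{-\epsilon}\mu+\mu) = t$ for $\mu$, which gives $\mu = t e^{-\epsilon}/(1 - t + t e^{-\epsilon})$; at this $\mu$ the upper posterior barely reaches $t$, so the weight $(\mu-\underline q)/(t - \underline q)$ on action $1$ can be computed explicitly and then subtracted from $\maxutility(\infty,\infty) = \mu/t$ (re-evaluated at this specific $\mu$).

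The remaining work is arithmetic: plug the chosen $\mu$ into both expressions and simplify the difference $\mu/t - (\mu - \underline q)/(t-\underline q)$, where $\underline q = \mu/(e^\epsilon - e^\epsilon\mu + \mu)$, and show it equals (or lower-bounds) $1/((1-t)e^\epsilon + t)$. I expect this simplification to be the only real obstacle — it is a routine but slightly delicate rational-function computation, and one must be careful that the chosen $\mu$ indeed makes $\bar q = t$ so that the \privacyone optimum uses $q_{s_2}=t$ rather than some interior value, and that $q_{s_1}=\underline q$ is genuinely optimal (it is, since lowering $q_{s_1}$ only increases the weight on $q_{s_2}$). A cleaner alternative, which I would try first to avoid the messiest algebra, is to pick $\mu$ slightly below the critical value so that $\bar q < t$ and $\maxutility(\epsilon,0) = 0$ outright; then the gap is simply $\maxutility(\infty,\infty) = \mu/t$, and taking $\mu \to (t e^{-\epsilon}/(1-t+te^{-\epsilon}))^-$ from below gives a gap of $e^{-\epsilon}/(1-t+te^{-\epsilon}) = 1/((1-t)e^\epsilon + t)$ in the limit — matching the claimed bound exactly. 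I would present this limiting/supremum version as the main argument and note the bound is attained in the limit (or, by choosing $\mu$ equal to the critical value and checking the tie-breaking, attained exactly).
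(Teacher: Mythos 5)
Your `cleaner alternative' is precisely the paper's proof: the paper sets $\mu = te^{-\epsilon}/(1+\omega-t+te^{-\epsilon})$ with arbitrarily small $\omega>0$ so that the \privacyone upper endpoint falls strictly below $t$ and $\maxutility(\epsilon,0)=0$, computes $\maxutility(\infty,\infty)=\mu/t=1/((1+\omega-t)e^\epsilon+t)$, and lets $\omega\to 0$. The limiting subtlety you flag is genuine and unaddressed in the paper: for any fixed $\omega>0$ the gap is strictly below $1/((1-t)e^\epsilon+t)$, while at $\omega=0$ the receiver takes action $1$ at posterior exactly $t$, so $\maxutility(\epsilon,0)>0$ and the gap drops to $1/\bigl((e^\epsilon(1-t)+t)(e^\epsilon+1)\bigr)$ --- thus the stated bound is attained only as a supremum, though the downstream corollary's $0.37$ retains enough slack to be unaffected.
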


The above proposition implies a constant gap for all $\epsilon$ less than $1$.

\begin{corollary}
    For any $(v,u)\in\uf$, if $\epsilon\leq 1$, there exists a $\mu$ such that $\maxutility(\infty,\infty)-\maxutility(\epsilon,0)\geq 0.37$.
\end{corollary}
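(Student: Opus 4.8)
\textbf{Proof proposal for \Cref{gap:noone}.}

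The plan is to exhibit a single prior $\mu$ and then compare the sender's optimal utility with no privacy constraint against the optimal utility under \privacyone, using the characterization developed in \Cref{biprivacyonetwo} and \Cref{biutility}. First I would recall that since $v$ is state-independent and equals $1$ on action $1$ and $0$ on action $0$, the sender's posterior-utility function is $\varv(q) = \mathbbm{1}[q \geq t]$, i.e., the sender earns $1$ precisely when the induced posterior is at least the receiver's threshold $t$. By \Cref{biprivacyonetwo}, two signals suffice, and by \Cref{biutility} the sender's optimal utility given a feasible region $C$ for $(q_{s_1}, q_{s_2})$ with $q_{s_1} \le \mu \le q_{s_2}$ is $\max_{(q_{s_1},q_{s_2})\in C} \tfrac{(q_{s_2}-\mu)\varv(q_{s_1}) + (\mu - q_{s_1})\varv(q_{s_2})}{q_{s_2}-q_{s_1}}$. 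With $\varv = \mathbbm{1}[\cdot \ge t]$, the optimal strategy sets $\varv(q_{s_1}) = 0$ and $\varv(q_{s_2}) = 1$ (taking $q_{s_1} < t \le q_{s_2}$), giving utility $\tfrac{\mu - q_{s_1}}{q_{s_2} - q_{s_1}}$, which is maximized by pushing $q_{s_1}$ as small as possible and $q_{s_2}$ as close to $t$ as possible (subject to feasibility).

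Next I would instantiate this in the two regimes. With no privacy constraint, the sender may choose $q_{s_1} = 0$ and $q_{s_2} = t$, yielding $\maxutility(\infty,\infty) = \mu/t$ whenever $\mu \le t$. Under \privacyone, \Cref{biprivacyonetwo} forces every induced posterior into the interval $[\mu/(e^\epsilon - e^\epsilon \mu + \mu),\ \mu/(e^{-\epsilon} - e^{-\epsilon}\mu + \mu)]$; in particular the smallest achievable $q_{s_1}$ is $\mu/(e^\epsilon - e^\epsilon\mu + \mu)$ and the largest achievable $q_{s_2}$ is $\mu/(e^{-\epsilon} - e^{-\epsilon}\mu + \mu)$. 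The natural choice of $\mu$ is the one making the upper end of this interval exactly equal to $t$, i.e., solve $\mu/(e^{-\epsilon} - e^{-\epsilon}\mu + \mu) = t$ for $\mu$; call this $\mu^\star$. (One should check $\mu^\star \le t$, which holds since $e^{-\epsilon} \le 1$.) At $\mu = \mu^\star$, the best the \privacyone sender can do is $q_{s_2} = t$ and $q_{s_1}$ at its minimum, so $\maxutility(\epsilon,0) = \tfrac{\mu^\star - q_{s_1}^{\min}}{t - q_{s_1}^{\min}}$, while $\maxutility(\infty,\infty) = \mu^\star/t$. Substituting $q_{s_1}^{\min} = \mu^\star/(e^\epsilon - e^\epsilon\mu^\star + \mu^\star)$ and the closed form for $\mu^\star$, the difference $\maxutility(\infty,\infty) - \maxutility(\epsilon,0)$ reduces to an explicit rational function of $e^\epsilon$ and $t$; the claim is that it simplifies to $1/((1-t)e^\epsilon + t)$.

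The main obstacle is the algebraic simplification in the last step: one must carry through the substitution of $\mu^\star$ into both utility expressions and verify that the telescoping produces exactly $1/((1-t)e^\epsilon+t)$ rather than something merely of the same order. I expect it is cleanest to first rewrite $\mu^\star = \tfrac{t e^{-\epsilon}}{1 - t + t e^{-\epsilon}} = \tfrac{t}{e^\epsilon(1-t) + t}$ (so that $\maxutility(\infty,\infty) = \mu^\star/t = 1/(e^\epsilon(1-t)+t)$ already has the target form), and then show the \privacyone value $\maxutility(\epsilon,0)$ vanishes — i.e., that at $\mu = \mu^\star$ the feasible region collapses so that the sender cannot separate a posterior below $t$ from one at or above $t$ with positive weight on the latter. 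Indeed, if $\mu^\star$ is chosen so that $t$ is the \emph{maximum} attainable posterior, then $q_{s_2} = t$ is only achievable in the degenerate non-splitting scheme (with $q_{s_1} = q_{s_2} = \mu^\star < t$ being the only way to keep the expectation at $\mu^\star$ once $q_{s_2}$ is pinned to its extreme), forcing $\maxutility(\epsilon,0) = 0$. The corollary then follows by plugging $\epsilon = 1$ into $1/((1-t)e + t)$ and bounding it below over $t \in (0,1)$, where the worst case $t \to 1$ gives $1/e \approx 0.368 > 0.37$ up to the stated rounding — I would double-check whether the intended bound uses $t$ bounded away from $1$ or a slightly different constant, since $1/e < 0.37$ strictly, so the corollary as stated may implicitly restrict $t$ or round the other way.
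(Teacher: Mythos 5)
Your high-level plan---choose a prior $\mu$ that excludes all posteriors $\ge t$ from the \privacyone-feasible interval, then compare against the unconstrained sender who can split to $q_{s_1}=0,\ q_{s_2}=t$---matches the paper's approach. However, your specific choice of $\mu^\star$ does not deliver $\maxutility(\epsilon,0)=0$, which is the crux of the argument. You pick $\mu^\star$ so that the upper endpoint $\mu^\star/(e^{-\epsilon}-e^{-\epsilon}\mu^\star+\mu^\star)$ \emph{equals} $t$ exactly, and then assert that $q_{s_2}=t$ is ``only achievable in the degenerate non-splitting scheme.'' This conflates \privacyone with \privacytwo. Under \privacyone (\Cref{biprivacyonetwo}), the constraint is an \emph{independent} interval constraint on each induced posterior: both $q_{s_1}$ and $q_{s_2}$ must lie in $[q^{\min},q^{\max}]$, but nothing ties them together beyond $\mathbb{E}_\tau[q_s]=\mu$. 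With your $\mu^\star = t/(e^\epsilon(1-t)+t)$ we have $q^{\max}=t$ but $q^{\min}=\mu^\star/(e^\epsilon(1-\mu^\star)+\mu^\star)=t/(e^{2\epsilon}(1-t)+t)<\mu^\star$, so setting $q_{s_1}=q^{\min}$ and $q_{s_2}=t$ is a feasible, non-degenerate scheme earning $(\mu^\star-q^{\min})/(t-q^{\min})>0$. Working the algebra through, the gap your $\mu^\star$ actually yields is $1/\bigl((e^\epsilon(1-t)+t)(e^\epsilon+1)\bigr)$, which at $\epsilon=1$ never exceeds $1/(e+1)\approx 0.27$---well short of $0.37$.

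The paper avoids this by strictly perturbing: it sets $\mu = te^{-\epsilon}/(1+\omega-t+te^{-\epsilon})$ for arbitrarily small $\omega>0$, which makes the feasible upper endpoint equal to $t/(1+\omega)$, \emph{strictly} below $t$. Then no \privacyone-feasible posterior reaches $t$, $\maxutility(\epsilon,0)=0$, and the gap is exactly the unconstrained value $\mu/t = 1/\bigl((1+\omega-t)e^\epsilon+t\bigr)$, which tends to $1/\bigl((1-t)e^\epsilon+t\bigr)$ as $\omega\to 0$. Your closing concern about the rounding is legitimate---taking $\epsilon=1$ and the unfavorable $t$ gives a floor near $1/e\approx 0.368<0.37$---but note the worst case is $t\to 0$, not $t\to 1$, since $(1-t)e^\epsilon+t$ is decreasing in $t$.
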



\paragraph{Utility gap between \privacytwo and no privacy}
Moreover, an additive gap occurs between \privacytwo and no privacy.

\begin{proposition}\label{gap:notwo}
      For any $(v,u)\in\uf$, there exists a $\mu$ such that 
      \[
      \maxutility(\infty,\infty)-\maxutility(\epsilon,\delta)\geq \frac{1}{2}-\min\left\{\frac{(1-\mu)(\delta+e^\epsilon-1)}{(1-\mu)(2e^\epsilon-1)+\mu},\frac{(1-\mu)\delta}{\max\{2(1-\mu)\delta,(1-\mu)(2-e^\epsilon)+e^\epsilon\mu\}}\right\}.
      \]
\end{proposition}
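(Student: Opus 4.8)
\textbf{Proof proposal for \Cref{gap:notwo}.} The plan is to reuse the machinery of \Cref{biprivacyonetwo} and \Cref{biutility} to compute (or lower-bound) $\maxutility(\epsilon,\delta)$ exactly, and to compute $\maxutility(\infty,\infty)$ trivially, then choose $\mu$ to separate the two. For the unconstrained value, note that with no privacy constraint the sender can reveal the state fully, splitting the prior $\mu$ into posteriors $0$ and $1$; since $v\in\uf$ is state-independent and equals $1$ on action $1$, the receiver takes action $1$ on posterior $1$ and action $0$ on posterior $0$ (assuming $0<t\le 1$), giving $\maxutility(\infty,\infty)=\mu$. So I would pick $\mu=1/2$, which makes the target reduce to the displayed bound. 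The real work is upper-bounding $\maxutility(\epsilon,\delta)$ for this $\mu$.

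By \Cref{biutility}, $\maxutility(\epsilon,\delta)=\max_{(q_{s_1},q_{s_2})\in C}\frac{(q_{s_2}-\mu)\varv(q_{s_1})+(\mu-q_{s_1})\varv(q_{s_2})}{q_{s_2}-q_{s_1}}$, where by the definition of $\uf$ we have $\varv(q)=\mathbbm{1}[q\ge t]$. The optimum therefore puts the lower posterior $q_{s_1}$ below $t$ (contributing $0$) and the upper posterior at or above $t$; to maximize the weight $(\mu-q_{s_1})/(q_{s_2}-q_{s_1})$ on the high posterior one wants $q_{s_1}$ as large as possible and $q_{s_2}$ as small as possible subject to feasibility. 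The natural candidates are $q_{s_1}$ just below $t$ with $q_{s_2}=t$ — but that is infeasible since $C$ forbids both posteriors being near the prior unless they are on opposite sides with enough spread — so instead I would optimize over the boundary curves given by the two inequalities of \Cref{biprivacyonetwo} (with $\mu=1/2$). Concretely, I would set $q_{s_2}=t$ if the resulting constrained maximal $q_{s_1}$ still yields weight at least the claimed bound, and otherwise push $q_{s_2}$ up; plugging the two boundary equations $q_{s_1}(C_1 q_{s_2}-C_2)=C_3 q_{s_2}-\mu^2$ and $q_{s_1}(C_4 q_{s_2}-C_5)=C_6 q_{s_2}-e^\epsilon\mu^2$ into the objective and simplifying with $\mu=1/2$ should produce the two expressions inside the $\min\{\cdot,\cdot\}$, one coming from each active constraint. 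The $\min$ appears precisely because which of the two privacy inequalities binds depends on the parameter regime, so the sender's utility is the smaller of the two boundary-optimal values, and $\maxutility(\infty,\infty)-\maxutility(\epsilon,\delta)=1/2-\maxutility(\epsilon,\delta)$ gives the result.

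The main obstacle is the case analysis of which constraint in \Cref{biprivacyonetwo} is active at the optimum and the resulting algebra: one must carefully identify the feasible point maximizing the weighted objective, handle the corner where $q_{s_2}$ is driven to $1$, and check that the threshold $t$ does not cut off the optimum in an uncontrolled way (the bound is stated with a fixed $t$, so presumably one chooses $q_{s_2}=\max\{t, q^\ast\}$ for the constrained optimizer $q^\ast$, which is why $t$ does not appear explicitly in the final expression — I would verify this reduction first). A secondary subtlety is confirming that two signals genuinely suffice here, but that is handed to us by \Cref{biprivacyonetwo}. Once the active-constraint dichotomy is pinned down, each branch is a one-variable optimization (maximize $(\mu-q_{s_1})/(q_{s_2}-q_{s_1})$ along a rational curve), which is routine to push through, and collecting the two branch values under a $\min$ yields exactly the stated inequality.
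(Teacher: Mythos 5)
There is a genuine gap at the very first step that propagates through the argument. You claim that with no privacy constraint the sender should reveal the state fully, splitting into posteriors $0$ and $1$, so $\maxutility(\infty,\infty)=\mu$, and you then set $\mu=1/2$. This is not the sender's optimum. Since $\varv(q)=\mathbbm{1}[q\ge t]$, the sender should place the high posterior at $q_{s_2}=t$, not at $1$: with $q_{s_1}=0,\,q_{s_2}=t$ the weight on $q_{s_2}$ is $\mu/t$, which strictly exceeds $\mu$ whenever $t<1$. So in general $\maxutility(\infty,\infty)=\min\{1,\,\mu/t\}$, not $\mu$. Your choice $\mu=1/2$ therefore fails: if $t\le 1/2$ then $\mu\ge t$, the receiver already plays action $1$ at the prior, both sides of the gap equal $1$, and there is nothing to prove; if $1/2<t<1$ then $\maxutility(\infty,\infty)=1/(2t)\ne 1/2$, so the reduction ``$\maxutility(\infty,\infty)-\maxutility(\epsilon,\delta)=1/2-\maxutility(\epsilon,\delta)$'' does not hold. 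The paper instead sets $\mu=t/2$, which simultaneously guarantees $\mu<t$ (a nontrivial persuasion problem for every $t\in(0,1]$) and makes $\maxutility(\infty,\infty)=\mu/t=1/2$ exactly.

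Once $\mu=t/2$ is fixed, the remainder of your plan is essentially the paper's: fix $q_{s_2}=t=2\mu$, set $q_{s_1}=\mu-d$, observe that the utility is $d/(\mu+d)$, and read off the two upper bounds on $d$ from the two inequalities in \Cref{biprivacyonetwo} (one per adjacent pair), with the $\min$ arising from whichever constraint binds. Your hesitation about ``verify the reduction so that $t$ does not appear explicitly'' is resolved precisely by this choice of $\mu$: with $\mu=t/2$ the threshold becomes $2\mu$, so $t$ is absorbed into $\mu$ and the final bound is stated purely in terms of $\mu$, $\epsilon$, and $\delta$. Without that choice, the algebra you propose to push through would not collapse to the displayed expression.
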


 The above proposition implies a constant gap when $\delta$ is small and $\epsilon$ is not too big.

\begin{corollary}
     For any $(v,u)\in\uf$, if $\delta\leq 0.01,\epsilon\leq 0.5$, there exists a $\mu$ such that $\maxutility(\infty,\infty)-\maxutility(\epsilon,\delta)\geq 0.47$.
\end{corollary}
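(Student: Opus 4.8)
The plan is to combine the warm-up characterization (\Cref{biprivacyonetwo} and \Cref{biutility}) with a carefully chosen prior $\mu$, exactly as in the proof of \Cref{thm:gap}, but now comparing against the no-privacy benchmark $\maxutility(\infty,\infty)$ rather than against $\maxutility(\epsilon_1,0)$. First I would recall that without any privacy constraint the sender, facing a utility function in $\uf$, can always fully separate: induce posterior $t$ with weight $\mu/t$ (when $\mu \le t$) or, more generally, achieve $\maxutility(\infty,\infty)=\min\{\mu/t,1\}$ by the standard Bayesian persuasion argument (concavification of $\varv$, which here is the indicator-type function that equals $1$ above threshold $t$ and $0$ below). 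I will pick $\mu$ in the regime $\mu < t$ so that $\maxutility(\infty,\infty)=\mu/t$, and to make the gap sharp I expect the right choice to be around $\mu$ such that $\mu/t = 1/2$, i.e. $\mu = t/2$, giving a no-privacy value of exactly $1/2$ — this is where the leading $\tfrac12$ in the bound comes from.

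Next I would compute $\maxutility(\epsilon,\delta)$ at this prior using \Cref{biutility}: the sender's optimal two-signal utility is $\max_{(q_{s_1},q_{s_2})\in C}\frac{(q_{s_2}-\mu)\varv(q_{s_1})+(\mu-q_{s_1})\varv(q_{s_2})}{q_{s_2}-q_{s_1}}$, and since $\varv(q_{s_1})=0$ whenever $q_{s_1}<t$ and $\varv(q_{s_2})=1$ whenever $q_{s_2}\ge t$, this reduces to maximizing the weight $\frac{\mu-q_{s_1}}{q_{s_2}-q_{s_1}}$ on the high posterior, subject to $q_{s_2}\ge t$ and to the two quadratic feasibility inequalities from \Cref{biprivacyonetwo}. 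The weight is increasing as $q_{s_1}\downarrow$ and as $q_{s_2}\downarrow$, so the optimum pushes $q_{s_2}$ down toward $t$ and $q_{s_1}$ down toward $0$; but the quadratic constraints (driven by constants $C_1,\dots,C_6$) prevent both from being extreme simultaneously. I would plug $q_{s_2}=t$ (or the smallest feasible value $\ge t$) into the two inequalities, solve for the minimal feasible $q_{s_1}$, and substitute back; the two branches of the $\min$ in the statement should correspond precisely to which of the two quadratic constraints is binding — the first term $\frac{(1-\mu)(\delta+e^\epsilon-1)}{(1-\mu)(2e^\epsilon-1)+\mu}$ from one constraint and the second term $\frac{(1-\mu)\delta}{\max\{2(1-\mu)\delta,(1-\mu)(2-e^\epsilon)+e^\epsilon\mu\}}$ from the other. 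The gap $\maxutility(\infty,\infty)-\maxutility(\epsilon,\delta)$ is then $\tfrac12$ minus whichever attainable weight is larger, i.e. minus the $\min$ of the two expressions, which is exactly the claimed bound.

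The main obstacle I anticipate is the bookkeeping in the constrained optimization: verifying that at the chosen $\mu$ the unconstrained maximizer of the weight indeed lies outside the feasible region so that a quadratic constraint binds, correctly identifying which constraint is active in which parameter subregime (hence the $\min$ and the inner $\max$), and checking that pushing $q_{s_2}$ down to exactly $t$ is optimal rather than some interior point where both quadratics are slack. I would handle this by treating the weight $\frac{\mu-q_{s_1}}{q_{s_2}-q_{s_1}}$ as a function on the boundary of $C$, noting its monotonicity to confine the search to the lower-left boundary arcs described by the two quadratic curves in \Cref{graph-area}, and then doing the elementary (if tedious) algebra of intersecting the line $q_{s_2}=t$ with each quadratic. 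A secondary point to be careful about is the degenerate case $q_{s_1}=0$, $q_{s_2}$ near $t$: \Cref{biprivacyonetwo} does allow $q_{s_1}=0$ under \privacytwo but only when $q_{s_2}$ is forced close to $\mu$, so I must confirm this is dominated by the interior-arc solution — this is also why a strictly positive gap survives. Finally, the corollary follows by plugging $\delta\le 0.01$, $\epsilon\le 0.5$ into the two expressions in the $\min$, bounding each by a small constant, and checking $\tfrac12$ minus that bound exceeds $0.47$; I would also double-check the constraint $\mu=t/2$ is compatible with $\mu$ being a valid prior in $(0,1)$ for every threshold $t\in(0,1)$ arising from functions in $\uf$, which it is.
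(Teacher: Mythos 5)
Your proposal matches the paper's route essentially exactly: pick $\mu = t/2$ so that the unconstrained (no-privacy) optimum is exactly $\tfrac12$, then use the two-signal characterization from \Cref{biprivacyonetwo} and \Cref{biutility} to bound the achievable weight $\frac{\mu - q_{s_1}}{q_{s_2}-q_{s_1}}$ under $(\epsilon,\delta)$-DP by setting $q_{s_2}=t=2\mu$ and pushing $q_{s_1}$ to the boundary of the feasible region, which yields precisely the $\min$-of-two-branches bound stated in \Cref{gap:notwo}; the corollary then follows by substituting $\delta\leq 0.01$, $\epsilon\leq 0.5$. The two verification obstacles you flag (that $q_{s_2}=t$ is optimal rather than some interior $q_{s_2}>t$, and which quadratic constraint binds) are exactly the steps the paper handles implicitly, and your plan — monotonicity of the weight in both posteriors plus intersecting $q_{s_2}=t$ with the quadratic arcs — is the right way to close them.
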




To better illustrate how the optimal signaling scheme and the sender's maximum utility differ under varying privacy constraints, we show an example here.
\begin{example}\label{sep-example}
    An auctioneer aims to sell advertising space that will be displayed to a single web user, to a particular advertiser. The auctioneer possesses binary data about whether this user is in the advertiser's target customer segment. 
    The auctioneer's sole objective is that the advertiser purchases the ad space, that is
    \[
    v(1,\theta)=1, v(0,\theta)=0\text{, for } \theta\in\{0,1\}.
    \]
    For the advertiser, if he chooses not to buy the space, his utility is a constant $0$. If he chooses to buy the space, his utility is $1$ when the web user is the targeted customer and $-1$ when the web user is not.
    \[
    u(0,\theta)=0 \text{, for  }\theta\in\{0,1\},
    \]
    \[
    u(1,0)=-1, u(1,1)=1.
    \]
    Then the sender's utility can be written as 
   \[
    V(q) =
\begin{cases}
    0, & q< 0.5,\\
    1, &  q \geq 0.5.
\end{cases}
\]

There are several privacy considerations for the sender. One choice is that the signaling scheme needs to satisfy $\epsilon$-differential privacy.
Another choice is that it needs to satisfy $(\epsilon-\delta/2,\delta)$-differential privacy. We also consider the condition of no privacy constraint. Here we only consider $\epsilon$ and $\delta$ less than $1$.

The common prior $\mu$ that the user is a targeted customer is $(1-\zeta)/(1+e^\epsilon)$. Here $\zeta$ is an arbitrarily small positive number.
    
   
  Without privacy constraint, the optimal signaling scheme using two signals $s_1,s_2$ which satisfies 
  $q_{s_1}=0,\, q_{s_2}=0.5$
  can give a utility of $2(1-\zeta)/(1+e^\epsilon)$.

   Under $(\epsilon-\delta/2,\delta)$-differential privacy, the optimal signaling scheme using two signals $s_1,s_2$ which satisfies
   \[
   q_{s_1}=\frac{\frac{e^\epsilon-1-2\zeta}{e^\epsilon+\zeta}-\delta}{(\frac{1}{e^\epsilon+\zeta}+\frac{e^\epsilon}{1-\zeta})(e^\epsilon-1-2\zeta)-2\delta}, \, q_{s_2}=0.5
   \]
   can give a utility of 
   \[
   \frac{1-(1+\epsilon)q_{s_1}}{(0.5-q_{s_1})(1+e^\epsilon)}.
   \]

   While under $\epsilon$-differential privacy, the feasible area for the posterior is 
   \[
   \left[\frac{1-\zeta}{e^{2\epsilon}+1+(e^\epsilon-1)\zeta},\,\frac{e^\epsilon(1-\zeta)}{2e^\epsilon(1-\zeta)+(e^\epsilon+1)\zeta}<0.5\right].
   \] Therefore, no signaling scheme can be used to persuade the advertiser to buy, and the maximum utility of the advertiser is $0$.


For a large range of $\epsilon$ and small $\delta$, a significant utility gap exists between any two of the above constraints. 
To give a numerical example, we set $\epsilon=0.1$ and $\delta=0.01$, then $\mu=0.475(1-\zeta)$. Here we omit $\zeta$ and use $\mu$ as $0.475$.

Without privacy constraints, the sender can design the optimal signaling scheme $q_{s_1}=0,\,q_{s_2}=0.5$ to get a utility of $0.95$, which is nearly the maximum utility. This is shown in \Cref{example-no}.

Adding $(0.095,0.01)$-differential privacy, the optimal signaling scheme $q_{s_1}=0.445,\, q_{s_2}=0.5$ only guarantees a utility of $0.541$, which is about half of the utility under the condition without privacy constraints. This is shown in \Cref{example-pritwo}.

However, under $0.1$-differential privacy, even if $0.1$ is slightly larger than $0.095$, there doesn't exist a signaling scheme to help the sender get positive utility. This is shown in \Cref{example-prione}.
\end{example}
\begin{figure}
\begin{subfigure}{0.32\textwidth}
    \centering
    \begin{tikzpicture}[scale=0.55]
\begin{axis}[
    xlabel={Posterior},   
    ylabel={Sender's utility},   
    axis lines=middle,  
    xmin=0, xmax=1,   
    ymin=0, ymax=1,   
    xtick={1},
    ytick={1},
    legend pos=north east,  
    legend style={font=\Large},
    xlabel style={
        at={(rel axis cs:0.9,0.05)},  
        anchor=west,  
        font=\Large,  
    },
    ylabel style={
        at={(rel axis cs:0,1)},  
        anchor=south,  
        font=\Large,  
    },
    samples=100,  
    clip=false,  
]
\addplot[domain=0:0.5, color=blue, very thick]{0};  

\addlegendentry{$\varv(q)$}
\addplot[
    color=red,
    mark=*,mark size=2pt, densely dashed,
    nodes near coords, 
    point meta=explicit symbolic,
    nodes near coords style={font=\Large, inner sep=2pt} 
    ]
    table [meta=label] {
        x    y    label
        0.5 1    $q_{s_2}$
        0 0    $q_{s_1}$
    };

    



  \addplot[domain=0.5:1, color=blue, very thick]{1};     
  

 \addplot[
    color=red,
    mark=*,
    mark size=2pt,
    nodes near coords={\Large},
    nodes near coords style={font=\Large, anchor=south,yshift=-20pt}
] coordinates {(0.4504, 0.908)};

 \addplot[
    color=black,
    mark=*,
    mark size=2pt,
    nodes near coords={$\mu$},
    nodes near coords style={font=\Large, anchor=south,yshift=-19pt}
] coordinates {(0.4504, 0)};
\addplot[
    color=black,
    mark=*,
    mark size=2pt,
    nodes near coords={$t$},
    nodes near coords style={font=\Large, anchor=south,yshift=-17pt}
] coordinates {(0.5, 0)};
\addplot[
    color=black,
    mark=*,
    mark size=2pt,
    nodes near coords={$v_{max}$},
    nodes near coords style={font=\Large, anchor=south,yshift=-15pt}
] coordinates {(0, 0.908)};

\draw[densely dotted] (axis cs:0, 0.908) -- (axis cs:1, 0.908);
\draw[densely dotted] (axis cs:0.4504, 0) -- (axis cs:0.4504, 1.01);
\draw[densely dotted] (axis cs:0.5, 0) -- (axis cs:0.5, 1.01);


\end{axis}
\end{tikzpicture}  
    \caption{No privacy}
    \label{example-no}
\end{subfigure}
\hfill
\begin{subfigure}{0.32\textwidth}
    \centering
    \begin{tikzpicture}[scale=0.55]
\begin{axis}[
    xlabel={Posterior},   
    ylabel={Sender's utility},   
    axis lines=middle,  
    xmin=0, xmax=1,   
    ymin=0, ymax=1,   
    xtick={0,1},
    ytick={0,1},
    legend pos=north east,  
    legend style={font=\Large},
    xlabel style={
        at={(rel axis cs:0.9,0.05)},  
        anchor=west,  
        font=\Large,  
    },
    ylabel style={
        at={(rel axis cs:0,1)},  
        anchor=south,  
        font=\Large,  
    },
    samples=100,  
    clip=false,  
]
\addplot[domain=0:0.5, color=blue, very thick]{0};  

\addlegendentry{$\varv(q)$}

   \addplot[
    color=red,
    mark=*, mark size=2pt,densely dashed,
    nodes near coords, 
    point meta=explicit symbolic,
    nodes near coords style={font=\Large, inner sep=2pt} 
    ]
    table [meta=label] {
        x    y    label
        0.5 1    $q_{s_2}$
        0.4 0    $q_{s_1}$
    };
    



  \addplot[domain=0.5:1, color=blue, very thick]{1};     
  
   \addplot[
    color=red,
    mark=*,
    mark size=2pt,
    nodes near coords={\Large},
    nodes near coords style={font=\Large, anchor=south}
] coordinates {(0.4504, 0.5040)};


\addplot[
    color=black,
    mark=*,
    mark size=2pt,
    nodes near coords={$\mu$},
    nodes near coords style={font=\Large, anchor=south,yshift=-19pt}
] coordinates {(0.4504, 0)};
\addplot[
    color=black,
    mark=*,
    mark size=2pt,
    nodes near coords={$t$},
    nodes near coords style={font=\Large, anchor=south,yshift=-17pt}
] coordinates {(0.5, 0)};
\addplot[
    color=black,
    mark=*,
    mark size=2pt,
    nodes near coords={$v_{max}$},
    nodes near coords style={font=\Large, anchor=south,yshift=3pt}
] coordinates {(0, 0.5040)};

\draw[densely dotted] (axis cs:0, 0.5040) -- (axis cs:1, 0.5040);
\draw[densely dotted] (axis cs:0.4504, 0) -- (axis cs:0.4504, 1.01);
\draw[densely dotted] (axis cs:0.5, 0) -- (axis cs:0.5, 1.01);


\end{axis}
\end{tikzpicture}  
    \caption{$(0.095,0.01)$-differential privacy}
    \label{example-pritwo}
\end{subfigure}
\hfill
\begin{subfigure}{0.32\textwidth}
    \centering
    \begin{tikzpicture}[scale=0.55]
\begin{axis}[
    xlabel={Posterior},   
    ylabel={Sender's utility},   
    axis lines=middle,  
    xmin=0, xmax=1,   
    ymin=0, ymax=1,   
    xtick={0,1},
    ytick={1},
    legend pos=north east,  
    legend style={font=\Large},
    xlabel style={
        at={(rel axis cs:0.9,0.05)},  
        anchor=west,  
        font=\Large,  
    },
    ylabel style={
        at={(rel axis cs:0,1)},  
        anchor=south,  
        font=\Large,  
    },
    samples=100,  
    clip=false,  
]
\addplot[domain=0:0.5, color=blue, very thick]{0};  

\addlegendentry{$\varv(q)$}

    

\draw [decorate,decoration={brace,amplitude=5pt},color=red]
    (axis description cs:0.4,0) -- (axis description cs:0.5,0);

\node[above=5pt,font=\Large,color=red] at (axis description cs:0.45,0) {Feasible region};

  \addplot[domain=0.5:1, color=blue, very thick]{1};     
  


\addplot[
    color=black,
    mark=*,
    mark size=2pt,
    nodes near coords={$\mu$},
    nodes near coords style={font=\Large, anchor=south,yshift=-19pt}
] coordinates {(0.4504, 0)};
\addplot[
    color=black,
    mark=*,
    mark size=2pt,
    nodes near coords={$t$},
    nodes near coords style={font=\Large, anchor=south,yshift=-17pt}
] coordinates {(0.5, 0)};
\addplot[
    color=black,
    mark=*,
    mark size=2pt,
    nodes near coords={$v_{max}$},
    nodes near coords style={font=\Large, anchor=south,yshift=3pt}
] coordinates {(0, 0)};

\draw[densely dotted] (axis cs:0.4504, 0) -- (axis cs:0.4504, 1.01);
\draw[densely dotted] (axis cs:0.5, 0) -- (axis cs:0.5, 1.01);


\end{axis}
\end{tikzpicture}  
    \caption{$0.1$-differential privacy}
    \label{example-prione}
\end{subfigure}
\caption{Different privacy constraints in \Cref{sep-example}. $t=0.5$ and $\mu=0.475$. The optimal signaling schemes $(q_{s_1},q_{s_2})$ without privacy constraints and under $(0.095,0.01)$-differential privacy are shown in (a) and (b). Under $0.1$-differential privacy, no signaling scheme can guarantee a positive utility, and the feasible posterior region is shown in (c). }
\end{figure}
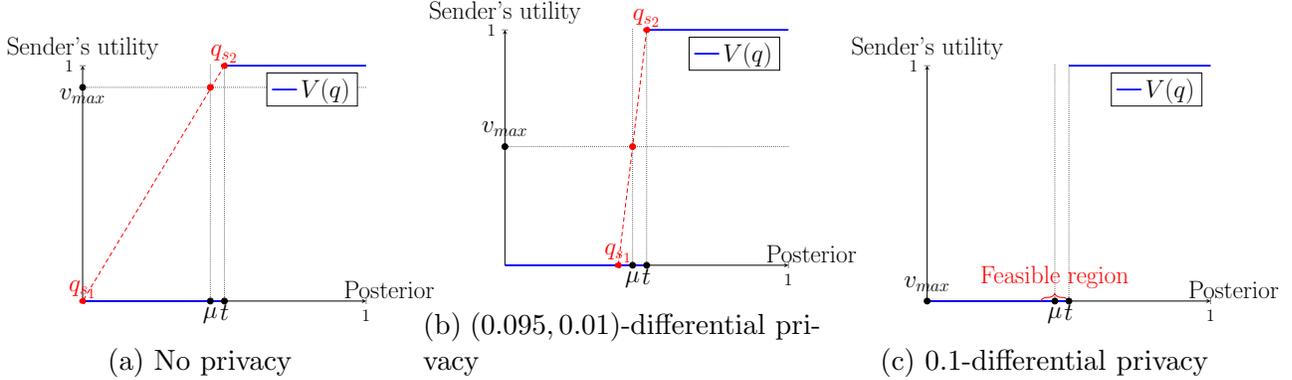


\section{Characterization of Privacy-Constrained Bayesian Persuasion}\label{sec-general-characterization}
Now we turn to general cases where there are multiple agents and $\theta$ is an n-dimensional binary vector. By using techniques in previous information design literature (\cite{babichenko2021bayesian, doval2023constrained, 10.1145/3490486.3538302}), we obtain general geometric characterization for different kinds of differential privacy constraints. For \privacyone, the optimal signaling scheme is finding the concave hull of the objective function $\varv$ in a feasible region (\Cref{genechaone}).
However, for \privacytwo privacy, there doesn't exist a fixed feasible region to find a concave hull. Through enlarging the posterior space and modifying objective function $\varv$, we can still relate the problem to a concave hull in a constrained region (\Cref{thm:gene-char}).
These geometrical characterizations also provide useful methods for determining whether the sender can benefit from persuasion (\Cref{bene-pri-one}, \Cref{bene-pri-two}).

To characterize the optimal signaling scheme and the maximum utility of the sender, we now give a formal definition of concave hull. The concave hull at $x$ of the function $\varv$ in a constrained region $K$ is defined as
\[
    \hullvarv(x\mid K)\coloneqq \max_{Q\subseteq K}\left(\sum_{q\in Q}\tau(q)\varv(q):\sum_{q\in Q}\tau(q)q=x,\sum_{q\in Q}\tau(q)=1,\tau(q)\geq 0\text{ for all }q\in Q\right).
\]
We also use a function $S_{\varv}$ to represent the set $Q$ that make up the concave hull of $\varv$ at $x$:
\[
S_{\varv}(x\mid K)\coloneqq \argmax_{Q\subseteq K}\left(\sum_{q\in Q}\tau(q)\varv(q):\sum_{q\in Q}\tau(q)q=x,\sum_{q\in Q}\tau(q)=1,\tau(q)\geq 0\text{ for all }q\in Q\right).
\]

We use $k$ to denote the number of states, and $m$ to denote the number of pairs of adjacent states in $\adjacent$ which is also the number of privacy constraints\footnote{For partial privacy, we replace $\adjacent$ with $\adjacent_\pset$.}.
Note that in the most general setting, both $k$ and $m$ can scale exponentially. However, under reasonable assumptions stated later, we will show that both quantities can be restricted to polynomial growth (see \Cref{thm:computational}).

\paragraph{Characterization for \privacyone}
\citet{10.1145/3490486.3538302} discuss the characterization of $\epsilon$-differentially private information design problem, where the sender's utility function is the same as the receiver's. We can directly extend Theorem 1 in \cite{10.1145/3490486.3538302} to obtain that the optimal signaling scheme for Bayesian persuasion under \privacyone is related to the concave hull of $\varv$ in a region of $\epsilon$-differentially private posteriors, which is a closed convex polyhedron. Mathematically, the feasible region $K(\mu,\epsilon)$ is $\{q\in\Delta(\Theta):q(\theta)\mu(\theta^\prime)/(q(\theta^\prime)\mu(\theta))\leq e^\epsilon\text{ for all }(\theta,\theta^\prime)\in \adjacent\}$.
Also, $k$ signals suffice for an optimal signaling scheme under \privacyone.

\begin{proposition}[\citet{10.1145/3490486.3538302}]\label{genechaone} 
For \privacyone, there exists a valid optimal signaling scheme such that
\begin{itemize}
    \item The scheme uses $|S_{\varv}(\mu\mid K(\mu,\epsilon))|$ signals, which is not larger than $k$.
    \item The scheme induces posteriors in set $S_{\varv}(\mu\mid K(\mu,\epsilon))$. 
    \item The optimal utility of the sender is $\hullvarv(\mu\mid K(\mu,\epsilon))$. 
\end{itemize}
\end{proposition}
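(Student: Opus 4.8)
The plan is to reduce the claim to the standard concave-hull characterization of Bayesian persuasion (the Kamenica--Gentzkow concavification, as used for constrained persuasion in \cite{babichenko2021bayesian, doval2023constrained}) by showing that the \privacyone constraints in Program \eqref{program-posterior} are exactly equivalent to the requirement that every posterior in the support of $\tau$ lie in the fixed polyhedron $K(\mu,\epsilon)$. Concretely, I would first take the special-case form of the privacy constraint in the posterior space, namely $q_s(\theta)\mu(\theta')/(q_s(\theta')\mu(\theta))\le e^\epsilon$ for all $q_s\in\mathrm{supp}(\tau)$ and all $(\theta,\theta')\in\adjacent$, and observe that this is a per-posterior constraint with no coupling across signals (unlike the $\delta>0$ case). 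Hence Program \eqref{program-posterior} becomes: choose $\tau\in\Delta(K(\mu,\epsilon))$ with $\mathbb{E}_{q\sim\tau}[q]=\mu$ maximizing $\mathbb{E}_{q\sim\tau}[\varv(q)]$. I should check that $\mu\in K(\mu,\epsilon)$ (it satisfies each constraint with ratio $1\le e^\epsilon$), so the feasible set is nonempty, and that $K(\mu,\epsilon)$ is a closed convex polyhedron inside $\Delta(\Theta)$ (each constraint $q(\theta)\mu(\theta')\le e^\epsilon q(\theta')\mu(\theta)$ is linear in $q$), so the optimization is exactly a concavification of $\varv$ over $K(\mu,\epsilon)$ evaluated at $\mu$.

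Next I would invoke the definition of $\hullvarv(\mu\mid K(\mu,\epsilon))$ and $S_{\varv}(\mu\mid K(\mu,\epsilon))$ given just above the statement: by construction the optimal value of the reformulated program equals $\hullvarv(\mu\mid K(\mu,\epsilon))$, and any maximizing set $Q\in S_{\varv}(\mu\mid K(\mu,\epsilon))$ with its weights $\tau$ furnishes a feasible distribution over posteriors, which by the standard equivalence (splitting lemma) corresponds to a signaling scheme using $|Q|$ signals that induces exactly the posteriors in $Q$. This gives the second and third bullets. For the first bullet I would apply a Carathéodory-type argument: since $\mu$ lies in the convex hull of the graph of $\varv$ restricted to $K(\mu,\epsilon)$, which sits in $\Delta(\Theta)$ (an affine space of dimension $k-1$), Carathéodory's theorem in dimension $k$ (accounting for the extra "height" coordinate for $\varv$, but the normalization $\sum\tau(q)=1$ removes one) guarantees an optimal $Q$ with $|Q|\le k$; I should state this carefully so the bound $|S_{\varv}(\mu\mid K(\mu,\epsilon))|\le k$ is justified, perhaps noting that $S_{\varv}$ can be taken to be such a minimal support.

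Finally I would remark that this is precisely the argument of \citet[Theorem 1]{10.1145/3490486.3538302} with the receiver's utility replaced by the sender's utility $\varv$ in the objective; since their proof of the geometric characterization does not use the alignment of sender and receiver utilities (alignment only matters for their subsequent results on what the optimum looks like), it transfers verbatim, and I would cite it rather than reproduce it. The main obstacle, such as it is, is purely expository: being precise that the posterior-space privacy constraint genuinely decouples into membership in a \emph{fixed} polyhedron (the contrast with \Cref{thm:gene-char} for $\delta>0$ is the whole point), and getting the signal-count bound via Carathéodory with the right dimension count. There is no deep difficulty here; the content is in recognizing that \cite{10.1145/3490486.3538302}'s machinery applies unchanged.
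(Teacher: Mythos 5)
Your proposal is correct and matches the paper's reasoning: the paper gives no explicit proof but cites \citet{10.1145/3490486.3538302} together with the ex-post-constraint bound of \citet{babichenko2021bayesian}, and your reconstruction---decoupling the $\epsilon$-DP constraints into membership in the fixed polyhedron $K(\mu,\epsilon)$, concavifying $\varv$ over it, and bounding the support via Carath\'eodory at a boundary point of the lifted hypograph---is exactly that argument. The only point to tighten is to say explicitly that the bound $|Q|\le k$ uses the boundary (extreme-point) form of Carath\'eodory rather than the generic $k+1$ bound, which you already flagged needing care over.
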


\begin{corollary}\label{bene-pri-one}
    Sender benefits from persuasion under \privacyone if and only if $\hullvarv(\mu\mid K(\mu,\epsilon))>\varv(\mu)$.
\end{corollary}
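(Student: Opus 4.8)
The plan is to read the statement off directly from \Cref{genechaone}. First I would recall the standard convention: the sender \emph{benefits from persuasion} if her optimal expected utility over all feasible (here, \privacyone) signaling schemes is strictly larger than what she gets from the uninformative scheme that discloses nothing about $\theta$. Under that scheme every signal induces the posterior $\mu$ itself, so the baseline utility equals $\varv(\mu)$.

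Next I would record two elementary facts. The uninformative scheme satisfies \privacyone trivially: since $\pi(s\mid\theta)=\pi(s\mid\theta^\prime)$ for every signal $s$ and every adjacent pair $(\theta,\theta^\prime)$, the ratio bound holds even with $\epsilon=0$; hence the sender's optimal \privacyone utility is always at least $\varv(\mu)$. Equivalently, in posterior space $\mu\in K(\mu,\epsilon)$ because $q(\theta)\mu(\theta^\prime)/(q(\theta^\prime)\mu(\theta))=1\le e^\epsilon$ when $q=\mu$, so taking the degenerate distribution $\tau$ supported on $\mu$ already witnesses $\hullvarv(\mu\mid K(\mu,\epsilon))\ge\varv(\mu)$, and the concave hull at $\mu$ is well defined.

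Then I would invoke the third bullet of \Cref{genechaone}, which identifies the sender's optimal \privacyone utility with $\hullvarv(\mu\mid K(\mu,\epsilon))$. Combining this equality with the baseline value $\varv(\mu)$ yields the claim: the sender strictly benefits from persuasion if and only if $\hullvarv(\mu\mid K(\mu,\epsilon))>\varv(\mu)$.

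There is essentially no obstacle. The only point that deserves an explicit line is checking that the no-revelation scheme is admissible under the privacy constraint, so that $\varv(\mu)$ is both attainable and the correct comparison value, and that $\mu$ lies in $K(\mu,\epsilon)$ so the hull dominates $\varv(\mu)$; both are immediate from the form of the \privacyone constraint, as noted above.
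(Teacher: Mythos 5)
Your argument is correct and matches what the paper intends: the corollary is presented as an immediate consequence of \Cref{genechaone} (the paper does not write out a separate proof), and you correctly identify the baseline $\varv(\mu)$, verify admissibility of the uninformative scheme and $\mu\in K(\mu,\epsilon)$, and then read the equivalence off the third bullet of the proposition.
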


\paragraph{Characterization for \privacytwo}
However, for \privacytwo, there doesn't exist such $K(\mu,\epsilon)$ as \privacyone since the feasible region of one posterior depends on the specific value of other posteriors, which can be seen in \Cref{biprivacyonetwo}. 
We first review some important results from previous work about constrained information design, which is a more general model. Here is the formulation \cite{babichenko2021bayesian}.
\begin{align}\label{program-infodesign}
&\max_{\tau} \quad \mathbb{E}_{q\sim \tau}[f(q)] \notag\\
& \begin{array}{r@{\quad}l@{}l@{\quad}l}
s.t.& \mathbb{E}_{q\sim \tau}[q(\theta)]=\mu(\theta), &\text{ for all }\theta\in \Theta\\
& \mathbb{E}_{q\sim\tau}[g_i(q)]\leq c_i, &\text{ for }i=1,2,\cdots,I\\
& h_j(q)\leq d_j, &\text{ for all }q\in supp(\tau)\text{ and }j=1,2,\cdots,J
\end{array} ,
\end{align}
where the objective is still to pick a distribution over posteriors that has expectation $\mu$ and satisfies $I+J$ extra constraints. 

Here \citet{babichenko2021bayesian} actually define two general families of constraints: \emph{ex ante} and \emph{ex post}. A constraint of the latter type restricts the admissible values of a certain function of posteriors for every possible posterior, while a constraint of the former type restricts only the expectation of such a function. Here we use the posterior distribution $\tau$ to denote the signaling scheme.

\begin{definition}[Ex ante Constraints \cite{babichenko2021bayesian}]
An \emph{ex ante constraint} on a signaling scheme $\tau$ is a constraint of the form
$\mathbb{E}_{q\sim\tau}[g(q)]\leq c$ for continuous $g:\Delta(\Theta)\rightarrow \mathbb{R}$ and a constant $c\in\mathbb{R}$.
    
\end{definition}

\begin{definition}[Ex post Constraints \cite{babichenko2021bayesian}]
An \emph{ex post constraint} on a signaling scheme $\tau$ is a constraint of the form
$\forall q\in supp(\tau), h(q)\leq d$ for continuous $h:\Delta(\Theta)\rightarrow \mathbb{R}$ and a constant $d\in\mathbb{R}$.
\end{definition}

For upper semi-continuous objective function $f$, \citet{doval2023constrained} prove that for every set of k states and m ex ante constraints, there exists a valid optimal signaling scheme with a support size of at most $k+m$.
\begin{lemma}[\cite{doval2023constrained}]\label{num-exante}
Fix $k$ states, $m$ ex ante constraints, and no ex post constraints. Then either there exists an optimal valid signaling scheme with support size at most $k+m$ or the set of valid signaling schemes is empty.
\end{lemma}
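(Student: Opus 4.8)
The plan is to prove \Cref{num-exante} by a dimension-counting / extreme-point argument on the polytope of feasible signaling schemes, in the spirit of the classical Carathéodory-type bounds used throughout the information-design literature. First I would set up the right infinite-dimensional linear program: a valid signaling scheme is a finitely-supported (or, more carefully, a Borel) probability measure $\tau$ on $\Delta(\Theta)$ satisfying the $k$ barycenter equalities $\mathbb{E}_{q\sim\tau}[q(\theta)]=\mu(\theta)$ and the $m$ ex ante inequalities $\mathbb{E}_{q\sim\tau}[g_i(q)]\le c_i$. Because $\Delta(\Theta)$ is compact and $f$ is upper semi-continuous and the $g_i$ are continuous, I would first argue that if the feasible set is nonempty then an optimal $\tau^\star$ exists (weak-$*$ compactness of the set of feasible measures plus upper semi-continuity of $\tau\mapsto\int f\,d\tau$); this disposes of the dichotomy in the statement.

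Next, the core step: take an optimal $\tau^\star$ and show it can be replaced by one supported on at most $k+m$ points without changing its barycenter, without violating the ex ante constraints, and without decreasing the objective. The clean way is to consider the set of probability measures $\tau$ on $\mathrm{supp}(\tau^\star)$ (a fixed compact set) that (i) have the same barycenter $\mu$, (ii) satisfy $\int g_i\,d\tau = \int g_i\,d\tau^\star$ for all $i$ (I fix the ex ante values rather than just their inequalities, which is legitimate since equalities are stronger), and (iii) attain objective value $\ge \int f\,d\tau^\star$. Among these, pick one minimizing the support, or better, view this as a feasibility region cut out inside the (infinite-dimensional) simplex of measures by $k$ linear equalities from the barycenter condition, $m$ linear equalities from the ex ante values, and one linear inequality $\int f\,d\tau \ge \int f\,d\tau^\star$ which at $\tau^\star$ holds with equality. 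An extreme point of this region — which exists by Krein–Milman once one checks the region is nonempty, convex, and weak-$*$ compact — is supported on at most $k+m+1$ points. Then I would observe that the objective-value constraint being tight at $\tau^\star$ means $\tau^\star$ lies on that face, but to shave off the extra point one notes that at an optimum the objective inequality is either slack (in which case it imposes no dimension and we get $k+m$ directly) or we can perturb along it; the standard trick is that the extreme point of $\{\tau:\text{same barycenter, same ex ante values}\}$ already has support $\le k+m$ and automatically has objective value $\ge \int f\,d\tau^\star$ is \emph{not} guaranteed, so instead one shows the optimum is attained at such an extreme point by an exchange argument: if $\tau^\star$ has support $> k+m$, the $k+m$ linear functionals (barycenter coordinates, ex ante integrals) restricted to measures on $\mathrm{supp}(\tau^\star)$ have a kernel of dimension $\ge 1$, giving a signed measure $\nu$ with $\int 1\,d\nu=0$, $\int q(\theta)\,d\nu=0$, $\int g_i\,d\nu=0$; moving $\tau^\star\pm t\nu$ keeps all constraints satisfied for small $t$ and changes the objective linearly in $t$, so one direction does not decrease it and reaches the boundary of the simplex, zeroing out a support point. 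Iterating drives the support down to $k+m$.

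The main obstacle I anticipate is the interplay between the upper semi-continuity of $f$ and the perturbation argument: when $f$ is merely upper semi-continuous (not continuous or linear), the map $t\mapsto\int f\,d(\tau^\star+t\nu)$ need not be affine or even continuous, so the ``push in the good direction'' step is subtler than in the fully linear case of Carathéodory. I would handle this by a lifting trick — replace $f$ by its hypograph and work with the augmented state space $\Delta(\Theta)\times\mathbb{R}$, or equivalently enlarge the posterior vector by one coordinate carrying the value $f(q)$ — so that maximizing $\int f\,d\tau$ becomes maximizing a \emph{linear} functional (the last coordinate of the barycenter) over measures whose barycenter-plus-value-coordinate lies in the (closed, since $f$ is u.s.c.) hypograph region; then the $k$ barycenter equalities plus $m$ ex ante equalities plus the optimality of the value coordinate cut things down to $k+m$ support points by pure extreme-point reasoning, and u.s.c. is exactly what guarantees the relevant set stays closed so that maxima are attained and extreme points exist. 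This is essentially the argument of \citet{doval2023constrained}, which I would cite for the technical closure/compactness lemmas rather than re-derive in full; the contribution here is only to invoke it, so the proof in the paper can be short, recalling the statement and pointing to that reference for the dimension-counting details.
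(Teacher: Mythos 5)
The paper does not prove this lemma; it simply cites it as a result of \citet{doval2023constrained}, exactly as you suggest at the end of your proposal. So the comparison is really between your reconstruction and the standard argument in that reference, not between your sketch and anything the paper spells out.

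Your sketch is essentially the right argument, but it is more tangled than it needs to be in one place. Once you pass to a signed measure $\nu$ supported inside $\operatorname{supp}(\tau^\star)$ and lying in the kernel of the $k$ barycenter functionals (one of which is redundant with normalization) and the $m$ ex ante functionals, the map $t\mapsto \int f\,d(\tau^\star+t\nu)$ is affine in $t$ regardless of any continuity assumption on $f$, because $f$ is only being evaluated on the fixed finite (or fixed compact) support of $\tau^\star$; it is never evaluated at ``new'' posteriors. So the upper-semicontinuity worry you raise does not actually threaten the exchange step. Where u.s.c.\ is genuinely needed is exactly where you put it first: existence of an optimizer $\tau^\star$ over a weak-$*$ compact feasible set, and, if one prefers the hypograph/lifting route, closedness of the hypograph so that the relevant set of lifted measures is compact and extreme points exist. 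Relatedly, your detour through ``$k+m+1$ constraints, then shave one off'' is unnecessary: either run the exchange argument directly as above (the objective value never enters as a constraint, you just pick the sign of $t$ that does not decrease it), or do the lifting and observe that the $k$ barycenter equalities together with the $m$ pinned ex ante values already give $k+m$ affine constraints on the set of probability measures, so its extreme points have support at most $k+m$, and a linear functional (the lifted value coordinate) attains its maximum at such an extreme point by Bauer's maximum principle. One further bookkeeping point worth flagging: fixing the ex ante inequalities to equalities at the values realized by $\tau^\star$ is indeed legitimate, as you say, but it is worth noting explicitly that this only produces an upper bound of $k+m$; if some of the $m$ constraints are slack, the true support can be smaller, which is consistent with the lemma as stated. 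None of this changes the conclusion, and since the paper's role here is only to invoke the lemma, citing \citet{doval2023constrained} without reproducing the argument is the appropriate choice.
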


Note that \privacyone constraints satisfy the requirement of ex post constraints. \citet{babichenko2021bayesian} establish a stronger bound that for $k$ states, no ex nate constraints and a set of ex post constraints, $k$ signals suffice for the optimal signaling scheme, which matches with \Cref{genechaone}.

For Program \eqref{program-infodesign} without ex post constraints(i.e., $j=0$),
\citet{babichenko2021bayesian} connects the value of the program to the concave hull of a modified version of the objective function $f$, which is introduced next.

\paragraph{A modified objective function}
Let $C=\left\{(\tilde{\mu}, \tilde{c}) \in \Delta(\Theta) \times \mathbb{R}^{I}: g_I(\tilde{\mu}) \leq \tilde{c}_I\right\}$. That is, $C$ is the subset of $\Delta(\Theta)\times \mathbb{R}^{I}$ which satisfies a pointwise version of constraints in Program \eqref{program-infodesign}. Given $C$, define the function $f^{g}: \Delta(\Theta)\times\mathbb{R}^{I} \mapsto \mathbb{R} \cup\{ \pm \infty\}$ as follows,
\[
f^{g}(\tilde{\mu}, \tilde{c})=f(\tilde{\mu})-\delta(\tilde{\mu}, \tilde{c} \mid C),
\]
where $\delta(\tilde{\mu}, \tilde{c} \mid C)$ is the indicator function of $C$, taking value $0$ if $(\tilde{\mu}, \tilde{c})\in C$ and $+\infty$ otherwise.

\begin{lemma}[\cite{babichenko2021bayesian}]\label{cha-infodesign-exante}
    The value of Program \eqref{program-infodesign} when $j=0$ coincides with the value of concave hull of $f^g$ at $(\mu,c)$.
\end{lemma}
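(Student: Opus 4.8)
The plan is to establish the claimed equality by proving the two inequalities separately, pairing a \emph{lifting} map that augments each posterior with slack coordinates against the coordinate \emph{projection} that discards them. Throughout, note that $f^g$ equals $f$ on the effective domain $C=\{(\tilde\mu,\tilde c)\in\Delta(\Theta)\times\mathbb{R}^I: g_i(\tilde\mu)\le \tilde c_i\text{ for all }i\}$ and equals $-\infty$ off it, so the concave hull of $f^g$ at $(\mu,c)$ is the supremum of $\mathbb{E}_{(\tilde\mu,\tilde c)\sim\rho}[f(\tilde\mu)]$ over distributions $\rho$ on $\Delta(\Theta)\times\mathbb{R}^I$ that are supported on $C$ and have barycenter $(\mu,c)$.

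First I would bound the program value from above by the concave-hull value. Take any $\tau$ feasible for Program~\eqref{program-infodesign} with $j=0$, set the nonnegative slacks $\sigma_i\coloneqq c_i-\mathbb{E}_{q\sim\tau}[g_i(q)]$, and define $\Phi(q)\coloneqq\bigl(q,\,g_1(q)+\sigma_1,\dots,g_I(q)+\sigma_I\bigr)$. Since $g_i(q)+\sigma_i\ge g_i(q)$, every point $\Phi(q)$ lies in $C$, so $f^g(\Phi(q))=f(q)$; the image of $\tau$ under $\Phi$ has barycenter $\bigl(\mathbb{E}_\tau[q],\,(\mathbb{E}_\tau[g_i(q)]+\sigma_i)_i\bigr)=(\mu,c)$ and $f^g$-expectation $\mathbb{E}_{q\sim\tau}[f(q)]$. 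Hence it is admissible in the problem defining the concave hull at $(\mu,c)$ and attains the same objective value; taking the supremum over feasible $\tau$ yields the inequality.

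Next I would prove the reverse inequality. Take a distribution $\rho$ on $\Delta(\Theta)\times\mathbb{R}^I$ with barycenter $(\mu,c)$ achieving (or, along a sequence, approaching) the concave-hull value; we may assume $\rho$ is supported on $C$, since otherwise $\mathbb{E}_\rho[f^g]=-\infty$. Let $\tau$ be the image of $\rho$ under the projection $(\tilde\mu,\tilde c)\mapsto\tilde\mu$. Then $\mathbb{E}_{q\sim\tau}[q]=\mathbb{E}_\rho[\tilde\mu]=\mu$, and since $g_i(\tilde\mu)\le\tilde c_i$ on $C$ we get $\mathbb{E}_{q\sim\tau}[g_i(q)]=\mathbb{E}_\rho[g_i(\tilde\mu)]\le\mathbb{E}_\rho[\tilde c_i]=c_i$, so $\tau$ is feasible for the program; moreover $\mathbb{E}_{q\sim\tau}[f(q)]=\mathbb{E}_\rho[f(\tilde\mu)]=\mathbb{E}_\rho[f^g(\tilde\mu,\tilde c)]$ equals the concave-hull value. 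This also handles the degenerate case: if the program is infeasible, then projecting any distribution on $C$ with barycenter $(\mu,c)$ would produce a feasible $\tau$, so no such distribution exists and the concave hull of $f^g$ at $(\mu,c)$ is $-\infty$ as well.

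The conceptual content above is essentially bookkeeping; the part requiring genuine care --- and where the hypotheses on $f$ and $g$ are used --- is the topological hygiene behind the phrase ``value of the concave hull''. I would need to check that it suffices to consider finitely supported distributions in the lifted space $\Delta(\Theta)\times\mathbb{R}^I$ (a Carath\'eodory argument in dimension $(k-1)+I$, in the spirit of \Cref{num-exante}), that the supremum defining the concave hull is attained (via upper semi-continuity of $f$), and that no maximizing sequence escapes along the unbounded $\mathbb{R}^I$ directions --- for the last point one uses that each continuous $g_i$ is bounded on the compact set $\Delta(\Theta)$, so $\tilde c_i$ is bounded below on $C$, which together with the fixed barycenter $c_i$ rules out mass drifting to $+\infty$. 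This is the main obstacle, though it is standard and parallels the analysis in \cite{babichenko2021bayesian}.
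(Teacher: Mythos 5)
Your proof is correct; note, however, that the paper itself does not prove this lemma at all---it is imported verbatim from \cite{babichenko2021bayesian}, so there is no ``paper's own proof'' to compare against. Your lift/project pairing is exactly the natural argument: pushing a feasible $\tau$ forward under $q\mapsto(q,g_1(q)+\sigma_1,\dots,g_I(q)+\sigma_I)$ with $\sigma_i=c_i-\mathbb{E}_\tau[g_i(q)]\ge 0$ produces an admissible lifted distribution supported in $C$ with the right barycenter, while projecting an admissible lifted $\rho$ back to $\Delta(\Theta)$ yields a feasible $\tau$ with the same objective because $g_i(\tilde\mu)\le\tilde c_i$ on $C$. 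One small sharpening worth making explicit: in the projection direction, the identity $\mathbb{E}_\tau[f(q)]=\mathbb{E}_\rho[f^g]$ uses not only that $\rho$ is supported on $C$ but also that $f^g=f$ there, which is the whole point of the indicator penalty---you state this at the top but it is the crux, so it deserves to be flagged at the moment it is used. Your closing remarks about finite support (Carath\'eodory in dimension $(k-1)+I$, cf.\ \Cref{num-exante}), attainment via upper semi-continuity, and ruling out escape along the $\mathbb{R}^I$ coordinates (each $g_i$ bounded below on the compact $\Delta(\Theta)$, so $\tilde c_i$ bounded below on $C$, and the barycenter pins the mean) are the genuine hypotheses-laden part and are identified correctly; this is precisely the content one would expect the original proof in \cite{babichenko2021bayesian} to carry.
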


We now turn to \privacytwo. Under \privacytwo, the program can be written as 
\begin{align*}
&\max_{\tau} \mathbb{E}_{q_s\sim \tau}[\varv(q_s)] \\
& \begin{array}{r@{\quad}l@{}l@{\quad}l}
s.t.& \mathbb{E}_{q_s\sim \tau}[q_s(\theta)]=\mu(\theta), &\text{ for all }\theta\in \Theta,\\
& \sum_{q_s\in Q}\frac{q_s(\theta)\tau(q_s)}{\mu(\theta)}\leq e^\epsilon\sum_{q_s\in Q}\frac{q_s(\theta^\prime)\tau(q_s)}{\mu(\theta)}+\delta, &\text{ for all }Q\subset supp(\tau) \text{ and }(\theta,\theta^\prime)\in \adjacent
\end{array}.
\end{align*}

Here the form of privacy constraints poses difficulty to the characterization since it has to traverse all possible subsets of posteriors.
An equivalent but much easier form is
\[
\mathbb{E}_{q_s\sim \tau}\left[\max\left\{0,\left(\frac{q_s(\theta)}{\mu(\theta)}-e^\epsilon\frac{q_s(\theta^\prime)}{\mu(\theta^\prime)}\right)\right\}\right]\leq \delta,  \text{ for all }(\theta,\theta^\prime)\in\adjacent.
\]

A nice observation here is that \privacytwo exactly satisfies the form of ex ante constraints.

Note that $\varv$ is upper semi-continuous in our discrete setting. Also, releasing nothing but the prior implies the existence of a valid signaling scheme, then \Cref{num-exante} shows an optimal signaling scheme under \privacytwo needs not to use more than $k+m$ signals. Also, \Cref{cha-infodesign-exante} implies a geometrical characterization for \privacytwo, which connects the optimal signaling scheme to the concave hull of a modified version of the objective function $\varv$. 

\paragraph{A modified objective function for \privacytwo}
We modify the function $\varv$ to take two parts as input rather than a single posterior. The first part is the original posterior. The second part is an $m$-dimensional input corresponding to each adjacent state pair in the set $\adjacent$. Mathematically, the modified function $\varvtwo: \Delta(\Theta)\times\mathbb{R}^{m}\mapsto \mathbb{R}$ is
\[
    \varvtwo(q,\gamma)=\varv(q).
\]
As for \privacyone, we also define a region for \privacytwo,
\[
C(\mu,\epsilon):=\left\{(q,\gamma)\in \Delta(\Theta)\times \mathbb{R}^{m}:\max\left\{0,\left(\frac{q(\theta)}{\mu(\theta)}-e^\epsilon\frac{q(\theta^{\prime})}{\mu(\theta^{\prime})}\right)\right\}\leq \gamma_{\theta,\theta^\prime}, \text{ any }(\theta,\theta^\prime)\in\adjacent\right\}.
\]
An example is shown in \Cref{feasi-area-fig}.

Then we can characterize the optimal signaling scheme and maximum utility with the concave hull of $\varvtwo$.

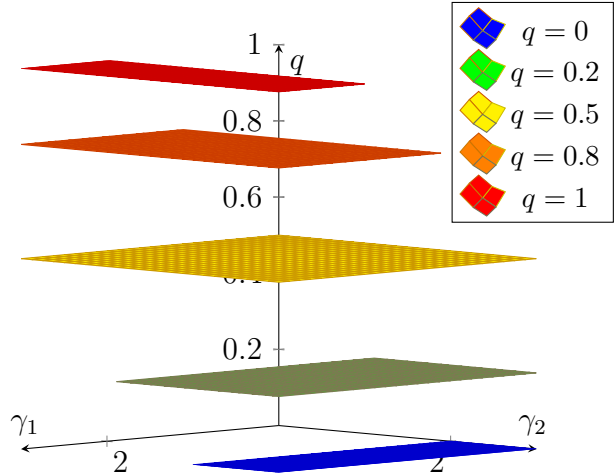
\begin{figure}
    \centering
    \begin{tikzpicture}
    \begin{axis}[
        view={135}{5},
        axis lines=center,
        xlabel={$\gamma_1$},
        ylabel={$\gamma_2$},
        zlabel={$q$},
        xmin=0,
        ymin=0,
        zmin=0,
        xmax=3,
        ymax=3,
        zmax=1,
        grid=major,
        legend pos=north east,  
    legend style={font=\small,at={(1.15,1.1)}},
    ]

    \addplot3[
        surf,
        opacity=0.5,
        color=blue, 
        domain=0:3,
        domain y=2:3,
    ] {0};
    \addlegendentry{$q=0$}

    \addplot3[
        surf,
        opacity=0.5,
        color=green, 
        domain=0:3,
        domain y=1.11144:3,
    ] {0.2};
    \addlegendentry{$q=0.2$}

    \addplot3[
        surf,
        opacity=0.5,
        color=yellow, 
        domain=0:3,
        domain y=0:3,
    ] {0.5};
    \addlegendentry{$q=0.5$}

    \addplot3[
        surf,
        opacity=0.5,
        color=orange, 
        domain=1.11144:3,
        domain y=0:3,
    ] {0.8};
    \addlegendentry{$q=0.8$}

    \addplot3[
        surf,
        opacity=0.5,
        color=red, 
        domain=2:3,
        domain y=0:3,
    ] {1};
    \addlegendentry{$q=1$}

    \end{axis}
\end{tikzpicture}
    \caption{The feasible region $C(\mu=0.5,\epsilon=0.2)$ for $(0.2,\delta)$-differential privacy under different posteriors while $\theta\in\{0,1\}$. Note that the region is $\delta$-independent. $\gamma_1$ corresponds to the adjacent pair $(\theta=1,\theta^\prime=0)$ and $\gamma_2$ corresponds to the adjacent pair $(\theta=0,\theta^\prime=1)$. $q$ represents the probability of state $1$.  }
    \label{feasi-area-fig}
\end{figure}

\begin{theorem}\label{thm:gene-char}
For \privacytwo, there exists a valid optimal signaling scheme such that
\begin{itemize}
    \item The scheme uses $|S_{\varvtwo}(\mu,\bm{\delta}\mid C(\mu,\epsilon))|$ signals, which is not larger than $k+m$. Here $\bm{\delta}$ is the $m$-dimension vector with all elements $\delta$.
    \item The scheme induces posteriors in set $S_{\varvtwo}(\mu,\bm{\delta}\mid C(\mu,\epsilon))$.
    \item The optimal utility of the sender is $\hullvarvtwo(\mu,\bm{\delta}\mid C(\mu,\epsilon))$.
\end{itemize}
\end{theorem}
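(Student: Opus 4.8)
The plan is to derive \Cref{thm:gene-char} as a direct instantiation of the general constrained-information-design machinery (\Cref{num-exante} and \Cref{cha-infodesign-exante}) already recalled in the excerpt, the main work being to verify that the privacy constraints fit the ex ante template cleanly. First I would recall that, under \privacytwo, Program \eqref{program-posterior} can be rewritten with the equivalent single constraint
\[
\E_{q_s\sim\tau}\Bigl[\max\Bigl\{0,\tfrac{q_s(\theta)}{\mu(\theta)}-e^\epsilon\tfrac{q_s(\theta^\prime)}{\mu(\theta^\prime)}\Bigr\}\Bigr]\le\delta
\]
for each adjacent pair $(\theta,\theta^\prime)\in\adjacent$; I would include a short lemma (or a sentence) justifying this equivalence, since the original constraint quantifies over all subsets $Q\subseteq\mathrm{supp}(\tau)$ and the worst $Q$ is exactly the set of posteriors with positive excess $\tfrac{q_s(\theta)}{\mu(\theta)}-e^\epsilon\tfrac{q_s(\theta^\prime)}{\mu(\theta^\prime)}$. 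Each such constraint has the form $\E_{q\sim\tau}[g_{\theta,\theta^\prime}(q)]\le\delta$ with $g_{\theta,\theta^\prime}(q)=\max\{0,q(\theta)/\mu(\theta)-e^\epsilon q(\theta^\prime)/\mu(\theta^\prime)\}$, which is continuous (a max of two linear functions), so it is a legitimate \emph{ex ante} constraint in the sense of the definition from \cite{babichenko2021bayesian}. Thus Program \eqref{program-posterior} is exactly an instance of Program \eqref{program-infodesign} with $I=m$ ex ante constraints, $J=0$ ex post constraints, objective $f=\varv$, constraint functions $g_i=g_{\theta,\theta^\prime}$ and thresholds $c_i=\delta$.

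Next I would establish the three bullets in turn. For the support-size bound: $\varv$ is upper semi-continuous on the discrete simplex $\Delta(\Theta)$ (it is a pointwise maximum over the finitely many receiver best-response regions of linear functions), and the set of valid signaling schemes is nonempty because the trivial scheme that always reveals the prior $\mu$ induces the single posterior $q=\mu$, for which $q(\theta)/\mu(\theta)-e^\epsilon q(\theta^\prime)/\mu(\theta^\prime)=1-e^\epsilon\le0$, hence $g_{\theta,\theta^\prime}(\mu)=0\le\delta$; so \Cref{num-exante} applies with $k$ states and $m$ ex ante constraints and yields an optimal valid scheme of support size at most $k+m$. For the characterization of the value and the inducing posteriors: I would specialize the modified objective of \cite{babichenko2021bayesian} to this instance. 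There, $C=\{(\tilde\mu,\tilde c)\in\Delta(\Theta)\times\mathbb R^{I}: g_i(\tilde\mu)\le\tilde c_i\text{ for all }i\}$ — which, written out with our $g_{\theta,\theta^\prime}$, is precisely the set $C(\mu,\epsilon)$ defined just before the theorem with coordinates $\gamma_{\theta,\theta^\prime}$ playing the role of the $\tilde c_i$ — and the modified objective $f^g(\tilde\mu,\tilde c)=f(\tilde\mu)-\delta(\tilde\mu,\tilde c\mid C)$ is exactly $\varvtwo(q,\gamma)=\varv(q)$ restricted to $C(\mu,\epsilon)$ (and $-\infty$ off it), since $\varvtwo$ ignores its second argument. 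Then \Cref{cha-infodesign-exante} says the value of the program equals the concave hull of $f^g$ evaluated at the point $(\mu,c)=(\mu,\bm\delta)$, i.e. $\hullvarvtwo(\mu,\bm\delta\mid C(\mu,\epsilon))$, giving the third bullet; and the optimal distribution over posteriors in the program is recovered from the optimal distribution over $(q,\gamma)$ pairs achieving the concave hull — its $q$-marginals are exactly $S_{\varvtwo}(\mu,\bm\delta\mid C(\mu,\epsilon))$ — giving the second bullet, while the support size of that optimal distribution (again at most $k+m$) gives the signal count in the first bullet.

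The one place requiring genuine care, rather than pure bookkeeping, is reconciling the two layers of "$\tau$" objects: the program's $\tau$ is a distribution over posteriors $q\in\Delta(\Theta)$, whereas \Cref{cha-infodesign-exante} produces a distribution over augmented points $(q,\gamma)\in\Delta(\Theta)\times\mathbb R^{m}$. I would argue the correspondence explicitly: given a feasible $\tau$ for Program \eqref{program-posterior}, lift each posterior $q$ in its support to $(q,g(q))$ where $g(q)=(g_{\theta,\theta^\prime}(q))_{(\theta,\theta^\prime)\in\adjacent}$ componentwise; then $(q,g(q))\in C(\mu,\epsilon)$ trivially, the barycenter condition $\E[q]=\mu$ is unchanged, and the ex ante constraint $\E_\tau[g_{\theta,\theta^\prime}(q)]\le\delta$ says precisely that the second-coordinate barycenter lies coordinatewise at or below $\bm\delta$, which — after possibly mixing in a bit of the prior-revealing posterior $(\mu,g(\mu))=(\mu,\bm 0)$ to pull the barycenter down to exactly $\bm\delta$, a move that does not change $\varvtwo$-expectation since we are taking a concave hull and the relation is "$\le$" — is equivalent to feasibility in the concave-hull program at $(\mu,\bm\delta)$. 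Conversely, projecting any concave-hull-achieving distribution onto its $q$-coordinate gives a feasible $\tau$ of the same value and no larger support. This two-way reduction, together with the cited lemmas, closes the proof; I do not expect any hard new estimates, only this careful identification of objects.
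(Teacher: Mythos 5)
Your overall route is the same as the paper's: rewrite the $(\epsilon,\delta)$-DP constraint, after collapsing the quantification over $Q\subseteq\mathrm{supp}(\tau)$, as the ex ante constraint $\E_{q\sim\tau}[\max\{0,\,q(\theta)/\mu(\theta)-e^\epsilon q(\theta')/\mu(\theta')\}]\le\delta$; verify upper semi-continuity of $\varv$ and nonemptiness (the babbling scheme at $q=\mu$); then invoke \Cref{num-exante} for the support bound $k+m$ and \Cref{cha-infodesign-exante} for the concave-hull value after matching $\varvtwo$ and $C(\mu,\epsilon)$ to the modified objective $f^g$ and its domain.

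There is, however, one step in your ``care'' paragraph that would not work as stated. You propose to lift each posterior $q$ to $(q,g(q))$ and then observe the $\gamma$-barycenter is componentwise $\le\bm\delta$, and to fix the resulting inequality-versus-equality mismatch you suggest ``mixing in a bit of the prior-revealing posterior $(\mu,g(\mu))=(\mu,\bm0)$ to pull the barycenter down to exactly $\bm\delta$.'' This is backwards twice over: a barycenter already at or below $\bm\delta$ needs to be pulled \emph{up}, not down, to hit $(\mu,\bm\delta)$; and adding weight at $(\mu,\bm0)$ pushes the $\gamma$-barycenter further toward $\bm 0$, making things worse. The correct and essentially free fix is that $C(\mu,\epsilon)$ only constrains $\gamma$ from below, so one should lift $q$ to $(q,\gamma_q)$ for any choice $\gamma_q\ge g(q)$ rather than committing to $\gamma_q=g(q)$, and then choose the $\gamma_q$'s so that $\E_\tau[\gamma_q]=\bm\delta$ exactly. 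Since $\varvtwo(q,\gamma)=\varv(q)$ is $\gamma$-independent, this costs nothing in objective value, and it is precisely the slack that the formulation of \Cref{cha-infodesign-exante} in \cite{babichenko2021bayesian} builds in by treating the $\tilde c$ coordinates as free variables. With that correction, the two-way correspondence you describe is exactly what closes the proof.
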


\begin{corollary}\label{bene-pri-two}
    Sender benefits from persuasion under \privacytwo if and only if $\hullvarvtwo(\mu,\bm{\delta}\mid C(\mu,\epsilon))>\varv(\mu)$.
\end{corollary}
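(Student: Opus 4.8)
The plan is to read this off directly from the geometric characterization in \Cref{thm:gene-char}, once we pin down what ``the sender benefits from persuasion'' means. As in the classical Bayesian persuasion setup, the sender benefits from persuasion precisely when some feasible signaling scheme yields strictly more expected sender utility than the \emph{uninformative} (babbling) scheme, under which the receiver's posterior stays at the prior $\mu$ and the sender's payoff is $\varv(\mu)$.

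First I would note that the babbling scheme is always $(\epsilon,\delta)$-differentially private: with $\pi(s\mid\theta)$ independent of $\theta$, every privacy inequality in Program \eqref{program-distribution} holds trivially, so $\varv(\mu)$ is always achievable. Hence the sender benefits from persuasion if and only if the sender's \emph{optimal} utility strictly exceeds $\varv(\mu)$. By \Cref{thm:gene-char}, that optimal utility equals $\hullvarvtwo(\mu,\bm{\delta}\mid C(\mu,\epsilon))$, and substituting gives the stated biconditional: the forward direction uses that the optimal utility dominates the utility of any feasible scheme, and the backward direction uses that the scheme realizing the concave hull is itself feasible and attains a value above $\varv(\mu)$.

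For internal consistency I would also verify $\hullvarvtwo(\mu,\bm{\delta}\mid C(\mu,\epsilon))\ge\varv(\mu)$ at the level of the concave-hull program itself: the point $(\mu,\bm{\delta})$ lies in $C(\mu,\epsilon)$ since $\max\{0,\,q(\theta)/\mu(\theta)-e^\epsilon q(\theta')/\mu(\theta')\}$ evaluated at $q=\mu$ is $\max\{0,1-e^\epsilon\}=0\le\delta$, so the degenerate distribution concentrated on $(\mu,\bm{\delta})$ is admissible and certifies the bound via $\varvtwo(\mu,\bm{\delta})=\varv(\mu)$. There is essentially no real obstacle here; the entire content sits in \Cref{thm:gene-char}, and the only point worth a careful sentence is that the relevant maximum is attained (equivalently, the supremum of achievable sender utilities equals the concave-hull value), which \Cref{thm:gene-char} already guarantees, so the biconditional with a strict inequality is well-posed.
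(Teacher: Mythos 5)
Your proof is correct and follows the same (implicit) reasoning the paper intends: the corollary is an immediate consequence of \Cref{thm:gene-char}, once one notes that the babbling scheme is always feasible (so $\varv(\mu)$ is the no-persuasion benchmark) and that the optimal achievable utility equals the concave-hull value. Your extra check that $(\mu,\bm{\delta})\in C(\mu,\epsilon)$ is a nice sanity verification, though not strictly needed once the babbling scheme is observed to be feasible in the original program.
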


Note that \Cref{bene-pri-one} and \Cref{bene-pri-two} provide a natural analysis of the condition where the sender can benefit from the persuasion.
We can also present some alternative, indirect analyses, building on the approach of \citet{kamenica2011bayesian}, see \Cref{sec: benefit}.



\section{Algorithmics of  Privacy-constrained Bayesian Persuasion}\label{sec-computational}
In this section, we turn to algorithmic aspects of the persuasion problem, returning to the formulation in Program \eqref{program-distribution} with signal distributions as variables. We consider a general setting with multiple receivers, which means not assuming separate sender's payoffs from each receiver any more. 
Here one concern is the exponential growth of the state space, thus exponentially more variables and constraints with more agents. Inspired by \citet{10.1145/3490486.3538302}, we show a reduction to a program with only polynomially many additional variables under a homogeneity assumption on agents, implying a polynomial time algorithm to find the optimal signaling scheme under privacy constraints (\Cref{thm:computational}).

Another key concern is an exponential action space (namely signal space) in the number of receivers, inducing exponentially many variables. However, under a homogeneity assumption on receivers, we exhibit an efficient separation oracle for the dual program, which yields a polynomial time algorithm (\Cref{computational-many}).

All missing proofs can be found in \Cref{proof_sec_computational}.

For the simplicity of presentation, we consider binary actions here, and the technique can be extended to constant many actions straightforwardly.
We first show that using the same number of signals as the number of actions is enough to give an optimal signaling scheme.

\begin{lemma}\label{num-signal-action} 
There always exists an optimal signaling scheme that uses at most signals as the number of receiver actions.
\end{lemma}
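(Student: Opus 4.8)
The plan is to prove \Cref{num-signal-action} by a revelation-principle-style coarsening: start from any optimal scheme and relabel every signal by the action (profile) it induces, merging all signals that induce the same profile. Concretely, fix an optimal scheme $(S,\pi)$ for Program~\eqref{program-distribution} (in its multi-receiver form), and for each receiver $i$ let $a_i(s_i)$ be the action $i$ takes upon seeing its coordinate $s_i$ (ties broken as in the given scheme; recall receiver $i$'s posterior, hence its action, depends only on $s_i$). Define a new scheme on the signal set $A^t$ by grouping,
\[
\pi'\big((a_1,\dots,a_t)\,\big|\,\theta\big)\ =\ \sum_{s:\ a_i(s_i)=a_i\ \forall i}\pi(s\mid\theta),
\]
so that receiver $i$'s coordinate is literally its recommended action. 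I then check the three required properties of $(A^t,\pi')$.

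First, validity and the objective: $\pi'$ is clearly a nonnegative conditional distribution summing to $1$, and the sender's value is unchanged, since in $(S,\pi)$ the receivers play exactly the profile $(a_1(s_1),\dots,a_t(s_t))$ on signal $s$, whence $\sum_{\theta,\vec a}\pi'(\vec a\mid\theta)\mu(\theta)v(\vec a,\theta)=\sum_{\theta,s}\pi(s\mid\theta)\mu(\theta)v\big((a_i(s_i))_i,\theta\big)$ is the original objective. Second, incentive compatibility: for receiver $i$ seeing coordinate $a_i$, the expected utility advantage of $a_i$ over any alternative $a'$ is, up to the positive normalizer, $\sum_{\theta}\big(u_i(a_i,\theta)-u_i(a',\theta)\big)\sum_{s:\ a_i(s_i)=a_i}\pi(s\mid\theta)\mu(\theta)$, which regroups as $\sum_{s_i:\ a_i(s_i)=a_i}\big[\text{receiver }i\text{'s advantage of }a_i(s_i)\text{ over }a'\text{ given }s_i\text{ in }(S,\pi)\big]$, a sum of nonnegative terms by the optimality (IC) of the original scheme. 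Third, privacy: $\pi'$ is obtained from $\pi$ by applying the deterministic map $s\mapsto(a_1(s_1),\dots,a_t(s_t))$ to the released signal, i.e.\ by post-processing, so it inherits \privacytwo (and \privacyone when $\delta=0$); for a self-contained check, every $W'\subseteq A^t$ pulls back to $W=\{\,s:(a_i(s_i))_i\in W'\,\}\subseteq S$ with $\sum_{\vec a\in W'}\pi'(\vec a\mid\theta)=\sum_{s\in W}\pi(s\mid\theta)$, so each privacy inequality for $\pi'$ is verbatim one of the inequalities already satisfied by $\pi$. Thus $(A^t,\pi')$ is feasible, private, optimal, and uses at most $|A|^t$ signals (equivalently, sends each receiver at most $|A|$ distinct messages), which for binary actions is $2^t$; the extension to any constant number of actions is identical.

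I expect the only genuine subtlety to be the multi-receiver incentive-compatibility step: because each receiver observes only its own coordinate and that coordinate now pools several original signal values, one must avoid arguing "best response to the pooled posterior" directly — best responses are not preserved under mixing of posteriors — and instead argue at the level of the linear IC inequalities, where the merged constraint is a nonnegative combination of constraints that already hold. Validity, invariance of the objective, and privacy-under-coarsening are routine once this is set up.
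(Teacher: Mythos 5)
Your coarsening-by-induced-action-profile is exactly the argument the paper uses: it merges any two signals inducing the same optimal action profile, notes the objective is unchanged, and checks that the \privacytwo constraints are preserved (via subadditivity of $\max\{0,\pi(s|\theta)-e^\epsilon\pi(s|\theta')\}$ under merging, which is the same fact as your pullback-of-subsets/post-processing observation). The only notable difference is that you also verify the merged IC constraint explicitly as a nonnegative combination of the original per-signal IC inequalities — a step the paper's proof leaves implicit — which is a worthwhile clarification given the multi-receiver setting.
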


Then $2^t$ signals suffice, corresponding to selecting a subset of receivers to suggest action 1, with others suggested action 0. We denote by $T\subseteq{1,\ldots,t}$ the signal targeting subset $T$ for action 1. We slightly overload notation, also using $T$ for the associated action profile.

We can write similar linear programming for this setting\footnote{For partial privacy, we replace $\adjacent$ with $\adjacent_\pset$.}.
\begin{align}\label{program-many}
&\max_{\pi} \quad \sum_{\theta}\sum_{T} \mu(\theta)v(T,\theta)\pi(T|\theta) \notag\\
& \begin{array}{r@{\quad}l@{}l@{\quad}l}
s.t.& \sum_{\theta} \left(\mu(\theta)(u_i(1,\theta)-u_i(0,\theta))\sum_{T:i\in T}\pi(T|\theta)\right)\geq 0,& \text{ for all }i\in\{1,\cdots,t\}\\
&\sum_{\theta} \left(\mu(\theta)(u_i(0,\theta)-u_i(1,\theta))\sum_{T:i\notin T}\pi(T|\theta)\right)\geq 0,& \text{ for all }i\in\{1,\cdots,t\}\\
 &\sum_{T:i\in T}\pi(T|\theta)\leq e^\epsilon \sum_{T:i\in T}\pi(T|\theta^\prime)+\delta, &\text{ for all }i\text{ and (}\theta,\theta^\prime)\in\adjacent\\
&\sum_{T:i\notin T}\pi(T|\theta)\leq e^\epsilon \sum_{T:i\notin T}\pi(T|\theta^\prime)+\delta, &\text{ for all }i\text{ and }(\theta,\theta^\prime)\in\adjacent\\
&\sum_T \pi(T|\theta)=1, & \text{ for all }\theta\\
&\pi(T|\theta)\geq 0, & \text{ for all }\theta, T
\end{array} .
\end{align}


When considering web users' data in advertising cases, users are often seen as homogeneous, both in terms of prior and utility functions. This means the sender and receiver only care about the aggregate statistics of the database rather than individual users' data.

Let $\phi:\Theta\rightarrow\Omega$ be the projection that only considers the number of 1s in a given state\footnote{For partial privacy, the projection needs to consider the number of 1s among the sensitive bits under privacy protection and non-sensitive bits separately. Mathematically, $\phi_\pset(\theta)\coloneqq (\sum_{i\in \pset}\theta_i,\sum_{i\notin \pset}\theta_i)$.}. That is, $\phi$:
\[
\phi(\theta)\coloneqq \sum_i \theta_i.
\]

\begin{assumption}\label{many-prior-assump}
    If $\phi(\theta)=\phi(\theta^\prime)$, then $\mu(\theta)=\mu(\theta^\prime)$.
\end{assumption}
\begin{assumption}\label{many-utility-assump}
    If $\phi(\theta)=\phi(\theta^\prime)$, then $u(T,\theta)=u(T,\theta^\prime)$ and $v(T,\theta)=v(T,\theta^\prime)$ for any $T$.
\end{assumption}

It is natural to consider the oblivious method \cite{10.1145/3490486.3538302}, where the signal distributions are identical for any states $\theta$ and $\theta^\prime$ such that $\phi(\theta) = \phi(\theta^\prime)$. That is, the signal distribution depends only on the projection $\omega=\phi(\theta)$. 
We can then redefine adjacency under this oblivious scheme. Say $\omega,\omega^\prime$ adjacent if and only if $|\omega-\omega^\prime|=1$. Let $\adjacent^O$ denote the set of all adjacent pairs $(\omega,\omega^\prime)$ here\footnote{For partial privacy, say $\omega=(a,b),\omega^\prime=(a^\prime,b^\prime)$ adjacent if $|a-a^\prime|=1$ and $b=b^\prime$. Let $\adjacent_\pset^O$ denote the set of all adjacent pairs. We substitute $\adjacent_\pset^O$ for $\adjacent^O$.}. Under the oblivious method, Program \eqref{program-many} can be modified into:
\begin{align}\label{program-many-oblivious}
&\max_{\pi} \quad \sum_{\omega}\sum_{T} \mu(\omega)v(T,\omega)\pi(T|\omega) \notag\\
& \begin{array}{r@{\quad}l@{}l@{\quad}l}
s.t.& \sum_{\omega} \left(\mu(\omega)(u_i(1,\omega)-u_i(0,\omega))\sum_{T:i\in T}\pi(T|\omega)\right)\geq 0,& \text{ for all }i\in\{1,\cdots,t\}\\
&\sum_{\omega} \left(\mu(\omega)(u_i(0,\omega)-u_i(1,\omega))\sum_{T:i\notin T}\pi(T|\omega)\right)\geq 0,& \text{ for all }i\in\{1,\cdots,t\}\\
 &\sum_{T:i\in T}\pi(T|\omega)\leq e^\epsilon \sum_{T:i\in T}\pi(T|\omega^\prime)+\delta, &\text{ for all }i\text{ and }(\omega,\omega^\prime)\in\adjacent^O\\
&\sum_{T:i\notin T}\pi(T|\omega)\leq e^\epsilon \sum_{T:i\notin T}\pi(T|\omega^\prime)+\delta, &\text{ for all }i\text{ and }(\omega,\omega^\prime)\in\adjacent^O\\
&\sum_T \pi(T|\omega)=1, & \text{ for all }\omega\\
&\pi(T|\omega)\geq 0, & \text{ for all }\omega, T
\end{array} .
\end{align}

We also prove that the simplified version is equivalent to the original one, which means the oblivious method is without loss.

\begin{lemma}\label{thm:computational}
    Under \Cref{many-prior-assump} and \Cref{many-utility-assump}, Program \eqref{program-many} and Program \eqref{program-many-oblivious} are equivalent. Therefore, there exists an oblivious privacy-preserving signaling scheme that is optimal among all privacy-preserving signaling schemes.
\end{lemma}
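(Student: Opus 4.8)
The plan is to prove equivalence of Program \eqref{program-many} and Program \eqref{program-many-oblivious} by establishing two directions: (i) any feasible $\pi$ for the oblivious program \eqref{program-many-oblivious} lifts to a feasible $\pi$ for \eqref{program-many} with the same objective value, and (ii) any feasible $\pi$ for \eqref{program-many} can be ``symmetrized'' into an oblivious scheme that is feasible for \eqref{program-many-oblivious} with objective value at least as large. Direction (i) is the easy inclusion: given oblivious $\pi(T\mid\omega)$, define $\pi(T\mid\theta) \coloneqq \pi(T\mid\phi(\theta))$. Using Assumptions~\ref{many-prior-assump} and \ref{many-utility-assump}, every sum over $\theta$ in \eqref{program-many} collapses to a sum over $\omega = \phi(\theta)$ weighted by the number of states with that projection, namely $\binom{n}{\omega}$ (or the appropriate product of binomials in the partial-privacy case), and the privacy constraints for adjacent $(\theta,\theta')$ reduce to those for the corresponding adjacent $(\omega,\omega')\in\adjacent^O$, since adjacency in $\adjacent$ always changes $\phi$ by exactly one. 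One must be slightly careful to re-scale $\mu(\omega)$ consistently (either $\mu(\omega)\coloneqq\sum_{\theta:\phi(\theta)=\omega}\mu(\theta)$, or $\mu(\omega)\coloneqq\mu(\theta)$ for any representative $\theta$, with the binomial factors absorbed into the objective); the bookkeeping must be done so that the two objectives and constraint sets match term by term.

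Direction (ii) is the substantive part. Given an arbitrary feasible $\pi(\cdot\mid\theta)$ for \eqref{program-many}, define the averaged scheme
\[
\bar\pi(T\mid\omega) \;\coloneqq\; \frac{1}{|\phi^{-1}(\omega)|}\sum_{\theta:\phi(\theta)=\omega}\pi(T\mid\theta).
\]
This is manifestly a valid signaling scheme (each $\bar\pi(\cdot\mid\omega)$ is a distribution over subsets $T$). For the objective and the obedience constraints, I would use Assumptions~\ref{many-prior-assump}--\ref{many-utility-assump}: since $\mu$, $u_i$, and $v$ are constant on each fiber $\phi^{-1}(\omega)$, every linear functional appearing in \eqref{program-many} is an average over $\theta$ in the fiber of the same coefficients, so replacing $\pi$ by $\bar\pi$ leaves all these quantities exactly unchanged — the objective value and the left-hand sides of the obedience constraints are preserved, hence obedience still holds. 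For the privacy constraints, I need to show that for adjacent $(\omega,\omega')\in\adjacent^O$,
\[
\sum_{T:i\in T}\bar\pi(T\mid\omega) \;\le\; e^\epsilon \sum_{T:i\in T}\bar\pi(T\mid\omega') + \delta,
\]
and similarly for $i\notin T$. The key combinatorial fact is that for any $\theta$ with $\phi(\theta)=\omega$ there exists $\theta'$ with $\phi(\theta')=\omega'$ and $(\theta,\theta')\in\adjacent$, and conversely; more precisely one can set up a suitable averaging/matching so that $\sum_{T:i\in T}\bar\pi(T\mid\omega)$ is a convex combination of terms $\sum_{T:i\in T}\pi(T\mid\theta)$ each of which is bounded by $e^\epsilon\sum_{T:i\in T}\pi(T\mid\theta')+\delta$ for an appropriate adjacent $\theta'$, and then average the right-hand side back to $e^\epsilon\sum_{T:i\in T}\bar\pi(T\mid\omega')+\delta$. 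This requires choosing the matching between fibers carefully (e.g. via a doubly-stochastic / bipartite-regularity argument on the $\theta\leftrightarrow\theta'$ adjacency restricted to $\phi^{-1}(\omega)\times\phi^{-1}(\omega')$) so that averaging over $\theta\in\phi^{-1}(\omega)$ and over $\theta'\in\phi^{-1}(\omega')$ are compatible.

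The main obstacle I anticipate is exactly this last point: verifying that the symmetrized scheme still satisfies the privacy constraints. Preservation of the objective and obedience is essentially automatic from the homogeneity assumptions, but privacy constraints are inequalities coupling two \emph{different} fibers, and the naive fiberwise average does not obviously commute with the adjacency structure unless one exploits the symmetry of the Hamming cube — specifically, that the bipartite graph on $\phi^{-1}(\omega)\times\phi^{-1}(\omega')$ induced by $\adjacent$ is biregular (each $\theta$ with $\omega$ ones has $n-\omega$ neighbors with $\omega+1$ ones, and each such neighbor has $\omega+1$ neighbors back). Using biregularity, the average of $\sum_{T:i\in T}\pi(T\mid\theta)$ over the fiber of $\omega$ can be rewritten as an average, over edges of this bipartite graph, of the same quantity, and likewise for the fiber of $\omega'$; applying the original privacy constraint edge-by-edge and then re-averaging gives the desired inequality for $\bar\pi$. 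I would isolate this as the crux of the proof and present the biregularity/edge-averaging argument in full, treating the objective-preservation and obedience-preservation steps more briefly. The partial-privacy footnote variants follow the same argument with $\phi$ replaced by $\phi_\pset$ and the adjacency restricted accordingly.
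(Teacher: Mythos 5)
Your proposal is correct and takes essentially the same approach as the paper: both directions (lift oblivious schemes to full schemes, and project/symmetrize full schemes to oblivious ones) match the paper's argument, and the crux you identify — that the bipartite graph between fibers $\phi^{-1}(\omega)$ and $\phi^{-1}(\omega')$ induced by $\adjacent$ is biregular, so the fiberwise average can be rewritten as an edge-average on which the privacy constraint can be applied termwise and re-averaged — is exactly the paper's combinatorial trick (the double sum over $\theta$ and over the flipped coordinate $i$, with the identity $\binom{m}{\omega_1}(m-\omega_1)=\binom{m}{\omega_1+1}(\omega_1+1)$). Note that your uniform average $\frac{1}{|\phi^{-1}(\omega)|}\sum_{\theta}\pi(T\mid\theta)$ coincides with the paper's $\mu$-weighted projection $\frac{1}{\mu(\omega)}\sum_\theta\pi(T\mid\theta)\mu(\theta)$ because Assumption~\ref{many-prior-assump} makes $\mu$ constant on each fiber.
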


Consider the following dual program with variables $\alpha^1_i,\alpha^0_i,\beta_{i,(\omega,\omega^\prime)},\gamma_{i,(\omega,\omega^\prime)},y_\omega$.
\begin{align}\label{program-dual}
&\min \quad \sum_{\omega}y_\omega+\delta\sum_{i,\omega}(\beta_{i,(\omega,\omega^\prime)}+\gamma_{i,(\omega,\omega^\prime)}) \notag\\
& \begin{array}{r@{\quad}l@{}l@{\quad}l}
s.t.& -\sum_{i:i\in T}\mu(\omega)(u_i(1,\omega)-u_i(0,\omega))\alpha^1_i-\sum_{i:i\notin T}\mu(\omega)(u_i(0,\omega)-u_i(1,\omega))\alpha^0_i\\
&+\sum_{i:i\in T}\sum_{\omega^\prime:(\omega,\omega^\prime)\in\adjacent^O}(\beta_{i,(\omega,\omega^\prime)}-e^\epsilon\beta_{i,(\omega^\prime,\omega)})\\
&+\sum_{i:i\notin T}\sum_{\omega^\prime:(\omega,\omega^\prime)\in\adjacent^O}(\gamma_{i,(\omega,\omega^\prime)}-e^\epsilon\gamma_{i,(\omega^\prime,\omega)})+y_\omega\geq \mu(\omega)v(T,\omega),\text{ for all }\omega,T\\
&\alpha^1_i,\alpha^0_i,\beta_{i,(\omega,\omega^\prime)},\gamma_{i,(\omega,\omega^\prime)}\geq 0, \text{ for all }i\text{ and }(\omega,\omega^\prime)\in\adjacent^O
\end{array} .
\end{align}

We can obtain a separation oracle for the program given an algorithm discussed later. Given any variables $\alpha^1_i,\alpha^0_i,\beta_{i,(\omega,\omega^\prime)},\gamma_{i,(\omega,\omega^\prime)},y_\omega$, separation over the first set of constraints reduces to maximizing the set function 
\begin{align*}
g_\omega(T)=&v(T,\omega)+\frac{1}{\mu(\omega)}\Big(\sum_{i:i\in T}\mu(\omega)(u_i(1,\omega)-u_i(0,\omega))\alpha^1_i+\sum_{i:i\notin T}\mu(\omega)(u_i(0,\omega)-u_i(1,\omega))\alpha^0_i\\
&-\sum_{i:i\in T}\sum_{\omega^\prime:(\omega,\omega^\prime)\in\adjacent^O}(\beta_{i,(\omega,\omega^\prime)}-e^\epsilon\beta_{i,(\omega^\prime,\omega)})-\sum_{i:i\notin T}\sum_{\omega^\prime:(\omega,\omega^\prime)\in\adjacent^O}(\gamma_{i,(\omega,\omega^\prime)}-e^\epsilon\gamma_{i,(\omega^\prime,\omega)})\Big)
\end{align*} 
for each $\omega$. The other constraints can be checked directly in linear time. Given the resulting separation oracle, we can use the Ellipsoid method to obtain a vertex optimal solution to both Program \eqref{program-many-oblivious} and its dual Program\eqref{program-dual} in polynomial time.

We further assume that the sender only cares about the number of $1$ in the action profile.

\begin{assumption}\label{many-receiver-anony}
    If $|T_1|=|T_2|$, then $v(T_1,\omega)=v(T_2,\omega)$ for any $\omega$.
\end{assumption}

We reorganize the set function $g$ as
\begin{align*}
g_\omega(T)=&v(T,\omega)+\frac{1}{\mu(\omega)}\sum_{i:i\in T}\Big(\mu(\omega)(u_i(1,\omega)-u_i(0,\omega))\alpha^1_i-\mu(\omega)(u_i(0,\omega)-u_i(1,\omega))\alpha^0_i\\
&-\sum_{\omega^\prime:(\omega,\omega^\prime)\in\adjacent^O}(\beta_{i,(\omega,\omega^\prime)}-e^\epsilon\beta_{i,(\omega^\prime,\omega)})+\sum_{\omega^\prime:(\omega,\omega^\prime)\in\adjacent^O}(\gamma_{i,(\omega,\omega^\prime)}-e^\epsilon\gamma_{i,(\omega^\prime,\omega)})\Big)\\
&+\frac{1}{\mu(\omega)}\sum_{i}\Big(\mu(\omega)(u_i(0,\omega)-u_i(1,\omega))\alpha^0_i-\sum_{\omega^\prime:(\omega,\omega^\prime)\in\adjacent^O}(\gamma_{i,(\omega,\omega^\prime)}-e^\epsilon\gamma_{i,(\omega^\prime,\omega)})\Big).
\end{align*}

Therefore, for any $\omega$, there is a polynomial time algorithm to find $T$ that maximizes $g_\omega(T)$. For every $|T|\in\{0,1,\cdots,t\}$, the optimal set is to pick the $|T|$ biggest elements in $w_i$, $i\in\{1,\cdots,t\}$, where $w_i$ is
\begin{align*}
&\mu(\omega)(u_i(1,\omega)-u_i(0,\omega))\alpha^1_i-\mu(\omega)(u_i(0,\omega)-u_i(1,\omega))\alpha^0_i-\sum_{\omega^\prime:(\omega,\omega^\prime)\in\adjacent^O}(\beta_{i,(\omega,\omega^\prime)}-e^\epsilon\beta_{i,(\omega^\prime,\omega)})\\
&+\sum_{\omega^\prime:(\omega,\omega^\prime)\in\adjacent^O}(\gamma_{i,(\omega,\omega^\prime)}-e^\epsilon\gamma_{i,(\omega^\prime,\omega)})
\end{align*}

Each $w_i$ can be computed in polynomial time. Enumerating all $|T|$ yields the optimal set $T$.


\begin{theorem}\label{computational-many}
    Under \Cref{many-prior-assump}, \Cref{many-utility-assump} and \Cref{many-receiver-anony}, there is a polynomial time algorithm to compute the optimal signaling scheme.
\end{theorem}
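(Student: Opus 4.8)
The plan is to assemble \Cref{computational-many} from the pieces already developed in the section, so the proof is mostly a matter of bookkeeping. First I would invoke \Cref{num-signal-action} to restrict attention to signaling schemes using at most as many signals as there are action profiles, which (for binary actions) means the $2^t$ subset-signals $T\subseteq\{1,\dots,t\}$; this legitimizes writing the sender's problem as the linear program \eqref{program-many}. Next, under \Cref{many-prior-assump} and \Cref{many-utility-assump}, \Cref{thm:computational} tells us the oblivious relaxation \eqref{program-many-oblivious} is equivalent, so we may work with the state space $\Omega$ of size $n+1$, which is polynomial. At this point the only remaining obstacle to a polynomial-time algorithm is the number of variables: \eqref{program-many-oblivious} still has $\pi(T\mid\omega)$ for all $2^t$ choices of $T$, so we cannot solve it directly.

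The remedy is to pass to the dual \eqref{program-dual}, which has polynomially many variables ($\alpha_i^1,\alpha_i^0,\beta,\gamma,y_\omega$) but exponentially many constraints — one for each $(\omega,T)$. I would then argue that the Ellipsoid method solves \eqref{program-dual} in polynomial time provided we exhibit a polynomial-time separation oracle. Separation over the constraint family for a fixed $\omega$ amounts to checking whether $\max_T g_\omega(T)$ exceeds the threshold implicit in the constraint, where $g_\omega$ is the set function written out in the excerpt; if the maximum is too large we have a violated constraint, otherwise all constraints for that $\omega$ hold. The feasibility/boundedness constraints on the dual variables are checked directly. So the crux reduces to: maximize $g_\omega(T)$ over $T\subseteq\{1,\dots,t\}$ in polynomial time.

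Here \Cref{many-receiver-anony} is exactly what makes this tractable. Since $v(T,\omega)$ depends only on $|T|$, I would regroup $g_\omega(T)$ into (i) a term $v(T,\omega)$ depending only on the cardinality $|T|$, (ii) a sum $\sum_{i\in T} w_i$ of per-receiver weights $w_i$ that do not depend on $T$, and (iii) a constant (in $T$) correction term — this is precisely the rearrangement displayed in the excerpt. Therefore, for each fixed cardinality $k=|T|\in\{0,1,\dots,t\}$, the optimal $T$ of that size is obtained by selecting the $k$ largest $w_i$'s, which is an $O(t\log t)$ sort plus an $O(t)$ scan; enumerating $k$ from $0$ to $t$ and taking the best gives $\arg\max_T g_\omega(T)$ in polynomial time. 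Each $w_i$ is itself a polynomial-size expression in the dual variables and problem data, computable in polynomial time. Running this for each of the $n+1$ values of $\omega$ yields the separation oracle.

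Finally, I would close the loop: with the separation oracle in hand, the Ellipsoid method returns an optimal vertex solution to \eqref{program-dual}, and by LP duality (strong duality holds since \eqref{program-many-oblivious} is feasible — release the prior only — and bounded) we recover an optimal solution to \eqref{program-many-oblivious}; standard arguments let us extract an optimal primal solution supported on polynomially many $T$'s from the polynomially many constraints that were ever generated by the oracle. By \Cref{thm:computational} this oblivious scheme is optimal among all privacy-preserving schemes, proving the theorem. The one step requiring genuine care — and the place I expect a referee to look hardest — is verifying that the regrouping of $g_\omega$ under \Cref{many-receiver-anony} really does decouple the receiver-selection choice from the cardinality-dependent sender payoff, so that greedy selection by $w_i$ is exactly optimal for each fixed $|T|$; everything else is assembly of prior results and the routine Ellipsoid-with-separation-oracle template.
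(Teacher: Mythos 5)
Your proposal is correct and follows essentially the same route as the paper: reduce to the oblivious program via \Cref{thm:computational}, dualize to get polynomially many variables, build a separation oracle by maximizing $g_\omega(T)$ for each $\omega$, and use \Cref{many-receiver-anony} to decompose $g_\omega$ into a cardinality-only term plus a modular $\sum_{i\in T}w_i$ term so that greedy selection over each fixed $|T|$ is exact. The only addition beyond the paper's exposition is your explicit remark on strong duality and extracting the primal from the constraints generated by the oracle, which the paper leaves implicit but which is the standard Ellipsoid bookkeeping.
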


\section{Conclusion}
In this work, we study the Bayesian persuasion model under differential privacy constraints. Our research sheds light on the critical questions of how to characterize and efficiently compute optimal signaling schemes under various notions of differential privacy. We also examine the inherent privacy-utility tradeoffs and extend our findings to more general settings involving partial privacy guarantees and multiple receivers. Promising directions for future work include exploring additional varieties of privacy constraints and incorporating increased complexities into the persuasion model, such as adding mediators.

\bibliography{reference}
\appendix
\section{Generalization to \renyi Differential Privacy}
\label{appendix-renyi-characterization}

We first give the definition of \renyi differential privacy.

\begin{definition}[\privacythree\cite{mironov2017renyi}]
     A data publication mechanism $(S, \pi)$ is $(\alpha,\epsilon)$-Rényi differentially private if for $(\theta, \theta^{\prime}) \in \adjacent$, we have 
$$
\mathrm{D}_\alpha\left(\pi(\cdot|\theta) \| \pi(\cdot|\theta^\prime)\right) \leq\epsilon,
$$
where $\mathrm{D}_\alpha(\pi(\cdot|\theta) \| \pi(\cdot|\theta^\prime))$ is the $\alpha $-Rényi divergence between the two distributions $\pi(\cdot|\theta)$ and $\pi(\cdot|\theta^\prime)$.
\end{definition}

The definition of \renyi divergence is as follows.

\begin{definition}[Rényi divergence \cite{renyi1961measures}]
    Let $P$ and $Q$ be probability distributions on $\Omega$. For $\alpha \in(1, \infty)$, we define the Rényi divergence of order $\alpha$ between $P$ and $Q$ as
$$
\begin{aligned}
\mathrm{D}_\alpha(P \| Q) & =\frac{1}{\alpha-1} \log \left(\int_{\Omega} P(x)^\alpha Q(x)^{1-\alpha} \mathrm{d} x\right) \\
& =\frac{1}{\alpha-1} \log \left(\underset{x \sim Q}{\mathbb{E}}\left[\left(\frac{P(x)}{Q(x)}\right)^\alpha\right]\right) \\
& =\frac{1}{\alpha-1} \log \left(\underset{x \sim P}{\mathbb{E}}\left[\left(\frac{P(x)}{Q(x)}\right)^{\alpha-1}\right]\right),
\end{aligned}
$$
where $P(\cdot)$ and $Q(\cdot)$ are the probability mass/density functions of $P$ and $Q$.
\end{definition}

We then revise Program \eqref{program-distribution} and \eqref{program-posterior} in the form of \renyi differential privacy\footnote{For partial privacy, we substitute $\adjacent_\pset$ for $\adjacent$ and the results can be simply extended.}.

\begin{align}
&\max_{(S,\pi)} \quad \sum_{\theta}\sum_{s} \pi(s|\theta)\mu(\theta)v(a_s,\theta) \notag\\
& \begin{array}{r@{\quad}l@{}l@{\quad}l}
s.t.& \sum_{\theta} u(a_s,\theta)\pi(s|\theta)\mu(\theta)\geq \sum_{\theta} u(a^\prime,\theta)\pi(s|\theta)\mu(\theta), & \text{ for all }s\in S,a^\prime\in A\\
&\sum_{s\in S}\pi(s|\theta)\left(\frac{\pi(s|\theta)}{\pi(s|\theta^\prime)}\right)^{\alpha-1}\leq e^{(\alpha-1)\epsilon}, &\text{ for all }(\theta,\theta^\prime)\in\adjacent\\
&\sum_s \pi(s|\theta)=1, & \text{ for all }\theta\in \Theta\\
&\pi(s|\theta)\geq 0, & \text{ for all }\theta\in \Theta, s\in S
\end{array} .
\end{align}

\begin{align}
&\max_{\tau} \quad \mathbb{E}_{q_s\sim \tau}[\varv(q_s)] \notag\\
& \begin{array}{r@{\quad}l@{}l@{\quad}l}
s.t.& \mathbb{E}_{q_s\sim \tau}[q_s(\theta)]=\mu(\theta), &\text{ for all }\theta\in \Theta\\
&\mathbb{E}_{q_s\sim \tau}\left[\left(\frac{q_s(\theta)}{\mu(\theta)}\right)^{\alpha}\left(\frac{q_s(\theta^\prime)}{\mu(\theta^\prime)}\right)^{1-\alpha}\right]\leq e^{(\alpha-1)\epsilon},  &\text{ for all }(\theta,\theta^\prime)\in\adjacent
\end{array} .
\end{align}

We now present a similar binary characterization as \Cref{biprivacyonetwo}. Note that \Cref{biutility} also holds for \renyi differential privacy.

\begin{proposition}\label{biprivacythree}
    We define $t_1=q_{s_1}(1-\mu)/((1-q_{s_1})\mu)$, $t_2=q_{s_2}(1-\mu)/((1-q_{s_2})\mu)$ and a distribution $\sigma$ with binary support on $\{t_1,t_2\}$. Also, $\sigma(t_1)=\frac{1-q_{s_1}}{1-\mu}\tau(q_{s_1})$ and $\sigma(t_2)=\frac{1-q_{s_2}}{1-\mu}\tau(q_{s_2})$. The signaling scheme preserves \privacythree if and only distribution $\sigma$ satisfy:
    \[
    \begin{cases}
        \mathbb{E}_\sigma[t]=1\\
        \mathbb{E}_\sigma[t^\alpha]\leq e^{(\alpha-1)\epsilon}\\
        \mathbb{E}_\sigma[t^{1-\alpha}]\leq e^{(\alpha-1)\epsilon}.
    \end{cases}
    \]
\end{proposition}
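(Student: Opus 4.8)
The plan is to recognize the auxiliary distribution $\sigma$ as the law of a likelihood ratio, which turns \privacythree into two moment bounds on $\sigma$. Throughout I would work with a scheme induced by two posteriors $q_{s_1}\le q_{s_2}$; restricting to two signals is without loss for the optimization by the same reduction as in the first bullet of \Cref{biprivacyonetwo}, and this is the regime in which \Cref{biutility} applies. For such a scheme, Bayes' rule gives, for $j=1,2$, $\pi(s_j\mid 1)=q_{s_j}\tau(q_{s_j})/\mu$ and $\pi(s_j\mid 0)=(1-q_{s_j})\tau(q_{s_j})/(1-\mu)$, hence $t_j=\pi(s_j\mid 1)/\pi(s_j\mid 0)$ and $\sigma(t_j)=\pi(s_j\mid 0)$. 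In other words, $\sigma$ is exactly the push-forward of $\pi(\cdot\mid 0)$ under the map $s\mapsto \pi(s\mid 1)/\pi(s\mid 0)$, which is the conceptual heart of the proof.

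First I would record that, for posteriors in the interior $(0,1)$, this is a bijection between two-signal schemes and binary distributions on $(0,\infty)$. In the forward direction, $\sum_j\sigma(t_j)=\sum_j\pi(s_j\mid 0)=1$ and $\mathbb{E}_\sigma[t]=\sum_j\pi(s_j\mid 1)=1$ hold automatically. In the backward direction, from $\sigma$ one sets $q_{s_j}=t_j\mu/(t_j\mu+1-\mu)$ and $\tau(q_{s_j})=(t_j\mu+1-\mu)\sigma(t_j)$; a one-line computation using $\sum_j\sigma(t_j)=1$ and $\mathbb{E}_\sigma[t]=1$ yields $\sum_j\tau(q_{s_j})=\mu\,\mathbb{E}_\sigma[t]+(1-\mu)=1$ and $\sum_j\tau(q_{s_j})q_{s_j}=\mu\,\mathbb{E}_\sigma[t]=\mu$, i.e., Bayes-plausibility, and conversely Bayes-plausibility forces $\mathbb{E}_\sigma[t]=1$. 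This accounts for the first of the three listed conditions.

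Next I would translate \privacythree itself. Both ordered pairs $(1,0)$ and $(0,1)$ lie in $\adjacent$, so \privacythree is the conjunction of $\mathrm{D}_\alpha(\pi(\cdot\mid 1)\|\pi(\cdot\mid 0))\le\epsilon$ and $\mathrm{D}_\alpha(\pi(\cdot\mid 0)\|\pi(\cdot\mid 1))\le\epsilon$. Using $\mathrm{D}_\alpha(P\|Q)=\frac{1}{\alpha-1}\log\mathbb{E}_{x\sim Q}[(P(x)/Q(x))^\alpha]$, one gets $e^{(\alpha-1)\mathrm{D}_\alpha(\pi(\cdot\mid 1)\|\pi(\cdot\mid 0))}=\sum_j\pi(s_j\mid 1)^\alpha\pi(s_j\mid 0)^{1-\alpha}=\sum_j\sigma(t_j)\,t_j^{\alpha}=\mathbb{E}_\sigma[t^\alpha]$, and symmetrically $e^{(\alpha-1)\mathrm{D}_\alpha(\pi(\cdot\mid 0)\|\pi(\cdot\mid 1))}=\mathbb{E}_\sigma[t^{1-\alpha}]$. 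Since $\alpha-1>0$, these two bounds are equivalent to $\mathbb{E}_\sigma[t^\alpha]\le e^{(\alpha-1)\epsilon}$ and $\mathbb{E}_\sigma[t^{1-\alpha}]\le e^{(\alpha-1)\epsilon}$, the remaining two conditions. Together with the previous paragraph, this establishes the equivalence for interior posteriors.

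The step I expect to be fiddliest is the boundary $q_{s_j}\in\{0,1\}$. There $t_j\in\{0,\infty\}$ and one of $\pi(s_j\mid 0),\pi(s_j\mid 1)$ vanishes, so the two signal distributions have different supports and one of the \renyi divergences equals $+\infty$; hence such a scheme is never \privacythree. It then remains only to check that the corresponding $\sigma$ also violates one of the three conditions — $\mathbb{E}_\sigma[t]<1$ when $q_{s_2}=1$ (mass is ``lost at infinity''), and $\mathbb{E}_\sigma[t^{1-\alpha}]=+\infty$ under the convention $0^{1-\alpha}=+\infty$ when $q_{s_1}=0$ — so the ``if and only if'' holds trivially in these cases. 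I would also assume $\mu\in(0,1)$ throughout, the cases $\mu\in\{0,1\}$ being degenerate. Everything else is direct substitution.
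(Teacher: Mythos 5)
Your proof is correct, and it arrives at the same three moment conditions via substantively the same change of variables, but you organize the argument around a more illuminating interpretation than the paper does. The paper works purely in the posterior-form program: it writes the Rényi constraint as $\mathbb{E}_{q_s\sim\tau}\bigl[(q_s(\theta)/\mu(\theta))^{\alpha}(q_s(\theta')/\mu(\theta'))^{1-\alpha}\bigr]\le e^{(\alpha-1)\epsilon}$, expands it for the two binary-state adjacent pairs into two explicit inequalities, and then checks by direct algebra that the substitutions $t_j$, $\sigma(t_j)$ turn these into $\mathbb{E}_\sigma[t^\alpha]\le e^{(\alpha-1)\epsilon}$ and $\mathbb{E}_\sigma[t^{1-\alpha}]\le e^{(\alpha-1)\epsilon}$, with $\mathbb{E}_\sigma[t]=1$ falling out of Bayes plausibility. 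You instead go back to the signal-level quantities and observe that $t_j=\pi(s_j\mid 1)/\pi(s_j\mid 0)$ and $\sigma(t_j)=\pi(s_j\mid 0)$, i.e.\ $\sigma$ is the law of the likelihood ratio under $\pi(\cdot\mid 0)$. Once that is seen, the three conditions follow immediately from the definitions of Rényi divergence (applied to the two orderings of $(0,1)$) and Bayes plausibility, with essentially no algebra. What the paper's route buys is uniformity with the rest of the appendix, which works throughout in the posterior-form program; what yours buys is a conceptual explanation of why the $\sigma$-change-of-variables is the ``right'' one, and it would generalize verbatim to any $f$-divergence-type privacy notion that is a moment of the likelihood ratio. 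Your explicit treatment of the boundary posteriors $q\in\{0,1\}$ is a careful addition the paper elides (the paper tacitly assumes interior posteriors, which is where the substitutions are nondegenerate), and your analysis there is sound, though you should note that when a posterior sits on the boundary $\sigma$ is no longer genuinely binary-supported, which is why the proposition's phrasing implicitly excludes those cases.
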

\begin{proof}
    Recall the \privacythree constraint in the program using the form of posteriors:
    \[
    \mathbb{E}_{q_s\sim \tau}\left[\left(\frac{q_s(\theta)}{\mu(\theta)}\right)^{\alpha}\left(\frac{q_s(\theta^\prime)}{\mu(\theta^\prime)}\right)^{1-\alpha}\right]\leq e^{(\alpha-1)\epsilon},  \text{ for all }(\theta,\theta^\prime)\in\adjacent.
    \]
    Then we obtain
    \begin{align*}
        &\left(\frac{q_{s_1}}{\mu}\right)^\alpha\left(\frac{1-\mu}{1-q_{s_1}}\right)^{\alpha-1}\tau(q_{s_1})+\left(\frac{q_{s_2}}{\mu}\right)^\alpha\left(\frac{1-\mu}{1-q_{s_2}}\right)^{\alpha-1}\tau(q_{s_2})\leq e^{(\alpha-1)\epsilon},\\
        &\left(\frac{1-q_{s_1}}{1-\mu}\right)^\alpha\left(\frac{\mu}{q_{s_1}}\right)^{\alpha-1}\tau(q_{s_1})+\left(\frac{1-q_{s_2}}{1-\mu}\right)^\alpha\left(\frac{\mu}{q_{s_2}}\right)^{\alpha-1}\tau(q_{s_2})\leq e^{(\alpha-1)\epsilon}.
    \end{align*}
    We then define $t_1,t_2,\sigma(t_1),\sigma(t_2)$ as the theorem. Substituting them into the above equations, we have
    \begin{align*}
        &t_1^{\alpha}\sigma(t_1)+t_2^\alpha\sigma(t_2)\leq e^{(\alpha-1)\epsilon},\\
        &t_1^{1-\alpha}\sigma(t_1)+t_2^{1-\alpha}\sigma(t_2)\leq e^{(\alpha-1)\epsilon}.
    \end{align*}
    Also, $t_1\sigma(t_1)+t_2\sigma(t_2)=(q_{s_1}\tau(q_{s_1})+q_{s_2}\tau(q_{s_2}))/\mu=1$.
\end{proof}


For general characterization in \Cref{sec-general-characterization}, note that \renyi differential privacy can also be written in the form of Ex ante constraints. Then \Cref{num-exante} implies only $k+m$ signals suffice. We then just modify the objective function $\varv$ similar to \privacytwo.

\paragraph{A modified objective function for \privacythree}
Similarly, for \privacythree, we use the modified function $\varvtwo(q,\gamma)=\varv(q)$ same as \privacytwo and define
\[
C^\prime(\mu,\alpha):=\left\{(q,\gamma)\in \Delta(\Theta)\times \mathbb{R}^{m}:\left(\frac{q(\theta)}{\mu(\theta)}\right)^{\alpha}\left(\frac{q(\theta^{\prime})}{\mu(\theta^{\prime})}\right)^{1-\alpha}\leq \gamma_{\theta,\theta^\prime}, \text{ any }(\theta,\theta^\prime)\in \adjacent\right\}.
\]

We then characterize the problem with the concave hull of $\varvtwo$.

\begin{proposition}
    For \privacythree, there exists a valid optimal signaling scheme such that
    \begin{itemize}
    \item The scheme uses $|S_{\varvtwo}(\mu,\bm{e^{(\alpha-1)\epsilon}}\mid C^\prime(\mu,\alpha))|$ signals, which is not larger than $k+m$. Here $\bm{e^{(\alpha-1)\epsilon}}$ is the $m$-dimension vector with all elements $e^{(\alpha-1)\epsilon}$.
    \item The scheme induces posteriors in $S_{\varvtwo}(\mu,\bm{e^{(\alpha-1)\epsilon}}\mid C^\prime(\mu,\alpha))$.
    \item The optimal utility of the sender is $\hullvarvtwo(\mu,\bm{e^{(\alpha-1)\epsilon}}\mid C^\prime(\mu,\alpha))$.
\end{itemize}
\end{proposition}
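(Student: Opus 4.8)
The plan is to mirror exactly the argument that establishes \Cref{thm:gene-char} for \privacytwo, substituting the \renyi constraint for the approximate-DP constraint. First I would observe that the \renyi differential privacy constraint in the posterior-space program,
\[
\mathbb{E}_{q_s\sim\tau}\left[\left(\frac{q_s(\theta)}{\mu(\theta)}\right)^{\alpha}\left(\frac{q_s(\theta^\prime)}{\mu(\theta^\prime)}\right)^{1-\alpha}\right]\leq e^{(\alpha-1)\epsilon},\quad\text{for all }(\theta,\theta^\prime)\in\adjacent,
\]
is literally of the form $\mathbb{E}_{q\sim\tau}[g(q)]\le c$ with $g(q)=\bigl(q(\theta)/\mu(\theta)\bigr)^{\alpha}\bigl(q(\theta^\prime)/\mu(\theta^\prime)\bigr)^{1-\alpha}$ continuous on the simplex (each $\mu(\theta)>0$) and $c=e^{(\alpha-1)\epsilon}$; hence it is an \emph{ex ante} constraint in the sense of \citet{babichenko2021bayesian}, one for each of the $m$ adjacent pairs, and there are no ex post constraints. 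This reduces the \renyi-constrained persuasion problem to an instance of Program \eqref{program-infodesign} with $j=0$.

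Next I would invoke \Cref{num-exante}: since releasing nothing but the prior $\mu$ is always feasible, the set of valid signaling schemes is nonempty, so there exists an optimal valid signaling scheme of support size at most $k+m$; $\varv$ is upper semi-continuous on the discrete simplex, which is the hypothesis needed. This yields the signal-count bound of the proposition. Then I would apply \Cref{cha-infodesign-exante}: the value of the program equals the concave hull of the modified objective $f^g$ at $(\mu,c)$. To match the stated form, I identify the modified objective with $\varvtwo(q,\gamma)=\varv(q)$ together with the indicator of the region $C^\prime(\mu,\alpha)$, exactly as done for \privacytwo --- the only change is that the pointwise constraint defining $C$ becomes $\bigl(q(\theta)/\mu(\theta)\bigr)^{\alpha}\bigl(q(\theta^\prime)/\mu(\theta^\prime)\bigr)^{1-\alpha}\leq\gamma_{\theta,\theta^\prime}$, and the threshold vector $\bm{\delta}$ is replaced by $\bm{e^{(\alpha-1)\epsilon}}$. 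Thus $\hullvarvtwo(\mu,\bm{e^{(\alpha-1)\epsilon}}\mid C^\prime(\mu,\alpha))$ is the optimal utility, and the maximizing set $S_{\varvtwo}(\mu,\bm{e^{(\alpha-1)\epsilon}}\mid C^\prime(\mu,\alpha))$ gives the induced posteriors.

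Finally, to complete the equivalence I would verify the translation between a distribution over pairs $(q,\gamma)\in C^\prime(\mu,\alpha)$ with barycenter $(\mu,\bm{e^{(\alpha-1)\epsilon}})$ and a \renyi-feasible $\tau$ over posteriors: given such a $\tau$, set $\gamma_{\theta,\theta^\prime}(q)=\bigl(q(\theta)/\mu(\theta)\bigr)^{\alpha}\bigl(q(\theta^\prime)/\mu(\theta^\prime)\bigr)^{1-\alpha}$, which lands in $C^\prime$ by construction and whose $\tau$-expectation is $\le e^{(\alpha-1)\epsilon}$ componentwise by the \renyi constraint; conversely, projecting out $\gamma$ from a feasible distribution over $C^\prime$ with that barycenter gives a $\tau$ whose \renyi constraint holds because $\gamma_{\theta,\theta^\prime}$ dominates the relevant power expression pointwise and has expectation $\le e^{(\alpha-1)\epsilon}$ --- here I would note we may shrink the $\gamma$-coordinates to the defining boundary without leaving $C^\prime$, exactly as in the \privacytwo argument. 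The main obstacle, such as it is, is making this last barycenter/feasibility translation fully rigorous --- in particular checking that the expectation-equals-$\bm{e^{(\alpha-1)\epsilon}}$ requirement in the concave-hull formulation is without loss compared to the inequality in the program (one direction is immediate; the other uses that one can always pad $\gamma$ upward and add a trivial posterior at $\mu$, or appeal directly to \Cref{cha-infodesign-exante} which already encodes this). Since all of these steps are verbatim specializations of the \privacytwo proof with a different continuous function $g$ and a different constant $c$, no new difficulty arises, and I would simply remark that the proof is identical to that of \Cref{thm:gene-char} \emph{mutatis mutandis}.
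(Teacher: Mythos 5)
Your proposal is correct and mirrors the paper's approach exactly: the paper likewise observes that the \renyi constraint in posterior form is an ex ante constraint, invokes \Cref{num-exante} for the $k+m$ signal bound, and appeals to \Cref{cha-infodesign-exante} with the modified objective $\varvtwo$ and region $C^\prime(\mu,\alpha)$, treating the proof as a verbatim specialization of the \privacytwo argument. The only delicacy you and the paper both gloss over is that $g(q)=\bigl(q(\theta)/\mu(\theta)\bigr)^{\alpha}\bigl(q(\theta^\prime)/\mu(\theta^\prime)\bigr)^{1-\alpha}$ is not finite on the boundary face $q(\theta^\prime)=0$ when $\alpha>1$, so strictly speaking one should work on the effective domain where the ex ante constraint can be satisfied, but this does not distinguish your argument from the paper's.
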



\begin{corollary}\label{bene-pri-three}
    Sender benefits from persuasion under \privacythree if and only if $\hullvarvtwo(\mu,\bm{e^{(\alpha-1)\epsilon}}\mid C^\prime(\mu,\alpha))>\varv(\mu)$.
\end{corollary}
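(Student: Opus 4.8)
The final statement to prove is \Cref{bene-pri-three}: the sender benefits from persuasion under \privacythree if and only if $\hullvarvtwo(\mu,\bm{e^{(\alpha-1)\epsilon}}\mid C^\prime(\mu,\alpha))>\varv(\mu)$. The plan is to mirror exactly the argument that establishes \Cref{bene-pri-one} and \Cref{bene-pri-two}, relying on the preceding \renyi characterization proposition as the essential input. The key observation, already noted in the excerpt, is that the \renyi constraint $\mathbb{E}_{q_s\sim\tau}[(q_s(\theta)/\mu(\theta))^\alpha(q_s(\theta^\prime)/\mu(\theta^\prime))^{1-\alpha}]\leq e^{(\alpha-1)\epsilon}$ is of ex ante form, so \Cref{cha-infodesign-exante} applies with objective $\varvtwo$ and region $C^\prime(\mu,\alpha)$, identifying the sender's optimal utility with $\hullvarvtwo(\mu,\bm{e^{(\alpha-1)\epsilon}}\mid C^\prime(\mu,\alpha))$.

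First I would recall that ``the sender benefits from persuasion'' means the sender's optimal utility strictly exceeds the no-information baseline, which is $\varv(\mu)$ (the utility obtained by inducing only the prior posterior $\mu$). Second I would invoke the \renyi characterization proposition stated just above \Cref{bene-pri-three}, which asserts that the sender's optimal utility equals $\hullvarvtwo(\mu,\bm{e^{(\alpha-1)\epsilon}}\mid C^\prime(\mu,\alpha))$. Third, for the ``if'' direction, if this concave hull value strictly exceeds $\varv(\mu)$, the optimal scheme does strictly better than revealing nothing, so the sender benefits. Fourth, for the ``only if'' direction, I would check that the trivial scheme --- inducing posterior $\mu$ with probability one, paired with auxiliary coordinate $\gamma = \bm{e^{(\alpha-1)\epsilon}}$ --- is feasible, i.e.\ $(\mu,\bm{e^{(\alpha-1)\epsilon}})\in C^\prime(\mu,\alpha)$; this holds because $(\mu(\theta)/\mu(\theta))^\alpha(\mu(\theta^\prime)/\mu(\theta^\prime))^{1-\alpha}=1\leq e^{(\alpha-1)\epsilon}$ for all adjacent pairs (using $\alpha>1$, $\epsilon\geq 0$). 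Hence $\hullvarvtwo(\mu,\bm{e^{(\alpha-1)\epsilon}}\mid C^\prime(\mu,\alpha))\geq \varv(\mu)$ always, and the sender benefits precisely when this inequality is strict. Combining the two directions yields the claimed equivalence.

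The argument is essentially a one-line corollary of the preceding characterization plus the observation about feasibility of the trivial point; the only mild care needed --- and the closest thing to an obstacle --- is confirming that the auxiliary $\gamma$-coordinate associated with the no-information scheme can indeed be taken equal to the threshold vector $\bm{e^{(\alpha-1)\epsilon}}$, which is immediate from the definition of $C^\prime(\mu,\alpha)$ once one notes the ex ante constraint value at the prior is $1$. I would also remark that this parallels verbatim the proofs of \Cref{bene-pri-one} and \Cref{bene-pri-two}, so the write-up can simply point to those.
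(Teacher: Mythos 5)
Your proof is correct and takes exactly the approach the paper intends: the corollary is an immediate consequence of the preceding Rényi characterization proposition (which identifies the sender's optimal utility with $\hullvarvtwo(\mu,\bm{e^{(\alpha-1)\epsilon}}\mid C^\prime(\mu,\alpha))$), together with the observation that the no-information baseline is $\varv(\mu)$. Your extra care in verifying $(\mu,\bm{e^{(\alpha-1)\epsilon}})\in C^\prime(\mu,\alpha)$ so that the concave hull is always at least $\varv(\mu)$ is a correct and useful detail that the paper leaves implicit.
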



\section{More Analyses about Beneficial Persuasion Conditions}\label{sec: benefit}
We first relate the beneficial persuasion condition to the concavity/convexity of the objective function over the feasible region.
\begin{proposition}\label{convexone}
    Under \privacyone, if $\varv$ is concave in $K(\mu,\epsilon)$, the sender does not benefit from persuasion for any prior. If $\varv$ is convex and not concave in $K(\mu,\epsilon)$, the sender benefits from persuasion for any prior.
\end{proposition}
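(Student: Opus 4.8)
The plan is to work entirely within the posterior-based formulation (Program \eqref{program-posterior}) specialized to \privacyone, using \Cref{genechaone}: the sender's optimal utility equals $\hullvarv(\mu\mid K(\mu,\epsilon))$, and the sender benefits from persuasion (by \Cref{bene-pri-one}) precisely when $\hullvarv(\mu\mid K(\mu,\epsilon))>\varv(\mu)$. So the entire statement reduces to two elementary facts about concave hulls of a function restricted to the convex region $K(\mu,\epsilon)$, evaluated at the interior point $\mu$ (note $\mu\in K(\mu,\epsilon)$ always, since the prior trivially satisfies every privacy ratio constraint with value $1\le e^\epsilon$).

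For the first claim, suppose $\varv$ is concave on $K(\mu,\epsilon)$. Then for any distribution $\tau$ supported on $K(\mu,\epsilon)$ with $\mathbb{E}_{q\sim\tau}[q]=\mu$, Jensen's inequality gives $\mathbb{E}_{q\sim\tau}[\varv(q)]\le \varv(\mathbb{E}_{q\sim\tau}[q])=\varv(\mu)$. Taking the max over all such $\tau$ yields $\hullvarv(\mu\mid K(\mu,\epsilon))\le\varv(\mu)$; the reverse inequality is immediate by taking $\tau$ the point mass at $\mu$. Hence $\hullvarv(\mu\mid K(\mu,\epsilon))=\varv(\mu)$, so by \Cref{bene-pri-one} the sender does not benefit, and this holds for every prior $\mu$ (the argument never used anything special about $\mu$).

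For the second claim, suppose $\varv$ is convex on $K(\mu,\epsilon)$ but not concave there. Since $K(\mu,\epsilon)$ is a closed convex polyhedron with nonempty interior (it contains a neighborhood of $\mu$ relative to $\Delta(\Theta)$), and $\mu$ lies in its relative interior, I would pick two points $q_1,q_2\in K(\mu,\epsilon)$ and a weight $\lambda\in(0,1)$ with $\lambda q_1+(1-\lambda)q_2=\mu$ — for instance, move a small amount from $\mu$ in opposite directions along any admissible line segment through $\mu$. Convexity of $\varv$ then gives $\lambda\varv(q_1)+(1-\lambda)\varv(q_2)\ge\varv(\mu)$, so $\hullvarv(\mu\mid K(\mu,\epsilon))\ge\varv(\mu)$; combined with the reverse trivial inequality, actually $\hullvarv(\mu\mid K(\mu,\epsilon))$ is at least the best chord value. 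To get \emph{strict} inequality (which is what "benefits from persuasion" requires) I need the chosen chord to be strictly above $\varv(\mu)$, and this is exactly where "not concave" must be used: if $\varv$ were such that every chord through $\mu$ were flat (equal to $\varv(\mu)$) then combined with convexity $\varv$ would be affine on a neighborhood, and one has to propagate this — I would argue that a convex function all of whose chords through an interior point $\mu$ achieve only the value $\varv(\mu)$ is affine on all of $K(\mu,\epsilon)$, contradicting non-concavity (an affine function is concave). Concretely: convexity plus $\varv(\mu)\ge\frac{1}{2}\varv(\mu+v)+\frac12\varv(\mu-v)$ for all small $v$ forces $\varv(\mu+v)+\varv(\mu-v)=2\varv(\mu)$, and a standard argument (midpoint-affine plus convexity, hence continuity in the interior) upgrades this to $\varv$ affine near $\mu$, then affine on the whole polyhedron by convexity and connectedness. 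So some chord is strictly above, giving $\hullvarv(\mu\mid K(\mu,\epsilon))>\varv(\mu)$ for every $\mu$.

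The main obstacle is the strictness in the second claim: convexity alone only yields "$\ge$", and one must exploit the failure of concavity to produce a genuinely beneficial split. The delicate point is that non-concavity is a global property of $\varv$ on $K(\mu,\epsilon)$, while beneficial persuasion is tested at the fixed point $\mu$; bridging these requires the affineness-propagation argument sketched above (or, alternatively, invoking that for a convex-but-not-affine function on a convex body the concave hull strictly exceeds the function at every point of the relative interior). I would also double-check the edge case where $\mu$ is a vertex of $\Delta(\Theta)$ (degenerate prior), where $K(\mu,\epsilon)$ may collapse and the statement becomes vacuous or trivial; the problem implicitly assumes $\mu$ has full support, so this is a non-issue.
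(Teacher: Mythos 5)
Your first half is correct and clean: for concave $\varv$ on the convex set $K(\mu,\epsilon)$, Jensen applied to any Bayes-plausible $\tau$ supported in $K(\mu,\epsilon)$ gives $\mathbb{E}_\tau[\varv]\le\varv(\mu)$, so $\hullvarv(\mu\mid K(\mu,\epsilon))=\varv(\mu)$ and \Cref{bene-pri-one} finishes. (The paper leaves this proposition unproved in the text I see, so I cannot compare routes, only assess yours.)

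For the second half, the high-level plan is right: assume for contradiction $\hullvarv(\mu\mid K(\mu,\epsilon))=\varv(\mu)$ and show this forces $\varv$ affine (hence concave) on $K(\mu,\epsilon)$. But the specific propagation step you write, ``$\varv$ affine near $\mu$, then affine on the whole polyhedron by convexity and connectedness,'' is false. A convex function that is affine on a neighborhood of an interior point need not be affine globally: $\varv(x)=\max(0,|x|-1)$ on $K=[-2,2]$ with $\mu=0$ is convex and constant on $(-1,1)$, yet not affine on $K$. The underlying issue is that you extracted only a \emph{local} consequence of $\hullvarv(\mu\mid K)=\varv(\mu)$ (taking $v$ small), whereas the hypothesis constrains \emph{every} Bayes-plausible two-point distribution. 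Use all of it: for any $q\in K(\mu,\epsilon)$, extend the ray from $q$ through $\mu$ to its other exit point $q'\in K(\mu,\epsilon)$ (possible since all DP constraints are slack at $\mu$, so $\mu$ is in the relative interior), so $\mu=\lambda q+(1-\lambda)q'$ with $\lambda\in(0,1)$. Feasibility of this two-point scheme gives $\lambda\varv(q)+(1-\lambda)\varv(q')\le\varv(\mu)$; convexity gives $\ge$; so equality. Now fix a subgradient $g\in\partial\varv(\mu)$. The subgradient inequality at $q'$ plus $q'-\mu=-\tfrac{\lambda}{1-\lambda}(q-\mu)$ plugged into the chord equality yields $\varv(q)\le\varv(\mu)+\langle g,q-\mu\rangle$, and the subgradient inequality at $q$ gives $\ge$, hence $\varv(q)=\varv(\mu)+\langle g,q-\mu\rangle$ for every $q\in K(\mu,\epsilon)$. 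So $\varv$ is affine on $K(\mu,\epsilon)$, contradicting non-concavity. This is precisely the content of the ``alternative'' lemma you parenthetically invoke (a convex-but-not-affine function on a convex body has concave hull strictly above the function at every relative-interior point); you should make that the argument rather than the local-then-propagate route, which does not go through.
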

\begin{proposition}\label{convextwo}
    Under \privacytwo, if $\varvtwo$ is concave in $C(\mu,\epsilon)$, the sender does not benefit from persuasion for any prior. If $\varvtwo$ is convex and not concave in $C(\mu,\epsilon)$, the sender benefits from persuasion for any prior.
\end{proposition}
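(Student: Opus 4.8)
The plan is to read off the benefit criterion from \Cref{thm:gene-char}: by \Cref{bene-pri-two}, the sender benefits under \privacytwo if and only if $\hullvarvtwo(\mu,\bm{\delta}\mid C(\mu,\epsilon)) > \varv(\mu)$. So everything reduces to comparing the concave hull of $\varvtwo$ at the point $(\mu,\bm{\delta})$ with the value $\varvtwo(\mu,\bm{\delta})=\varv(\mu)$. I would first record two elementary geometric facts. (i) $C(\mu,\epsilon)$ is convex: each defining inequality $\max\{0,\,q(\theta)/\mu(\theta)-e^\epsilon q(\theta')/\mu(\theta')\}\le\gamma_{\theta,\theta'}$ bounds a convex (piecewise-linear) function of $(q,\gamma)$ by a linear one, and $C(\mu,\epsilon)$ is the intersection of finitely many such sets with the simplex. (ii) $(\mu,\bm{\delta})$ lies in the relative interior of $C(\mu,\epsilon)$: assuming without loss that $\Theta=\mathrm{supp}(\mu)$, plugging $q=\mu$ into the $(\theta,\theta')$-constraint gives $\max\{0,1-e^\epsilon\}=0<\delta$, so every constraint has strictly positive slack there, and small perturbations of $q$ (within $\Delta(\Theta)$) and of $\gamma$ remain feasible. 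Note that the slack here is exactly what the relaxation $\delta>0$ buys.

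For the concave case the argument is short: if $\varvtwo$ is concave on the convex set $C(\mu,\epsilon)$, then by Jensen every two-or-more-point mixture in the definition of the concave hull at $(\mu,\bm{\delta})$ scores at most $\varvtwo(\mu,\bm{\delta})$, while the trivial singleton attains it, so $\hullvarvtwo(\mu,\bm{\delta}\mid C(\mu,\epsilon))=\varvtwo(\mu,\bm{\delta})=\varv(\mu)$ and \Cref{bene-pri-two} gives no benefit, for any prior for which the hypothesis holds.

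For the convex case I would use that ``convex and not concave'' means ``convex and not affine''. Since $(\mu,\bm{\delta})$ is a relative interior point of the convex domain, the convex function $\varvtwo$ has a subgradient there, i.e.\ an affine $\ell$ with $\ell\le\varvtwo$ on $C(\mu,\epsilon)$ and $\ell(\mu,\bm{\delta})=\varvtwo(\mu,\bm{\delta})$. As $\varvtwo$ is not affine, $\varvtwo\not\equiv\ell$, so there is a witness $y\in C(\mu,\epsilon)$ with $\varvtwo(y)>\ell(y)$ (and $y\neq(\mu,\bm{\delta})$). Using relative interiority, extend the segment from $y$ through $(\mu,\bm{\delta})$ slightly farther into $C(\mu,\epsilon)$: there are $z\in C(\mu,\epsilon)$ and $\lambda\in(0,1)$ with $(\mu,\bm{\delta})=\lambda y+(1-\lambda)z$. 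Feeding this two-point distribution into the concave-hull definition gives
\[
\hullvarvtwo(\mu,\bm{\delta}\mid C(\mu,\epsilon))\ \ge\ \lambda\,\varvtwo(y)+(1-\lambda)\,\varvtwo(z)\ >\ \lambda\,\ell(y)+(1-\lambda)\,\ell(z)\ =\ \ell(\mu,\bm{\delta})\ =\ \varv(\mu),
\]
the strict step using $\varvtwo(y)>\ell(y)$, $\varvtwo(z)\ge\ell(z)$ and $\lambda>0$. \Cref{bene-pri-two} then yields benefit. The identical argument with $K(\mu,\epsilon)$ replacing $C(\mu,\epsilon)$ and \Cref{bene-pri-one} replacing \Cref{bene-pri-two} proves \Cref{convexone}.

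The step I expect to be the main obstacle is the convex case, precisely because one cannot argue locally. Since $\varvtwo(q,\gamma)=\varv(q)$ is constant (hence affine) in every $\gamma$-direction, and $\varv$ may itself be affine on a neighborhood of $\mu$ inside $\Delta(\Theta)$ (as it is for the piecewise-linear objectives coming from threshold receivers), there need not be any direction of strict convexity emanating from $(\mu,\bm{\delta})$. So the proof must reach the global non-affineness witness $y$ and ``pull the concave hull up'' from there; the ability to do this hinges on $(\mu,\bm{\delta})$ being a relative interior point of $C(\mu,\epsilon)$, which is where the slack from $\delta>0$ (and the normalization $\Theta=\mathrm{supp}(\mu)$) is essential.
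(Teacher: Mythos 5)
The paper states Proposition~\ref{convextwo} (like its siblings \ref{convexone}, \ref{prop-not-benefit}, \ref{prop-benefit}) without proof: neither Appendix~\ref{proof-sec-single} nor Appendix~\ref{proof_sec_computational} covers Appendix~\ref{sec: benefit}, so there is no author argument to compare against. Your proof is correct and is the natural one given the machinery the paper has set up: reduce via Corollary~\ref{bene-pri-two} to the test $\hullvarvtwo(\mu,\bm{\delta}\mid C(\mu,\epsilon))\gtrless\varv(\mu)$, dispatch the concave case by Jensen, and handle the convex-non-affine case by a supporting affine minorant at $(\mu,\bm{\delta})$, a witness point where $\varvtwo$ strictly exceeds it, and the two-point mixture obtained by pushing the segment through $(\mu,\bm{\delta})$ slightly beyond. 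You correctly identified the two standing assumptions that make the relative-interiority step (and hence the subgradient and the segment extension) go through: $\delta>0$, which the paper assumes throughout whenever it writes \privacytwo, and full support of $\mu$, which is implicit whenever the paper writes ratios $q(\theta)/\mu(\theta)$; it would be worth making both explicit in the final write-up. This is the standard Kamenica--Gentzkow convexity/concavity dichotomy transplanted to the lifted space $\Delta(\Theta)\times\mathbb{R}^m$, which is presumably why the authors felt it could be stated without proof.
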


\begin{proposition}
    Under \privacythree, if $\varvtwo$ is concave in $C^\prime(\mu,\alpha)$, the sender does not benefit from persuasion for any prior. If $\varvtwo$ is convex and not concave in $C^\prime(\mu,\alpha)$, the sender benefits from persuasion for any prior.
\end{proposition}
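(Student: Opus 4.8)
The plan is to mirror the proofs of \Cref{convexone} and \Cref{convextwo}, with the \privacytwo characterization replaced by its \renyi counterpart (the proposition immediately preceding \Cref{bene-pri-three}) and the beneficial-persuasion test taken from \Cref{bene-pri-three}: under \privacythree the sender benefits if and only if $\hullvarvtwo(\mu,\bm{e^{(\alpha-1)\epsilon}}\mid C^\prime(\mu,\alpha))>\varv(\mu)$. Two observations will do the work. First, $\varvtwo(q,\gamma)=\varv(q)$ ignores its second argument, so $\varvtwo$ is concave (resp.\ convex) on the convex set $C^\prime(\mu,\alpha)$ exactly when $\varv$ is concave (resp.\ convex) on the projection of $C^\prime(\mu,\alpha)$ onto $\Delta(\Theta)$. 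Second, the point $x_0:=(\mu,\bm{e^{(\alpha-1)\epsilon}})$ belongs to $C^\prime(\mu,\alpha)$ — taking $q=\mu$ makes every constraint read $1\le e^{(\alpha-1)\epsilon}$, which holds since $\alpha>1$ and $\epsilon>0$ — and in fact lies in its relative interior, because that inequality is strict; moreover $\varvtwo(x_0)=\varv(\mu)$.

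If $\varvtwo$ is concave on $C^\prime(\mu,\alpha)$, then $\hullvarvtwo(\cdot\mid C^\prime(\mu,\alpha))$ coincides with $\varvtwo$ there, so $\hullvarvtwo(x_0\mid C^\prime(\mu,\alpha))=\varvtwo(x_0)=\varv(\mu)$ and \Cref{bene-pri-three} gives that the sender does not benefit; since this uses nothing special about $\mu$, it holds for every prior. If instead $\varvtwo$ is convex on $C^\prime(\mu,\alpha)$ but not concave, then it is not affine, so there is a point $p\in C^\prime(\mu,\alpha)$ with $\hullvarvtwo(p\mid C^\prime(\mu,\alpha))>\varvtwo(p)$. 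Since $x_0$ is a relative-interior point of the convex set $C^\prime(\mu,\alpha)$, the segment from $p$ through $x_0$ extends slightly past $x_0$ to some $q\in C^\prime(\mu,\alpha)$, so $x_0=\lambda p+(1-\lambda)q$ for a suitable $\lambda\in(0,1)$. Using that $\hullvarvtwo$ is concave and dominates $\varvtwo$, and that $\varvtwo$ is convex,
\[
\hullvarvtwo(x_0\mid C^\prime(\mu,\alpha))\ \ge\ \lambda\,\hullvarvtwo(p\mid C^\prime(\mu,\alpha))+(1-\lambda)\,\varvtwo(q)\ >\ \lambda\,\varvtwo(p)+(1-\lambda)\,\varvtwo(q)\ \ge\ \varvtwo(x_0)=\varv(\mu),
\]
so \Cref{bene-pri-three} gives that the sender benefits, again for every prior.

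The step I expect to be the main obstacle is the convex-but-not-concave direction: the strict gap between $\varvtwo$ and its concave hull is a priori only known to occur \emph{somewhere} in $C^\prime(\mu,\alpha)$, and one must propagate it to the specific prior point $x_0$. This is exactly where the relative-interiority of $x_0$ — which for \privacythree follows from the strict slack $1<e^{(\alpha-1)\epsilon}$ in the \renyi constraint, in contrast to the boundary behaviour of $K(\mu,\epsilon)$ under \privacyone — together with the chord argument above is essential. One should also check that although $C^\prime(\mu,\alpha)$ is unbounded in the $\gamma$-coordinates, this causes no trouble, since $\varvtwo$ does not depend on those coordinates and the concave hull at $x_0$ is attained by convex combinations lying in $C^\prime(\mu,\alpha)$.
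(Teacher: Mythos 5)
The paper states this proposition --- and its $\epsilon$-DP and $(\epsilon,\delta)$-DP analogues \Cref{convexone} and \Cref{convextwo} --- without any written proof; there is no ``missing proofs'' section covering Appendix~\ref{sec: benefit}, so there is no authorial argument to compare yours against. Your proof is the natural one suggested by \Cref{bene-pri-three}, and both branches look correct: the concave direction is Jensen applied to the concavification, so $\hullvarvtwo(x_0\mid C^\prime(\mu,\alpha))=\varvtwo(x_0)=\varv(\mu)$; the convex-not-concave direction produces a point $p$ with a strict gap and pushes it to $x_0$ via the chord through the relative interior.

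Two technical facts are used implicitly and deserve a sentence each. First, the chord argument needs $C^\prime(\mu,\alpha)$ to be convex; this is true but not obvious, since each defining inequality is a sublevel set of $(q,\gamma)\mapsto \bigl(q(\theta)/\mu(\theta)\bigr)^\alpha\bigl(q(\theta^\prime)/\mu(\theta^\prime)\bigr)^{1-\alpha}-\gamma_{\theta,\theta^\prime}$, and one should observe that for $\alpha>1$ the map $(x,y)\mapsto x^\alpha y^{1-\alpha}$ has a positive-semidefinite Hessian on the positive orthant, so each constraint cuts out a convex region. Second, the relative-interiority of $x_0=(\mu,\bm{e^{(\alpha-1)\epsilon}})$ needs more than strictness of $1<e^{(\alpha-1)\epsilon}$: it also needs $\mu$ to have full support (and $\epsilon>0$). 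Without full support, $\mu$ is not in the relative interior of $\Delta(\Theta)$, the ratios $q(\theta)/\mu(\theta)$ defining $C^\prime(\mu,\alpha)$ are ill-defined, and the chord-through-$x_0$ step in the second branch fails. With these caveats made explicit your argument is complete, and I expect it matches what the authors had in mind.
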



When the objective function is neither concave nor convex, alternative techniques can determine whether the sender benefits from persuasion. Intuitively, the sender can gain from persuasion if they can sometimes persuade the receiver to switch from their default action to one the sender prefers instead. The receiver's default action without knowing more information from the sender is $\hat{a}(\mu)$. When the posterior is $q$, the sender's utility without persuasion can be seen as $\E_{\theta\sim q}[v(\hat{a}(\mu),\theta)]$. Also, if the sender shares the posterior with the receiver, the utility is $\varv(q)$.
Therefore, we define the condition \benefit for different privacy constraints here.

Say \benefit under \privacyone if there exists $q\in K(\mu,\epsilon)$ that
\[
\varv(q)>\E_{\theta\sim q} [v\left(\hat{a}\left(\mu\right), \theta\right)].
\]

Say \benefit under \privacytwo if there exists $(q,\gamma)\in C(\mu,\epsilon)$ that
\[
\varvtwo(q,\gamma)=\varv(q)>\E_{\theta\sim q}[ v\left(\hat{a}\left(\mu\right), \theta\right)].
\]

Say \benefit under \privacythree if there exists $(q,\gamma)\in C^\prime(\mu,\alpha)$ that
\[
\varvtwo(q,\gamma)=\varv(q)>\E_{\theta\sim q}[ v\left(\hat{a}\left(\mu\right), \theta\right)].
\]

\begin{proposition}\label{prop-not-benefit}
    If \benefit doesn't hold, the sender cannot benefit from persuasion.
\end{proposition}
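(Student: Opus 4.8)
The plan is to assume \benefit fails and show directly that then \emph{every} privacy-preserving signaling scheme yields the sender an expected utility of at most $\varv(\mu)$. Since revealing nothing (a single signal inducing posterior $\mu$) is trivially privacy-preserving and attains exactly $\varv(\mu)$, this forces the sender's optimal utility to equal $\varv(\mu)$, i.e.\ the sender does not benefit from persuasion. I will run this argument uniformly across \privacyone, \privacytwo, and \privacythree, since the only difference is the region over which the negated hypothesis is asserted.

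First I would unpack the negation of \benefit. Under \privacytwo it states that for every $(q,\gamma)\in C(\mu,\epsilon)$ we have $\varv(q)=\varvtwo(q,\gamma)\le \E_{\theta\sim q}[v(\hat a(\mu),\theta)]$; because for any $q\in\Delta(\Theta)$ one may pick $\gamma_{\theta,\theta'}=\max\{0,\,q(\theta)/\mu(\theta)-e^\epsilon q(\theta')/\mu(\theta')\}$ so that $(q,\gamma)\in C(\mu,\epsilon)$, this is equivalent to $\varv(q)\le \E_{\theta\sim q}[v(\hat a(\mu),\theta)]$ for \emph{all} $q\in\Delta(\Theta)$. The \privacythree case is identical via $C'(\mu,\alpha)$, and under \privacyone the negation gives the same inequality for all $q$ in the feasible polyhedron $K(\mu,\epsilon)$. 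In each case the inequality holds for every posterior a feasible scheme can induce (for \privacyone this is the region membership from \Cref{genechaone}; for \privacytwo and \renyi every posterior is admissible after projecting out $\gamma$). Now take an arbitrary privacy-preserving scheme and pass to its posterior formulation: it induces a distribution $\tau$ over posteriors with $\E_{q\sim\tau}[q]=\mu$, and the sender's utility is $\E_{q\sim\tau}[\varv(q)]$. Applying the inequality pointwise and then the tower rule,
\[
\E_{q\sim\tau}[\varv(q)]\ \le\ \E_{q\sim\tau}\big[\E_{\theta\sim q}[v(\hat a(\mu),\theta)]\big]\ =\ \E_{\theta\sim\mu}[v(\hat a(\mu),\theta)]\ =\ \varv(\mu),
\]
where the middle equality uses linearity of $\theta\mapsto v(\hat a(\mu),\theta)$ together with $\E_{q\sim\tau}[q]=\mu$, and the last equality uses $\hat a(\mu)=\argmax_{a}\E_{\theta\sim\mu}[u(a,\theta)]$, so that $\varv(\mu)=\E_{\theta\sim\mu}[v(\hat a(\mu),\theta)]$ by definition of $\varv$ at the prior.

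I do not expect a genuine obstacle here; this is essentially the Kamenica--Gentzkow no-benefit argument restricted to the privacy-feasible posteriors. The only points requiring care are bookkeeping: (i) ensuring the receiver's best-response ties at $\mu$ are broken consistently in $\varv(\mu)$ and in $\hat a(\mu)$, so the last equality in the display is legitimate; and (ii) verifying the "every induced posterior lies in the relevant feasible region" claim, which is immediate from the definitions of $C(\mu,\epsilon)$ and $C'(\mu,\alpha)$ and from \Cref{genechaone} for $K(\mu,\epsilon)$. Concluding, the optimal sender utility is exactly $\varv(\mu)$, so the sender cannot benefit from persuasion.
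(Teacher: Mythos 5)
Your argument is correct, and it matches the approach the paper signals for this appendix (the text explicitly says these results build on the Kamenica--Gentzkow no-benefit argument). The only non-trivial step is your observation that, for \privacytwo and \privacythree, the negation of \benefit is effectively a statement about \emph{all} $q\in\Delta(\Theta)$ because $\gamma$ can always be chosen large enough to land $(q,\gamma)$ in $C(\mu,\epsilon)$ or $C'(\mu,\alpha)$; combined with linearity of $q\mapsto \E_{\theta\sim q}[v(\hat a(\mu),\theta)]$ and Bayes-plausibility $\E_{q\sim\tau}[q]=\mu$, this gives $\E_{q\sim\tau}[\varv(q)]\le\varv(\mu)$ for every feasible scheme, as you write. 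Your flagged caveat about tie-breaking at $\mu$ is the right thing to keep an eye on but is harmless here since $\hat a(\mu)$ and the $a^\star(\mu)$ implicit in $\varv(\mu)$ refer to the same best response.
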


An additional assumption is helpful to definitively state the condition under which the sender benefits from persuasion. Beyond requiring the existence of a posterior the sender is willing to share, it is also useful to assume the receiver's optimal action does not change in a neighborhood around the prior belief.

Formally, we say the receiver's preference is discrete at $\mu$ if there exists $\epsilon>0$ such that for any $a\neq \hat{a}(\mu)$, $\E_\mu[u(\hat{a}(\mu), \theta)]>\E_\mu[u(a, \theta)+\epsilon]$. When the action space $A$ is finite, the receiver's preference is non-discrete only if they are indifferent between two actions, which occurs at finite $\mu$. Thus, this assumption is fairly mild for finite action spaces.

\begin{assumption}\label{discrete}
    The receiver's preference is discrete at the prior $\mu$.
\end{assumption}

\begin{proposition}\label{prop-benefit}
    Under \Cref{discrete}, if \benefit holds, the sender benefits from the persuasion.
\end{proposition}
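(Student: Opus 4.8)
The plan is to establish Proposition~\ref{prop-benefit} by constructing an explicit signaling scheme that strictly improves on no persuasion, leveraging both the hypothesis that \benefit holds and \Cref{discrete}. By hypothesis there is a posterior $q^\star$ (with an associated $\gamma^\star$ in the \privacytwo or \privacythree case) that is feasible for the relevant privacy region and satisfies $\varv(q^\star) > \E_{\theta\sim q^\star}[v(\hat a(\mu),\theta)]$. The idea is to mix this ``good'' posterior with a small weight into a signaling scheme whose remaining weight sits at posteriors very close to $\mu$, so that the Bayes-plausibility constraint $\E_{q\sim\tau}[q]=\mu$ is met while the privacy constraints (which are averaged, i.e. ex ante, for \privacytwo and \privacythree, and are satisfied pointwise by $q^\star$ in the \privacyone case) remain satisfied.

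Concretely, I would proceed as follows. First, fix a small $\lambda>0$ and write $\mu = \lambda q^\star + (1-\lambda)\mu_\lambda$, which defines $\mu_\lambda = (\mu - \lambda q^\star)/(1-\lambda)$; for $\lambda$ small enough $\mu_\lambda$ is still a valid distribution close to $\mu$. Consider the two-signal scheme $\tau_\lambda$ that puts weight $\lambda$ on $q^\star$ and weight $1-\lambda$ on $\mu_\lambda$. Second, check privacy feasibility: in the \privacyone case, $q^\star\in K(\mu,\epsilon)$ is needed; but note the region depends on $\mu$, so I would instead recall that $K(\mu,\epsilon)$ is defined by the likelihood-ratio bound, and either restrict $q^\star$ to also be close enough to this set or observe that the hypothesis already gives $q^\star\in K(\mu,\epsilon)$. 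In the \privacytwo/\privacythree case the constraint is an expectation, so I only need the ex ante bound to hold for $\tau_\lambda$; since $(q^\star,\gamma^\star)$ lies in the feasible region and $\mu_\lambda$ can be taken arbitrarily close to $\mu$ (for which the privacy constraint holds with slack, as releasing the prior is always feasible), a convexity/continuity argument shows $\tau_\lambda$ is feasible for small $\lambda$. Third, compute the utility of $\tau_\lambda$: it equals $\lambda\varv(q^\star) + (1-\lambda)\varv(\mu_\lambda)$. Here \Cref{discrete} enters: since the receiver's preferred action is locally constant around $\mu$, for $\lambda$ small enough $\mu_\lambda$ lies in that neighborhood, so $\varv(\mu_\lambda) = \E_{\theta\sim\mu_\lambda}[v(\hat a(\mu),\theta)]$, which is linear in the posterior. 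Consequently the utility of $\tau_\lambda$ equals $\lambda\varv(q^\star) + (1-\lambda)\E_{\theta\sim\mu_\lambda}[v(\hat a(\mu),\theta)] = \lambda\varv(q^\star) + \E_{\theta\sim\mu}[v(\hat a(\mu),\theta)] - \lambda\E_{\theta\sim q^\star}[v(\hat a(\mu),\theta)]$, using linearity of expectation and Bayes-plausibility. This exceeds the no-persuasion utility $\varv(\mu) = \E_{\theta\sim\mu}[v(\hat a(\mu),\theta)]$ precisely by $\lambda\big(\varv(q^\star) - \E_{\theta\sim q^\star}[v(\hat a(\mu),\theta)]\big) > 0$ by the \benefit hypothesis.

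The main obstacle I anticipate is handling the privacy feasibility of the auxiliary posterior $\mu_\lambda$ cleanly, especially the dependence of the feasible regions $K(\mu,\epsilon)$, $C(\mu,\epsilon)$, $C^\prime(\mu,\alpha)$ on $\mu$ itself: one must verify that a scheme concentrated near $\mu$ together with a small mass at $q^\star$ genuinely satisfies the differential-privacy inequalities. For \privacyone this is immediate from $q^\star\in K(\mu,\epsilon)$ and the fact that a posterior equal to $\mu$ trivially satisfies the likelihood-ratio bound (so does anything in its interior), but since $\mu_\lambda\neq\mu$ one needs $\mu_\lambda$ to stay inside $K(\mu,\epsilon)$ — true for small $\lambda$ because $\mu$ is in the interior of $K(\mu,\epsilon)$. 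For \privacytwo and \privacythree, the ex ante constraint on $\tau_\lambda$ is $\lambda\cdot(\text{value at }q^\star) + (1-\lambda)\cdot(\text{value at }\mu_\lambda)\le \delta$ (resp.\ $\le e^{(\alpha-1)\epsilon}$); the value at $\mu_\lambda$ tends to the value at $\mu$, which is $0$ (resp.\ $1 \le e^{(\alpha-1)\epsilon}$) since a point mass at $\mu$ is trivially private, so for small $\lambda$ the convex combination stays within the bound. I would also note that the proof is uniform across the three privacy notions — the only difference is which feasible region and which form of the privacy constraint (pointwise vs.\ ex ante) is invoked — so a single argument parameterized by the region suffices, and I would phrase it that way to avoid triplication.
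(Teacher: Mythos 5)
Your argument is correct and is the standard Kamenica--Gentzkow style ``small perturbation'' proof of a benefit-from-persuasion statement, carried over to the privacy-constrained setting; the paper itself does not actually supply a proof of \Cref{prop-benefit}, so there is nothing to compare against, but your argument would serve. The three pieces you identify are exactly the ones that need checking: (i) the Bayes-plausible decomposition $\mu=\lambda q^\star+(1-\lambda)\mu_\lambda$ is well defined for small $\lambda$ (the decomposition is nontrivial since $q^\star\ne\mu$, which follows from the \benefit inequality being strict), (ii) privacy feasibility of $\tau_\lambda$, and (iii) $\varv(\mu_\lambda)=\E_{\theta\sim\mu_\lambda}[v(\hat a(\mu),\theta)]$ for small $\lambda$, which is where \Cref{discrete} is used so that the sender's indirect utility $\varv$ agrees with the \emph{linear} function $q\mapsto \E_{\theta\sim q}[v(\hat a(\mu),\theta)]$ on a neighborhood of $\mu$, making the final telescoping calculation go through.

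On (ii) your handling is right, and it is worth stating the two cases explicitly since they use different mechanisms. For \privacyone, the constraint is ex post, so one needs both posteriors in $K(\mu,\epsilon)$: $q^\star\in K(\mu,\epsilon)$ is the hypothesis, and $\mu_\lambda\in K(\mu,\epsilon)$ for small $\lambda$ because the likelihood ratios at $\mu$ itself equal $1<e^\epsilon$, so $\mu$ lies in the relative interior of $K(\mu,\epsilon)$ and $\mu_\lambda\to\mu$. For \privacytwo and \privacythree the constraints are ex ante; there the important observation (which you make) is that the per-posterior quantity evaluated at $\mu$ is $0$ (resp.\ $1\le e^{(\alpha-1)\epsilon}$), so the expectation under $\tau_\lambda$ tends to a value strictly below the bound $\delta$ (resp.\ $e^{(\alpha-1)\epsilon}$) as $\lambda\to 0$, and any fixed finite value of the quantity at $q^\star$ is absorbed by the weight $\lambda$. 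Note that for the $(\epsilon,\delta)$ case this only works because the paper explicitly requires $\delta>0$; if $\delta=0$ one is back in the \privacyone case and must use the ex post argument instead, which is precisely how the paper organizes these two notions. In short, the proof is complete, and your closing observation that a single argument parameterized by the feasible region (pointwise for \privacyone, ex ante for the others) covers all three notions is the right way to present it.
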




\section{Missing Proofs in Section \ref{sec:single-agent}}\label{proof-sec-single}

\subsection{Proof of Proposition \ref{biprivacyonetwo}}
For the first point that two signals suffice, we delay the proof to \Cref{proof_number_signal-action}. 
Recall the \privacyone constraint in the program using the form of posteriors:
    \[
\frac{q_s(\theta)\mu(\theta^\prime)}{q_s(\theta^\prime)\mu(\theta)}\leq e^\epsilon,  \text{ for all }q_s\text{ and }(\theta,\theta^\prime)\in\adjacent.
\]
    Therefore, for every signal $s$ in the signal scheme, its posterior $q_s$ satisfies
    \begin{align*}
        q_s -q_s\mu&\leq e^\epsilon\mu(1-q_s),\\
        \mu -\mu q_s &\leq e^\epsilon(q_s-q_s\mu).
    \end{align*}
    The results can be obtained by slightly organizing the above two equations.

Recall the \privacytwo constraint in the program using the form of posteriors:
\[
\sum_{q_s\in Q}\frac{q_s(\theta)\tau(q_s)}{\mu(\theta)}\leq e^\epsilon\sum_{q_s\in Q}\frac{q_s(\theta^\prime)\tau(q_s)}{\mu(\theta)}+\delta, \text{ for all }Q\subset supp(\tau) \text{ and }(\theta,\theta^\prime)\in \adjacent.
\]
Since $q_{s_2}\geq q_{s_1}$ and $\mathbb{E}[q_s]=\mu$, we have $q_{s_2}\geq \mu\geq q_{s_1}$. Also, $\tau(q_{s_1})=(q_{s_2}-\mu)/(q_{s_2}-q_{s_1})$ and $\tau(q_{s_2})=(\mu-q_{s_1})/(q_{s_2}-q_{s_1})$.
    Therefore, we can simplify the above inequations to
    \begin{align*}
    &\left(\frac{1-q_{s_1}}{1-\mu}-e^\epsilon\frac{q_{s_1}}{\mu}\right)\frac{q_{s_2}-\mu}{q_{s_2}-q_{s_1}}\leq \delta,\\
&\left(\frac{q_{s_2}}{\mu}-e^\epsilon\frac{1-q_{s_2}}{1-\mu}\right)\frac{\mu-q_{s_1}}{q_{s_2}-q_{s_1}}\leq \delta,
\end{align*}
which implies the result by slight organization.


\subsection{Proof of Theorem \ref{thm:gap}}

    We first choose $\mu<t$ that satisfy 
    \[
    \frac{\mu}{e^{-\epsilon_1}-e^{-\epsilon_1}\mu+\mu}+\frac{(1-C)\mu(1-\mu)}{e\mu+1-\mu}\delta=t.
    \]
Since when $\mu=0$, the left side of the above equation is smaller than the right side, and when $\mu=t$, the right side is larger than the right side. Also, the left side is continuous for $\mu$ between $0$ and $t$, and the right side is constant. Therefore, there exists a $\mu\in (0,t)$ that satisfies the equation.
    
With this construction, under $(\epsilon_1,0)$-differential privacy, the sender cannot get positive utility under any signal scheme.

Then we try to design a good signal scheme under $(\epsilon_2,\delta)$-differential privacy. Let one posterior $q_{s_2}$ be $t$ and another posterior $q_{s_1}$ will be assigned later. 
Then for $(\theta=0,\theta^\prime=1)$, the constraint is
   \[
   \left(\frac{q_{s_2}}{\mu}-e^{\epsilon_2}\frac{1-q_{s_2}}{1-\mu}\right)\frac{\mu-q_{s_1}}{q_{s_2}-q_{s_1}}\leq\left(\frac{1-e^{\epsilon_2-\epsilon_1}}{e^{-\epsilon_1}(1-\mu)+\mu}+(1-C)\delta\right)\frac{\mu-q_{s_1}}{q_{s_2}-q_{s_1}}\leq \delta
   \]
   and for $(\theta=1,\theta^\prime=0)$, the constraint is
   \[
   \left(\frac{1-q_{s_1}}{1-\mu}-e^{\epsilon_2}\frac{q_{s_1}}{\mu}\right)\frac{q_{s_2}-\mu}{q_{s_2}-q_{s_1}}\leq \delta.
   \]
   We now assign $q_{s_1}$ as $\mu/(e^{\epsilon_2}(1-\mu)+\mu)$, then the second inequality is naturally satisfied. 
   For the first inequality,
   \begin{align*}
       \left(\frac{1-e^{\epsilon_2-\epsilon_1}}{e^{-\epsilon_1}(1-\mu)+\mu}+(1-C)\delta\right)\frac{\mu-q_{s_1}}{q_{s_2}-q_{s_1}}
       &\overset{(\text a)}{\leq} \frac{C\delta}{e^{-\epsilon_1}(1-\mu)+\mu}\cdot\frac{\mu-q_{s_1}}{q_{s_2}-q_{s_1}}+(1-C)\delta\\
       &\overset{(\text b)}{\leq} \frac{C\delta}{e^{-\epsilon_1}(1-\mu)+\mu}\cdot\frac{1-\frac{1}{e^{\epsilon_1}(1-\mu)+\mu}}{\frac{1}{e^{-\epsilon_1}(1-\mu)+\mu}-\frac{1}{e^{\epsilon_1}(1-\mu)+\mu}}+(1-C)\delta\\
       &= \frac{C\delta(e^{\epsilon_1}-1)}{e^{\epsilon_1}-e^{-\epsilon_1}}+(1-C)\delta\\
       &\leq \delta.
   \end{align*}

   In the above set of derivations, (a) holds since $\epsilon_1-\epsilon_2\leq C\delta$ and $e^x\geq 1+x$ for any $x$. (b) holds because $(\mu-q_{s_1})/(q_{s_2}-q_{s_1})$ increases when $q_{s_1}$ decreases and $q_{s_2}$ increases.

   Therefore, $q_{s_1}=\mu/(e^{\epsilon_2}(1-\mu)+\mu)$ and $q_{s_2}=t$ is a feasible signal scheme under $(\epsilon_2,\delta)$-differential privacy. Using this signal scheme, the sender can at least get the utility of
   \begin{align*}
       \frac{\mu-q_{s_1}}{q_{s_2}-q_{s_1}}
       &\overset{(\text a)}{\geq} \frac{(e^{\epsilon_2}-1)(e^{-\epsilon_1}(1-\mu)+\mu)}{(e^{\epsilon_2}-e^{-\epsilon_1})+\delta(e^{-\epsilon_1}(1-\mu)+\mu)(e^{\epsilon_2}(1-\mu)+\mu)}\\
       &\overset{(\text b)}{\geq} \frac{e^{\epsilon_2}-1}{(1+\delta)e^{\epsilon_1+\epsilon_2}-1}.
   \end{align*}

   (a) holds due to $0<1-C\leq 1$. (b) is obtained from $e^{-\epsilon_1}\leq e^{-\epsilon_1}(1-\mu)+\mu\leq 1$ and $e^{\epsilon_2}(1-\mu)+\mu\leq e^{\epsilon_2}$.

\subsection{Proof of Proposition \ref{gap:ratio}}
    We now choose 
    \[
    \mu=\frac{te^{-\epsilon_1}}{1+\omega-t+te^{-\epsilon_1}},
    \]
    where $\omega$ is an arbitrarily small positive number.
    Then we obtain
    \[
    \frac{(1+\omega)\mu}{e^{-\epsilon_1}-e^{-\epsilon_1}\mu+\mu}=t.
    \]
    
    Using this construction, under $(\epsilon_1,0)$-differential privacy, the sender cannot get positive utility under any signal scheme.

    Then we try to design a good signal scheme under $(\epsilon_2,\delta)$-differential privacy. Let one posterior $q_{s_2}$ be 1 and another posterior $q_{s_1}$ will be assigned later. 
Then for $(\theta=0,\theta^\prime=1)$, the constraint is
    \[
    q_{s_1}\geq\frac{\mu(1-\delta)}{(1-\delta\mu)},
    \]
    and for $(\theta=1,\theta^\prime=0)$, the constraint is
    \[
    q_{s_1}\geq \frac{\mu(1-\delta)}{(e^{\epsilon_2}-\mu e^{\epsilon_2}+\mu-\delta\mu)},
    \]
    which can both be satisfied when $q_{s_1}=\mu(1-\delta)/(1-\delta\mu)<\mu$.

    Therefore, under $(\epsilon_2,\delta)$-differential privacy, there exists a signal scheme to help the sender get positive utility, which implies the result.

\subsection{Proof of Proposition \ref{gap:noone}}
    Similar as the proof of \Cref{gap:ratio}, we set $\mu=te^{-\epsilon}/(1+\omega-t+te^{-\epsilon})$, the sender then can not get positive utility under \privacyone.

    However, without the privacy constraint, the sender can let one posterior be 0 and another be $t$, which gives the utility of \[\frac{\mu }{t}=\frac{1}{(1+\omega-t)e^{\epsilon}+t}\]. Since $\omega$ can be arbitrarily small, we then prove the result.

\subsection{Proof of Proposition \ref{gap:notwo}}
    We now choose $\mu=t/2$.

    Under $(\epsilon,\delta)$-differential privacy, the best choice of $q_{s_2}$ should lie in a region with utility 1 and at the same time have the largest weight in the expected utility, which is $t=2\mu$.


Then for $(\theta=0,\theta^\prime=1)$, the constraint is
   \[
   \left(\frac{q_{s_2}}{\mu}-e^{\epsilon}\frac{1-q_{s_2}}{1-\mu}\right)\frac{\mu-q_{s_1}}{q_{s_2}-q_{s_1}}=\left(2-e^\epsilon\frac{1-2\mu}{1-\mu}\right)\frac{\mu-q_{s_1}}{2\mu-q_{s_1}}\leq \delta
   \]
   and for $(\theta=1,\theta^\prime=0)$, the constraint is
   \[
   \left(\frac{1-q_{s_1}}{1-\mu}-e^{\epsilon}\frac{q_{s_1}}{\mu}\right)\frac{\mu}{2\mu-q_{s_1}}\leq \delta.
   \]
   
By reorganizing the above two inequalities, we obtain that the distance d between another posterior and the prior should satisfy:
      \[
    d\leq \frac{\mu(1-\mu)\delta}{\max\{(1-\mu)\delta,(2-e^\epsilon-\delta)(1-\mu)+e^\epsilon\mu\}}
    \]
    and
    \[
    d\leq \frac{(1-\mu)\mu(\delta+e^\epsilon-1)}{\mu+(e^\epsilon-\delta)(1-\mu)}.
    \]
    So the utility of the sender should be less than $\frac{d}{\mu+d}$.

    However, without the privacy constraint, the sender can let one posterior be 0 and another be $2\mu$, which gives a utility of $1/2$ and implies the proposition.

\section{Missing Proofs in Section \ref{sec-computational}}\label{proof_sec_computational}
\subsection{Proof of Lemma \ref{num-signal-action}}\label{proof_number_signal-action}
If signal $s$ and $s^\prime$ result in the same optimal action profile $a$, Sender
can instead send a new signal $s_a= (s, s^\prime)$ in both cases. Therefore, we have $\pi(s_a|\theta)=\pi(s|\theta)+\pi(s^\prime|\theta)$ for any $\theta$. Merging signals doesn't affect the value of objective function $\sum_{\theta}\sum_{s}\pi(s|\theta)\mu(\theta)v(a_s,\theta)$. We then show that merging signals also does not break the constraints that already hold.


Bringing this equation back into the above programming, privacy constraints for \privacytwo will not be violated since $\max\{0,\pi(s|\theta)-e^\epsilon \pi(s|\theta^\prime)\}+\max\{0,\pi(s^\prime|\theta)-e^\epsilon \pi(s^\prime|\theta^\prime)\}\geq\max\{0,\pi(s_a|\theta)-e^\epsilon \pi(s_s|\theta^\prime)\}$ holds for all $(\theta,\theta^\prime)\in \adjacent$.


\subsection{Proof of Lemma \ref{thm:computational}}
For generalization of the proof, we present it with the form of partial privacy, including $\pset$ to denote the set of bits whose privacy should be considered. For the original case, we only need to consider $M=\{1,2,\cdots,n\}$ and replace $\adjacent_\pset$ with $\adjacent$, $\phi_\pset$ with $\phi$ and $\adjacent_\pset^O$ with $\adjacent^O$. 

We first notice that any feasible oblivious solution for Program \eqref{program-many-oblivious} is also a feasible solution for Program \eqref{program-many}.
\begin{lemma}\label{lemma-feasible}
   Under \Cref{many-prior-assump} and \Cref{many-utility-assump}, for all kinds of privacy constraints, if an oblivious signaling scheme satisfies Program \eqref{program-many-oblivious}, we can construct the following signaling scheme:
\[\pi^\prime(T|\theta) = \pi(T|\phi_\pset(\theta)) \]
which satisfies the original program.
\end{lemma}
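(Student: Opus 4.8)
The plan is to verify that the oblivious reconstruction $\pi'(T\mid\theta)=\pi(T\mid\phi_\pset(\theta))$ indeed satisfies every line of Program~\eqref{program-many}, one constraint family at a time. First I would observe that the objective value and the receiver incentive constraints are unchanged in aggregate: by \Cref{many-prior-assump} and \Cref{many-utility-assump}, for a fixed projection value $\omega$ all states $\theta$ with $\phi_\pset(\theta)=\omega$ share the same prior mass $\mu(\theta)$ and the same utility values $u_i(\cdot,\theta)$, $v(\cdot,\theta)$. Hence $\sum_{\theta:\phi_\pset(\theta)=\omega}\mu(\theta)=\mu(\omega)$ (the induced marginal on projections), and each sum over $\theta$ in Program~\eqref{program-many} decomposes as a sum over $\omega$ of $\mu(\omega)$ times the corresponding quantity, exactly matching the left-hand sides of Program~\eqref{program-many-oblivious}. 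So the feasibility of the two receiver-incentive lines and the equality of the objectives both follow by this grouping argument. The normalization $\sum_T\pi'(T\mid\theta)=1$ and nonnegativity are immediate since $\pi(\cdot\mid\omega)$ is a distribution.

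The substantive step is the privacy constraints. Here I would take an adjacent pair $(\theta,\theta')\in\adjacent_\pset$, so they differ in exactly one protected bit $i\in\pset$. The key combinatorial fact is that flipping one protected bit changes the count of $1$'s among protected bits by exactly one, so $\phi_\pset(\theta)$ and $\phi_\pset(\theta')$ are adjacent in the projected sense, i.e.\ $(\phi_\pset(\theta),\phi_\pset(\theta'))\in\adjacent_\pset^O$. Then
\[
\sum_{T:i\in T}\pi'(T\mid\theta)=\sum_{T:i\in T}\pi(T\mid\phi_\pset(\theta))\le e^\epsilon\sum_{T:i\in T}\pi(T\mid\phi_\pset(\theta'))+\delta=e^\epsilon\sum_{T:i\in T}\pi'(T\mid\theta')+\delta,
\]
where the middle inequality is exactly the projected privacy constraint of Program~\eqref{program-many-oblivious} for that receiver $i$ and that projected adjacent pair; the same argument handles the $i\notin T$ line. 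This establishes \Cref{lemma-feasible}.

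To finish the proof of \Cref{thm:computational}, I would combine \Cref{lemma-feasible} with the reverse direction: given any feasible $\pi$ for Program~\eqref{program-many}, symmetrize it over the permutation group acting on coordinates within each projection class to produce an oblivious scheme with at least as good an objective value. Concretely, average $\pi(T\mid\theta)$ appropriately over all $\theta$ with a fixed $\phi_\pset$ (and correspondingly over the relabelings of $T$), using convexity/linearity of the program: the objective and all constraints of Program~\eqref{program-many} are linear in $\pi$, and \Cref{many-prior-assump}, \Cref{many-utility-assump} guarantee the symmetrized scheme has the same objective and remains feasible (privacy constraints are preserved under averaging because the relevant sums only depend on the projection). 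The symmetrized scheme is oblivious, so it corresponds to a feasible point of Program~\eqref{program-many-oblivious} with the same value. Together with \Cref{lemma-feasible} this gives that the optimal values of the two programs coincide and that an optimal oblivious scheme exists.

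I expect the privacy-constraint step to be the main obstacle, specifically the bookkeeping of ensuring that flipping a single protected bit maps to a genuinely adjacent pair in the projected space and that the sums $\sum_{T:i\in T}$ and $\sum_{T:i\notin T}$ transfer cleanly — in the partial-privacy setting one must be careful that the projection $\phi_\pset$ records protected and unprotected counts separately, so that a flip of a protected bit does not disturb the unprotected component and the projected adjacency $\adjacent_\pset^O$ is the right notion. The symmetrization direction also requires care that averaging over permutations does not violate the receiver incentive constraints, but this again reduces to linearity plus the homogeneity assumptions.
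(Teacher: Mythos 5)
Your argument for \Cref{lemma-feasible} is correct and follows the same route as the paper: group the state sums by projection value via \Cref{many-prior-assump} and \Cref{many-utility-assump} to check the receiver-incentive constraints, observe normalization and nonnegativity are immediate, and note that any $(\theta,\theta')\in\adjacent_\pset$ projects to an adjacent pair in $\adjacent_\pset^O$ so the oblivious privacy constraint transfers verbatim. (Your extra sketch of the converse direction goes beyond the stated lemma; there the paper uses the $\mu$-weighted projection $\pi'(T\mid\omega)=\tfrac{1}{\mu(\omega)}\sum_{\theta:\phi_\pset(\theta)=\omega}\pi(T\mid\theta)\mu(\theta)$ with no relabeling of $T$, which is the more delicate privacy bookkeeping.)
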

\begin{proof}
  We first show that the new signaling scheme satisfies all basic constraints in the program. For all $\theta$, from assumptions, we have
  \begin{align*}
  \sum_{\theta} \left(\mu(\theta)(u_i(1,\theta)-u_i(0,\theta))\sum_{T:i\in T}\pi^\prime(T|\theta)\right)
  &=\sum_{\omega} \left(\sum_{\theta:\phi_\pset(\theta)=\omega}\mu(\theta)(u_i(1,\omega)-u_i(0,\omega))\sum_{T:i\in T}\pi(T|\omega)\right)\\
  &=\sum_{\omega} \left(\mu(\omega)(u_i(1,\omega)-u_i(0,\omega))\sum_{T:i\in T}\pi(T|\omega)\right)\\
  &\geq 0 ,\text{ for all }i\in\{1,\cdots,t\}.
   \end{align*}
The other side is similar.
$\sum_{T}\pi^\prime(T|\theta)=1$ and $\pi^\prime(T|\theta)\geq 0$ holds naturally.

Also, for $(\theta,\theta^\prime)\in\adjacent_\pset$, $(\phi_\pset(\theta),\phi_\pset(\theta^\prime))\in\adjacent_\pset^O$. Therefore, $\pi^\prime(T|\theta)$ satisfies the privacy constraints in the original program, which implies the lemma.
\end{proof}


Let $F_0$ be the set of feasible signaling schemes for Program \eqref{program-many}. An oblivious scheme can be constructed by projecting any signaling scheme in $F_0$. For signaling scheme in $F_0$, we can build an oblivious signaling scheme through projection:
\[
\pi^\prime(T|\omega) = \frac{1}{\mu(\omega)}\sum_{\theta:\phi_\pset(\theta)=\omega} \pi(T|\theta)\mu(\theta).
\] 
Let $PF_0$ be the set of projections of feasible signaling schemes to oblivious schemes. 

Also, let $F_1$ be the set of feasible oblivious signaling schemes for Program \eqref{program-many-oblivious}. 
Note that in \Cref{lemma-feasible}, the feasible oblivious signaling scheme can be seen as the projection of the constructed signaling scheme, then it shows that $F_1 \subseteq PF_0$. We then show that $PF_0 \subseteq F_1$ as well.

\begin{lemma}
    Under \Cref{many-prior-assump} and \Cref{many-utility-assump}, any projection of a feasible signaling scheme is also a feasible oblivious signaling scheme.
\end{lemma}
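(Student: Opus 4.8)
The plan is to show the reverse inclusion $PF_0\subseteq F_1$, i.e., that the projection $\pi'(T\mid\omega)=\frac{1}{\mu(\omega)}\sum_{\theta:\phi_\pset(\theta)=\omega}\pi(T\mid\theta)\mu(\theta)$ of any $\pi\in F_0$ satisfies all constraints of Program~\eqref{program-many-oblivious}. First I would verify the easy structural constraints: $\sum_T\pi'(T\mid\omega)=\frac{1}{\mu(\omega)}\sum_{\theta:\phi_\pset(\theta)=\omega}\mu(\theta)\sum_T\pi(T\mid\theta)=\frac{1}{\mu(\omega)}\sum_{\theta:\phi_\pset(\theta)=\omega}\mu(\theta)=1$ using $\mu(\omega)\coloneqq\sum_{\theta:\phi_\pset(\theta)=\omega}\mu(\theta)$, and $\pi'(T\mid\omega)\ge 0$ is immediate since it is a nonnegative combination.

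Next I would handle the incentive (obedience) constraints. Using Assumption~\ref{many-utility-assump}, $u_i(1,\theta)-u_i(0,\theta)$ depends on $\theta$ only through $\phi_\pset(\theta)=\omega$, so write it as $u_i(1,\omega)-u_i(0,\omega)$. Then
\[
\sum_\omega \mu(\omega)(u_i(1,\omega)-u_i(0,\omega))\sum_{T:i\in T}\pi'(T\mid\omega)
=\sum_\omega(u_i(1,\omega)-u_i(0,\omega))\sum_{\theta:\phi_\pset(\theta)=\omega}\mu(\theta)\sum_{T:i\in T}\pi(T\mid\theta),
\]
which regroups exactly as $\sum_\theta\mu(\theta)(u_i(1,\theta)-u_i(0,\theta))\sum_{T:i\in T}\pi(T\mid\theta)\ge 0$ since $\pi\in F_0$. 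The symmetric constraint with $i\notin T$ and $u_i(0,\omega)-u_i(1,\omega)$ is identical. The objective value is preserved by the same regrouping together with Assumption~\ref{many-utility-assump} on $v$.

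The main obstacle is the privacy constraints: I must show that if $\pi$ satisfies $\sum_{T:i\in T}\pi(T\mid\theta)\le e^\epsilon\sum_{T:i\in T}\pi(T\mid\theta^\prime)+\delta$ for every $(\theta,\theta^\prime)\in\adjacent_\pset$, then the projection satisfies the analogous inequality for every $(\omega,\omega^\prime)\in\adjacent_\pset^O$. Fix adjacent $\omega,\omega^\prime$ with $\omega^\prime$ having one more $1$ among the sensitive bits (the other case is symmetric). The key combinatorial fact is that there is a bijection $\psi$ between $\{\theta:\phi_\pset(\theta)=\omega\}$ and $\{\theta':\phi_\pset(\theta')=\omega'\}$ sending each $\theta$ to an adjacent $\theta'=\psi(\theta)$ (flip one sensitive $0$-bit of $\theta$ to $1$; more carefully, one can use a perfect matching argument or an averaging over all such flips to get a doubly-stochastic coupling, since both sets have the same cardinality up to the weighting). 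Using Assumption~\ref{many-prior-assump}, $\mu(\theta)=\mu(\theta')$ for $\theta$ in the first fiber and $\theta'$ in the second, and $\mu(\omega)=\binom{|\pset|}{\phi_\pset(\omega)_\pset}\binom{n-|\pset|}{\cdot}\mu(\theta)$-type counting makes the weights consistent. Then
\[
\sum_{T:i\in T}\pi'(T\mid\omega)
=\frac{1}{\mu(\omega)}\sum_{\theta:\phi_\pset(\theta)=\omega}\mu(\theta)\sum_{T:i\in T}\pi(T\mid\theta)
\le \frac{1}{\mu(\omega)}\sum_{\theta}\mu(\theta)\Bigl(e^\epsilon\sum_{T:i\in T}\pi(T\mid\psi(\theta))+\delta\Bigr),
\]
and since $\psi$ is a bijection matching the weights onto the $\omega'$-fiber (and $\mu(\omega)=\mu(\omega')$ under Assumption~\ref{many-prior-assump}, as both equal the fiber size times the common $\mu(\theta)$), the right side equals $e^\epsilon\sum_{T:i\in T}\pi'(T\mid\omega')+\delta$. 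I would state the counting/bijection lemma carefully as the crux, check that $\mu(\omega)=\mu(\omega')$ indeed holds so the normalization cancels, and note the $i\notin T$ and \renyi/pure-DP variants follow by the same coupling (for \renyi the relevant convexity of the divergence functional under averaging does the job). Combining $F_1\subseteq PF_0$ from \Cref{lemma-feasible} with $PF_0\subseteq F_1$ and the objective-preservation shows the two programs have the same optimal value and that an oblivious optimum exists.
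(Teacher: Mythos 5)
Your checks of normalization, nonnegativity, the obedience constraints, and the objective are all fine and match the paper's regrouping. The gap is in the privacy step, and it is not just a missing detail: the bijection you invoke does not exist. The fiber $\{\theta:\phi_\pset(\theta)=\omega\}$ has size $\binom{m}{\omega_1}\binom{n-m}{\omega_2}$ while the fiber for $\omega'=(\omega_1+1,\omega_2)$ has size $\binom{m}{\omega_1+1}\binom{n-m}{\omega_2}$, and these are unequal in general, so there is no adjacency-respecting bijection between them. Relatedly, the claim ``$\mu(\omega)=\mu(\omega')$ under Assumption~\ref{many-prior-assump}'' is false: that assumption only forces $\mu$ to be constant \emph{within} each fiber, and since $\mu(\omega)$ equals the fiber size times the common point mass (and both the sizes and the point masses can differ across fibers), $\mu(\omega)\ne\mu(\omega')$ in general. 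With the bijection and the equality $\mu(\omega)=\mu(\omega')$ both gone, the displayed chain of inequalities no longer closes: the right-hand side does not equal $e^\epsilon\sum_{T:i\in T}\pi'(T\mid\omega')+\delta$.

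The fix is the averaging you gesture at but do not execute. Instead of mapping $\theta$ to a single $\psi(\theta)$, sum over \emph{all} pairs $(\theta,i)$ with $\phi_\pset(\theta)=\omega$ and $i\in\pset$, $\theta_i=0$; there are $\binom{m}{\omega_1}\binom{n-m}{\omega_2}(m-\omega_1)$ such pairs, and each state in the $\omega'$-fiber arises from exactly $\omega_1+1$ of them. The binomial identity $\binom{m}{\omega_1}(m-\omega_1)=\binom{m}{\omega_1+1}(\omega_1+1)$ is precisely what makes both $\pi'(T\mid\omega)$ and $\pi'(T\mid\omega')$ uniform averages over this common index set, after which applying the per-pair DP inequality $\sum_{T:i\in T}\pi(T\mid\theta)\le e^\epsilon\sum_{T:i\in T}\pi(T\mid(1,\theta_{-i}))+\delta$ and averaging (a convex combination, so the $+\delta$ term passes through cleanly) gives the claim. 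This is exactly the paper's argument via equations~\eqref{equ1} and~\eqref{equ2}; you need to make that double-counting explicit rather than appeal to a bijection or to an equality of $\mu(\omega)$ and $\mu(\omega')$ that does not hold.
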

\begin{proof}
    Consider any signaling scheme in $F_0$, the projection is $\pi^\prime(T|\omega) = \frac{1}{\mu(\omega)}\sum_{\theta:\phi_\pset(\theta)=\omega} \pi(T|\theta)\mu(\theta)$.

    We now show that $\pi^\prime(T|\omega)$ is a feasible oblivious signaling scheme.
    First, $\sum_T \pi^\prime(T|\omega)=1$ holds for all $\omega$. $\pi^\prime(T|\omega)\geq 0$ holds for all $\omega$ and $T$. 
    Also,
    \begin{align*}
    \sum_{\omega} \left(\mu(\omega)(u_i(1,\omega)-u_i(0,\omega))\sum_{T:i\in T}\pi^\prime(T|\omega)\right)
    &=\sum_{\theta} \left(\mu(\theta)(u_i(1,\theta)-u_i(0,\theta))\sum_{T:i\in T}\pi(T|\theta)\right)\\
    &\geq 0,\text{ for all }i\in\{1,\cdots,t\}.
    \end{align*}
    The other side is similar.
    We then show that it still holds the privacy constraint. Use $m$ to denote the size of $\pset$, $\omega_1$ to denote the first number in $\omega$ and $\omega_2$ to denote the second number. Notice that for any $\omega=(\omega_1,\omega_2)$ that $\omega_1\in\{0,1,\cdots,m-1\}$, considering its adjacent state $\omega^\prime=(\omega_1+1,\omega_2)$, 
    \begin{equation}
    \pi^\prime(T|\omega)=\frac{1}{\binom{m}{\omega_1}\binom{n-m}{\omega_2}(m-\omega_1)}\sum_{\theta:\phi_\pset(\theta)=\omega}\sum_{i:\theta_i=0,i\in\pset}\pi(T|\theta)\label{equ1}
    \end{equation}
    and
    \begin{align}
    &\pi^\prime(T|\omega^\prime)=\frac{1}{\binom{m}{\omega_1+1}\binom{n-m}{\omega_2}(\omega_1+1)}\sum_{\theta:\phi_\pset(\theta)=\omega}\sum_{i:\theta_i=0,i\in\pset}\pi(T|(1,\theta_{-i}))\notag\\
    &=\frac{1}{\binom{m}{\omega_1}\binom{n-m}{\omega_2}(m-\omega_1)}\sum_{\theta:\phi_\pset(\theta)=\omega}\sum_{i:\theta_i=0,i\in\pset}\pi(T|(1,\theta_{-i})).\label{equ2}
    \end{align}
    

    For \privacytwo, since for any $\omega=(\omega_1,\omega_2)$ that $\omega_1\in\{0,1,\cdots,m-1\}$, considering its adjacent state $\omega^\prime=(\omega_1+1,\omega_2)$, for all $\theta$ that $\phi_\pset(\theta)=\omega$ and all $i$ that $\theta_i=0$,
    \[
    \sum_{T:i\in T} \left(\pi(T|\theta)-e^\epsilon \pi(T|(1,\theta_{-i}))\right)\leq \delta,
    \]
    we then have
    \begin{align*}
    \sum_{T:i\in T}\left(\pi^\prime(T|\omega)-e^\epsilon \pi(T|\omega^\prime)\right)
    &=\sum_{T:i\in T}\Big(\frac{1}{\binom{m}{\omega_1}\binom{n-m}{\omega_2}(m-\omega_1)}\sum_{\theta:\phi_\pset(\theta)=\omega}\sum_{i:\theta_i=0,i\in\pset}\\&(\pi(T|\theta)-e^\epsilon\pi(T|(1,\theta_{-i})))\Big)\\
    &\leq\frac{1}{\binom{m}{\omega_1}\binom{n-m}{\omega_2}(m-\omega_1)}\sum_{\theta:\phi_\pset(\theta)=\omega}\sum_{i:\theta_i=0,i\in\pset}\sum_{T:i\in T}\\
    &\left(\pi(T|\theta)-e^\epsilon\pi(T|(1,\theta_{-i}))\right)\\
    &\leq \delta.
    \end{align*}
    Similarly, the other side holds.

\end{proof}

\begin{lemma}
    Under \Cref{many-prior-assump} and \Cref{many-utility-assump}, there exists an optimal signaling scheme in $PF_0$ for Program \eqref{program-many}.
\end{lemma}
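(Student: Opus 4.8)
The goal is to show that Program~\eqref{program-many} has an optimal solution that lies in $PF_0$, the set of projections of feasible signaling schemes to oblivious schemes. By the two preceding lemmas we already know $F_1 = PF_0$, i.e.\ projecting any feasible scheme yields a feasible oblivious scheme, and conversely every feasible oblivious scheme arises this way (and lifts back to a feasible scheme for Program~\eqref{program-many}). So the only thing left is to argue that projection does not decrease the sender's objective, and hence some optimum survives the projection operation.

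The plan is as follows. First I would take an arbitrary optimal signaling scheme $\pi$ for Program~\eqref{program-many} (which exists, e.g.\ by compactness of the feasible polytope and linearity of the objective, noting feasibility is guaranteed by releasing nothing). Let $\pi'$ be its projection, $\pi'(T\mid\omega) = \frac{1}{\mu(\omega)}\sum_{\theta:\phi(\theta)=\omega}\pi(T\mid\theta)\mu(\theta)$, where $\mu(\omega) \coloneqq \sum_{\theta:\phi(\theta)=\omega}\mu(\theta)$. By the lemma immediately above, $\pi' \in F_1 \subseteq PF_0$, so $\pi'$ is feasible for Program~\eqref{program-many-oblivious}. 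Second, I would compute the objective value of $\pi'$ in Program~\eqref{program-many-oblivious} and show it equals the objective value of $\pi$ in Program~\eqref{program-many}. Using \Cref{many-utility-assump} (so $v(T,\theta) = v(T,\omega)$ whenever $\phi(\theta)=\omega$), we get
\[
\sum_\omega \sum_T \mu(\omega) v(T,\omega)\pi'(T\mid\omega)
= \sum_\omega \sum_T v(T,\omega)\sum_{\theta:\phi(\theta)=\omega}\pi(T\mid\theta)\mu(\theta)
= \sum_\theta \sum_T \mu(\theta) v(T,\theta)\pi(T\mid\theta),
\]
so the two objective values coincide. Third, I would lift $\pi'$ back via \Cref{lemma-feasible} to $\pi''(T\mid\theta) = \pi'(T\mid\phi(\theta))$, which is feasible for Program~\eqref{program-many} and, by the same assumption, attains the same objective value as $\pi'$ does in Program~\eqref{program-many-oblivious}, hence the same value as the original optimum $\pi$. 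Therefore $\pi''$ is an optimal solution of Program~\eqref{program-many} that is oblivious (it is the lift of a point of $PF_0$, equivalently it corresponds to the element $\pi' \in PF_0$), which is exactly the claim. Combined with \Cref{lemma-feasible} and the previous lemma, this also yields the equivalence of the optimal values of the two programs asserted in \Cref{thm:computational}, and the existence of an optimal oblivious privacy-preserving scheme.

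The main (and really only) subtlety is bookkeeping: making sure the projection/lift pair is inverse-compatible on objective values, i.e.\ that no probability mass is lost or double-counted in passing between the $\theta$-indexed and $\omega$-indexed formulations, and that the normalization by $\mu(\omega)$ is handled consistently. This is where \Cref{many-prior-assump} and \Cref{many-utility-assump} do the work: the prior-homogeneity assumption ensures $\mu(\omega) = \binom{n}{\omega}\mu(\theta)$ is well defined for any representative $\theta$ with $\phi(\theta)=\omega$, and the utility-homogeneity assumption ensures $v$ and $u$ descend to functions of $\omega$. I do not expect any genuine obstacle beyond carefully chaining the three already-established lemmas together with the objective-preservation computation above.
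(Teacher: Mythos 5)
Your proposal is correct and follows essentially the same route as the paper: take an optimal scheme, project/average it over each $\phi$-fiber, verify the objective is preserved using \Cref{many-utility-assump}, and invoke the preceding lemmas for feasibility. The paper writes the projected scheme directly as a $\theta$-indexed map (i.e., the lift $\pi''$ in your notation) and states the objective identity without the intermediate $\omega$-sum, but the content is identical.
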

\begin{proof}
    Considering the optimal signaling scheme in Program \eqref{program-many}, we now project it and obtain the scheme that 
    \[
    \pi^\prime(T|\theta) = \frac{1}{\sum_{\theta^\prime:\phi_\pset(\theta^\prime)=\phi_\pset(\theta)}\mu(\theta^\prime)}\sum_{\theta^\prime:\phi_\pset(\theta^\prime)=\phi_\pset(\theta)} \pi(T|\theta^\prime)\mu(\theta^\prime).
    \]
    We then obtain
    \[
    \sum_{\theta\in \Theta}\sum_T\pi^\prime(T|\theta)\mu(\theta)v(T,\theta)=\sum_{\theta\in \Theta}\sum_T\pi(T|\theta)\mu(\theta)v(T,\theta),\]
    which means the signaling scheme achieves the same sender's utility as the original optimal one.
\end{proof}

From all the lemmas above, we prove that Program \eqref{program-many} and Program \eqref{program-many-oblivious} are equivalent.
\end{document}